\title[Topology is irrelevant]{Topology is irrelevant\\ (in a dichotomy conjecture for infinite domain constraint satisfaction problems)}
\author[L.~Barto]
{Libor Barto}
	\address{Department of Algebra, MFF UK, Sokolovska 83, 186 00 Praha 8, Czech Republic}
	\email{libor.barto@gmail.com}
	\urladdr{http://www.karlin.mff.cuni.cz/~barto/}
\author[M.~Pinsker]
{Michael Pinsker}
	\address{Institut f\"{u}r Diskrete Mathematik und Geometrie, FG Algebra, TU Wien, Austria, and Department of Algebra, Charles University, Czech Republic}    
	\email{marula@gmx.at}
    \urladdr{http://dmg.tuwien.ac.at/pinsker/}
\thanks{
Libor Barto has received funding from the European Research Council
(ERC) under the European Unions Horizon 2020 research and
innovation programme (grant agreement No 771005),
and from the Czech Science Foundation (grant No 13-01832S). Michael Pinsker has received funding from the  Austrian Science Fund (FWF) through  project No P27600, and from the Czech Science Foundation (grant No 18-20123S)}
\theoremstyle{plain}
    \newtheorem{theorem}{Theorem}[section]
    \newtheorem{lemma}[theorem]{Lemma}
    \newtheorem{proposition}[theorem]{Proposition}
    \newtheorem{conjecture}[theorem]{Conjecture}
    \newtheorem{corollary}[theorem]{Corollary}
    \newtheorem{definition}[theorem]{Definition}
    \newtheorem{example}[theorem]{Example}
 \newcommand{\co}[1]{{#1}}
 \newcommand{\lib}[1]{{#1}}
 \newcommand{\ignore}[1]{}
 \newcommand\blfootnote[1]{%
  \begingroup
  \renewcommand\thefootnote{}\footnote{#1}%
  \addtocounter{footnote}{-1}%
  \endgroup
}
\DeclareMathOperator{\CSP}{CSP}
\DeclareMathOperator{\Pol}{Pol}
\newcommand{\tuple}[1]{\mathbf{#1}}
\newcommand{\clone}[1]{\mathcal{#1}}
\newcommand{\relstr}[1]{\mathbb{#1}}
\newcommand{\NP}{\mathrm{NP}}
\newcommand{\trivclone}{\clone{P}}
\newcommand{\group}[1]{\mathcal{#1}}
\newcommand{\clgr}[1]{\mathrm{Gr}(#1)}
\newcommand{\forb}[1]{\mathscr{#1}}
\newcommand{\To}{\rightarrow}
\tikzstyle{myedge}=[-,semithick]
\tikzstyle{krouzek}=[rounded corners=4, draw=black, fill=white!20]
\tikzstyle{ikrouzek}=[rounded corners=4, draw=black, fill=white!20, ultra thick]
\newcommand{\MyFig}[1]{\begin{figure}[ht]#1\end{figure}}
\begin{document}

\maketitle

\begin{abstract}
%
%


%
%
The tractability conjecture for finite domain Constraint Satisfaction Problems (CSPs) stated that such CSPs are  solvable in polynomial time whenever there is no natural  reduction, in some precise technical sense, from  the 3-SAT problem; otherwise, they are NP-complete. Its recent resolution draws on an algebraic characterization of the conjectured borderline: the CSP of a finite  structure permits no natural reduction from  3-SAT if and only if the  stabilizer of the polymorphism clone of the core of the structure  satisfies some  nontrivial system of identities, and such satisfaction is always witnessed by several specific nontrivial systems of identities which do not depend on the structure. 

The tractability conjecture has been generalized in the above formulation to a certain class of infinite domain CSPs, namely, CSPs of  reducts of finitely bounded homogeneous structures. It was subsequently shown that the conjectured borderline between hardness and tractability, i.e., a natural reduction from 3-SAT, can be characterized for this class by a combination of algebraic and topological properties.  However, it was not known whether the topological component is essential in this characterization. 

We provide a negative answer to this question by proving that the borderline is characterized by one specific algebraic identity, namely the pseudo-Siggers identity $\alpha s(x,y,x,z,y,z) \approx \beta s(y,x,z,x,z,y)$. This accomplishes one of the steps of a proposed strategy for reducing the infinite domain CSP dichotomy conjecture to the finite case.
Our main theorem is also of independent mathematical interest, characterizing a topological property of any $\omega$-categorical core structure (the existence of a continuous homomorphism of a stabilizer of its polymorphism clone to the projections) in purely algebraic terms (the failure of an identity as above).
\end{abstract}




\section{Introduction and Main Results}

\blfootnote{Part of this work appeared in the Proceedings of the 31th Annual IEEE Symposium on Logic in Computer Science  (LICS'16) -- see~\cite{BartoPinsker}.}

The Constraint Satisfaction Problem (CSP) over a relational structure $\relstr{A}$ in a finite language, denoted by $\CSP(\relstr{A})$, is the problem of deciding whether or not a given \emph{primitive positive (pp-)} sentence in the language of $\relstr{A}$  holds in $\relstr{A}$; a sentence is called primitive positive if it is an existentially  quantified conjunction of atomic formulae.
An alternative, combinatorial definition of CSPs is also popular in the literature: $\CSP(\relstr{A})$ is the problem of deciding whether a given finite  relational structure in the same language as $\relstr{A}$ maps homomorphically into $\relstr{A}$.

As an example, consider the structure
\[
\mathbb{K}_3 = (\{1,2,3\}; R), \mbox{ where } R = \{(a,b) \in \{1,2,3\}^2 : a \neq b\}.
\]
A (yes) instance of $\CSP(\mathbb{K}_3)$ is, e.g.,
\[
\exists x_1,x_2,x_3,x_4 \ R(x_1,x_2) \wedge R(x_2,x_3) \wedge R(x_1,x_3) \wedge R(x_2,x_4) \wedge R(x_3,x_4)
\]
which, in the homomorphism viewpoint, corresponds to the input struture
\[
(\{x_1,x_2,x_3,x_4\}; \{(x_1,x_2), (x_2,x_3), (x_1,x_3), (x_2,x_4), (x_3,x_4)\}).
\]
It is readily seen  that $\CSP(\relstr{K}_3)$ is the graph 3-coloring problem.

For CSPs over certain structures, including all finite ones, a computational complexity classification has been conjectured, separating NP-hard problems from polynomial-time solvable ones. In the following, we shall state and discuss this conjecture, and subsequently present an improvement thereof which follows from our results. 

In order to keep the presentation compact, we postpone most definitions to Section~\ref{sect:notation}, and refer also to the survey~\cite{BKW-CSP-Survey} as well as to the shorter~\cite{Barto-survey} for the basics of the finite domain CSP and to the monograph~\cite{Bodirsky-HDR} and the shorter~\cite{Pin15} for the infinite. As a reference for standard notions from model theory and universal algebra, we point to the textbooks~\cite{Hodges,Bergman}.

All structures in the present article are implicitly assumed to be finite or \co{countably infinite}.

\subsection{CSPs over finite structures}
The CSP over a structure with finite domain is clearly contained in the class $\NP$. Some well-known $\NP$-complete problems can be formulated as CSPs over suitable finite structures, such as the 3-coloring problem mentioned above, or 3-SAT, which is the CSP over
\[
\mathbb{L} = (\{0,1\}; R_{000}, R_{001}, R_{011}, R_{111}), \ \mbox{ where } R_{abc} = \{0,1\}^3 \setminus \{(a,b,c)\},
\]
or its variants, e.g., the CSP over
\[
\mathbb{M} = (\{0,1\}; \{(0,0,1),(0,1,0),(1,0,0)\}),
\]
which is the positive 1-in-3-SAT problem. The class of finite domain CSPs also contains interesting problems solvable in polynomial time, such as 2-SAT, HORN-SAT, or systems of linear equations over finite fields. 

It was  conjectured in~\cite{FederVardi} that CSPs over finite structures enjoy a dichotomy in the sense that every such CSP is either $\NP$-complete, or tractable, i.e., solvable in polynomial time. A large amount of attention was brought to confirming or refuting this conjecture; 
 a precise borderline between $\NP$-complete and tractable CSPs was delineated~\cite{JBK} and thenceforth referred to as the \emph{tractability conjecture} or also the \emph{algebraic dichotomy conjecture}, since most of the equivalent formulations are algebraic.
In a recent turn of events, the tractability conjecture was confirmed independently by Andrei Bulatov~\cite{Bulatov-Dichotomy} and Dmitriy Zhuk~\cite{Zhuk-Dichotomy}.

To state it, we first recall several basic facts.
First, every finite structure is homomorphically equivalent to a finite \emph{core}, that is, a structure whose every endomorphism is an automorphism. Since this core is unique up to isomorphism, we refer to it as \emph{the} core of a structure. Moreover,  the CSPs over any two structures which are homomorphically equivalent are equal, and so passing from a finite structure to its core does not result in any loss of information concerning the CSP. 

Second, certain relational constructions on structures yield  computational reductions between their CSPs. One simple such   construction is that of a pp-definition -- if  $\relstr{A}$ and $\relstr{B}$ have the same domain and each relation in $\relstr{B}$ has a \emph{pp-definition without parameters} in $\relstr{A}$ (i.e., a definition by a primitive positive formula over $\relstr{A}$, without additional constant symbols for the elements of the domain of $\relstr{A}$), then the CSP over $\relstr{B}$ has a logspace reduction to the CSP over $\relstr{A}$. This is the case, e.g., for the structures $\relstr{A} := \relstr{M}$ and $\relstr{B} := \relstr{L}$ above, and hence we obtain a reduction of 3-SAT to  positive 1-in-3-SAT. A more  general construction is that of a  \emph{pp-interpretation}, which is, roughly, a higher-dimensional generalization of a pp-definition, and which still allows for a computational reduction:  $\CSP(\relstr{B})$ has a logspace reduction to $\CSP(\relstr{A})$ whenever $\relstr{B}$ has a pp-interpretation without parameters in $\relstr{A}$. Moreover, when $\relstr{A}$ is a core, then this statement is even true for pp-interpretations \emph{with parameters}, i.e., one in which the formulae involved may mention elements of the domain of $\relstr{A}$.

These facts imply that the CSP over a finite structure is $\NP$-hard whenever its core pp-interprets with parameters some structure whose CSP is NP-hard, such as the structures $\relstr{K}_3$, $\relstr{L}$, or $\relstr{M}$.
In fact, these three structures 
not only pp-interpret each other, they pp-interpret all finite structures.
The finite domain tractability conjecture postulated that pp-interpreting all finite structures with parameters in the core is the only source of hardness for finite domain CSPs.

\begin{conjecture}[\cite{JBK}, proved in \cite{Bulatov-Dichotomy,Zhuk-Dichotomy}]\label{conj:dicho_fin}
Let $\relstr{B}$ be a finite structure and let $\relstr{A}$ be the core of $\relstr{B}$. Then
\begin{itemize}
\item $\relstr{A}$ pp-interprets all finite structures with parameters (and thus $\CSP(\relstr{B})$ is NP-complete), or
\item $\CSP(\relstr{B})$ is solvable in polynomial time.
\end{itemize}
\end{conjecture}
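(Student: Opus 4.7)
The conjecture splits via the algebraic approach into two complementary halves. After replacing $\relstr{B}$ with its core $\relstr{A}$ (which preserves $\CSP(\relstr{B})$ up to polynomial-time equivalence), the plan is to analyse the polymorphism clone $\Pol(\relstr{A})$ together with constants naming every element of $\relstr{A}$, i.e., the stabilizer of $\Pol(\relstr{A})$ under $\Aut(\relstr{A})$. The goal is to prove the equivalence of three conditions on this stabilizer: it admits a clone homomorphism to the projections; $\relstr{A}$ pp-interprets all finite structures with parameters; and $\Pol(\relstr{A})$ with constants satisfies no nontrivial system of identities. Under the first (equivalently, any) of these conditions, NP-hardness follows immediately since $\relstr{A}$ pp-interprets an NP-hard CSP template such as $\relstr{L}$ or $\relstr{M}$ with parameters, while membership in $\NP$ is trivial.

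The NP-hard direction thus reduces to Taylor's theorem and its Siggers refinement, which state that any finite idempotent clone admitting no clone homomorphism to the projections must satisfy a fixed nontrivial identity such as $s(x,y,x,z,y,z)\approx s(y,x,z,x,z,y)$. Together with the standard Pol--Inv Galois correspondence for finite structures, this closes the NP-complete case of the dichotomy.

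The tractable direction is the main obstacle: assuming that $\Pol(\relstr{A})$ with constants satisfies the Siggers identity, one must exhibit a uniform polynomial-time algorithm solving $\CSP(\relstr{B})$. I would follow one of the two successful strategies. Bulatov's approach studies the coloured graph of the finite algebra $\mathbf{A}$ whose operations are the polymorphisms of $\relstr{A}$: every minimal congruence interval is classified as of affine, majority, or semilattice type, and reductions exploit edges of each type in turn. Zhuk's approach eliminates variables using central subuniverses, absorbing subuniverses, PC reductions, and linear reductions, with the affine kernel handled by Gaussian elimination over a finite abelian group.

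In either framework the central tool is Absorption Theory, which, under the Siggers hypothesis, provides either a proper absorbing subuniverse enabling a divide-and-conquer propagation or a congruence allowing a quotient reduction. The principal difficulty is not any single local reduction but their global coordination: preserving solvability across steps, ensuring polynomial termination, and controlling the interaction between affine ``islands'' and the surrounding semilattice or majority structure so as to avoid exponential blow-up. Resolving this coordination is precisely where the technical heart of both Bulatov's and Zhuk's proofs resides.
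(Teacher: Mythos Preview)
The paper does not contain a proof of this statement: Conjecture~\ref{conj:dicho_fin} is merely stated and attributed to~\cite{JBK}, with its resolution cited as~\cite{Bulatov-Dichotomy,Zhuk-Dichotomy}. There is therefore no ``paper's own proof'' to compare against; the paper's actual contribution is Theorem~\ref{thm:main} concerning the $\omega$-categorical setting, for which Conjecture~\ref{conj:dicho_fin} serves only as background.

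As for your proposal itself, it is an accurate high-level summary of the structure of the Bulatov and Zhuk arguments, and you correctly identify the hardness direction as essentially classical (Taylor, Siggers, the Pol--Inv correspondence). However, it is not a proof but a proof \emph{plan}: you explicitly defer the entire tractability direction to ``one of the two successful strategies'' and concede in your final paragraph that the technical heart---the global coordination of local reductions, termination, and the interaction of affine parts with the rest---is exactly what you have not supplied. A reader could not reconstruct a polynomial-time algorithm from what you have written. This is not a gap you could be expected to fill in a short write-up (each of the cited proofs runs to many dozens of pages), but you should be clear that what you have produced is an outline of where the difficulty lies, not a resolution of it.
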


The algebraic approach to finite domain CSPs is based on the fact that the pp-interpretability strength of a finite structure $\relstr{A}$ is determined by its set of compatible finitary  operations, the so-called \emph{polymorphism clone} $\Pol(\relstr{A})$ of $\relstr{A}$. 
The first step towards this realization was achieved  in~\cite{Jeavons}, where it was observed that, by classical universal algebraic results~\cite{Geiger,BoKaKoRo},   $\relstr{A}$ pp-defines $\relstr{B}$ without parameters if and only if $\Pol(\relstr{A}) \subseteq \Pol(\relstr{B})$. Subsequently, another classical result of universal algebra~\cite{Bir-On-the-structure} made it possible to generalize this fact to pp-interpretations~~\cite{JBK}.%
\footnote{%
Our presentation here does not follow the historical formulations  of the theory of CSPs, which was first developed mostly on the algebraic side and only later phrased using model theoretic notions such as pp-interpretability; one of the earliest accounts of the latter is~\cite{Bodirsky-HDR}.
}
Namely,
a finite structure $\relstr{A}$ pp-interprets a finite structure $\relstr{B}$ without parameters if and only if there exists a \emph{clone homomorphism} from $\Pol(\relstr{A})$ to $\Pol(\relstr{B})$, that is, a mapping which preserves arities and \emph{identities} (universally quantified equations). This fact   implies that the complexity of $\CSP(\relstr{A})$ only depends on the identities satisfied by operations in $\Pol(\relstr{A})$.   

Since in Conjecture~\ref{conj:dicho_fin}, parameters are permitted (and necessary) in pp-interpretations, we need a corresponding concept for polymorphism clones in order to obtain an algebraic reformulation: the \emph{stabilizer} of $\Pol(\relstr{A})$ by constants $c_1,\ldots,c_n$ is denoted $\Pol(\relstr{A},c_1,\ldots,c_n)$; its elements are those polymorphisms of $\relstr{A}$ which preserve all unary relations $\{c_i\}$. It then follows that a finite structure $\relstr{A}$ pp-interprets a finite structure $\relstr{B}$ with parameters if and only if there exists a \emph{clone homomorphism} from some stabilizer of  $\Pol(\relstr{A})$ to $\Pol(\relstr{B})$. We thus obtain the following algebraic reformulation of the first item of Conjecture~\ref{conj:dicho_fin}.
The clones $\Pol(\relstr{L})$, $\Pol(\relstr{M})$, as well as $\Pol(\co{\relstr{K}_3},0,1,2)$ are \emph{trivial}, i.e., they contain only projections. Let us denote the clone of projections on a $2$-element set by $\trivclone$. The clone of projections on any other set of at least $2$ elements is isomorphic to $\trivclone$.

\begin{theorem}[\cite{Geiger,BoKaKoRo,Bir-On-the-structure}, cf.~\cite{Bodirsky-HDR}] \label{thm:finite-birk}
The following are equivalent for a finite relational structure $\relstr{A}$ with domain $A = \{c_1, \dots, c_n\}$.
\begin{itemize}
\item $\relstr{A}$ pp-interprets all finite structures with parameters.
\item There exists a clone homomorphism from $\Pol(\relstr{A},c_1,\ldots,c_n)$ to $\trivclone$. 
\end{itemize}
\end{theorem}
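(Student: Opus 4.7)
The plan is to prove both directions by invoking the classical Galois-style correspondence between pp-interpretability and the existence of clone homomorphisms, as already alluded to in the paragraph preceding the theorem. Concretely, I will use the following reformulation (which the authors cite and which is a consequence of the Geiger / Bodnarchuk--Kaluzhnin--Kotov--Romov Galois connection together with Birkhoff's HSP theorem): a finite structure $\relstr{A}$ pp-interprets a finite structure $\relstr{B}$ with parameters if and only if there is a clone homomorphism from some stabilizer $\Pol(\relstr{A},a_1,\ldots,a_k)$ to $\Pol(\relstr{B})$. Given this, both implications of the theorem reduce to bookkeeping.

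For the direction from the clone homomorphism to pp-interpretability, suppose $h\colon\Pol(\relstr{A},c_1,\ldots,c_n)\to\trivclone$ is a clone homomorphism. The projection clone $\trivclone$ embeds as a subclone into $\Pol(\relstr{B})$ for \emph{every} relational structure $\relstr{B}$, since projections preserve all relations. Composing $h$ with such an embedding yields, for every finite $\relstr{B}$, a clone homomorphism $\Pol(\relstr{A},c_1,\ldots,c_n)\to\Pol(\relstr{B})$. The cited clone-theoretic characterization of pp-interpretability with parameters then provides a pp-interpretation of $\relstr{B}$ in $\relstr{A}$ with parameters, for an arbitrary finite $\relstr{B}$, as desired.

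For the converse direction, since $\relstr{A}$ pp-interprets every finite structure with parameters, it in particular pp-interprets a structure like $\relstr{L}$ or $\relstr{M}$ whose polymorphism clone equals $\trivclone$. The characterization therefore yields a clone homomorphism from some stabilizer $\Pol(\relstr{A},c_{i_1},\ldots,c_{i_k})$ (where the $c_{i_j}$ are the finitely many parameters actually used in the interpretation) into $\trivclone$. Adding further constants to a stabilizer only restricts it, so $\Pol(\relstr{A},c_1,\ldots,c_n)\subseteq\Pol(\relstr{A},c_{i_1},\ldots,c_{i_k})$; the restriction of the clone homomorphism to this smaller clone is again a clone homomorphism into $\trivclone$.

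The only substantive ingredient is the Birkhoff-type correspondence between pp-interpretability with parameters and the existence of clone homomorphisms from a stabilizer, which is standard and is the main work done in \cite{Geiger,BoKaKoRo,Bir-On-the-structure,Bodirsky-HDR}. Once that correspondence is in hand, the proof is essentially the two observations that (i) $\trivclone$ sits inside every polymorphism clone as projections, and (ii) stabilizers are descending in the set of named constants, so that restriction preserves the existence of a clone homomorphism to $\trivclone$. I therefore anticipate no genuine obstacle beyond invoking the classical machinery correctly.
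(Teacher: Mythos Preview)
Your argument is correct and follows the standard derivation. Note, however, that the paper does not give its own proof of this theorem: it is stated as a classical result with citations to Geiger, Bodnarchuk--Kalu\v{z}nin--Kotov--Romov, and Birkhoff (with the formulation attributed to~\cite{Bodirsky-HDR}), and the surrounding discussion already spells out the key correspondence you invoke. So there is no paper proof to compare against; your write-up simply makes explicit the two easy reductions once the Birkhoff-type characterization of pp-interpretability with parameters is taken as given.

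One small point of precision: you write that $\trivclone$ ``embeds as a subclone'' into $\Pol(\relstr{B})$. Strictly speaking $\trivclone$ lives on a two-element set, so what you need is the clone homomorphism from $\trivclone$ to the projection subclone of $\Pol(\relstr{B})$ (sending the $i$-th $n$-ary projection to the $i$-th $n$-ary projection on $B$); the paper remarks that all projection clones on sets of size at least two are isomorphic, which is what makes this work. This is cosmetic and does not affect the validity of your argument.
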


For the second, algebraic statement of Theorem~\ref{thm:finite-birk} numerous equivalent algebraic criteria have been obtained within the setting of finite structures~\cite{T77,HM88,JBK,MM08,Sig10,KMM14,Cyclic}, making in particular the failure of the condition more easily verifiable: this failure is then usually witnessed by the satisfaction of particular identities in $\Pol(\relstr{A},c_1,\ldots,c_n)$ which cannot be satisfied in $\trivclone$.

\begin{theorem}[\cite{Sig10},\cite{KMM14},\cite{MM08},\cite{Cyclic}]\label{thm:main_fin}
The following are equivalent for a finite core structure $\relstr{A}$ with domain $A=\{c_1,\ldots,c_n\}$. 
\begin{itemize}
\item[(i)] There exists no clone homomorphism $\Pol(\relstr{A},c_1,\ldots,c_n)\To\trivclone$.
\item[(ii)] $\Pol(\relstr{A})$ contains a 6-ary Siggers operation, i.e., an  operation $s$ such that 
\[
s(x,y,x,z,y,z) \approx s(y,x,z,x,z,y).
\] 
\item[(iii)] $\Pol(\relstr{A})$ contains a 4-ary Siggers operation, i.e., an  operation $s$ such that 
$
s(y,x,z,y) \approx s(x,y,x,z).
$ 
\item[(iv)] $\Pol(\relstr{A})$ contains a $k$-ary weak near unanimity operation, $k \geq 2$, i.e., an operation $w$ such that
$
w(y,x,x \ldots ,x) \approx 
w(x,y,x, \ldots, x) \approx
w(x,x,\ldots, x,y)
$ 
\item[(v)] $\Pol(\relstr{A})$ contains a $k$-ary cyclic operation, $k \geq 2$, i.e., an operation $c$ such that
$
c(x_1,x_2, \ldots,x_k) \approx 
c(x_2,\ldots,x_k,x_1)
$ 
\end{itemize}
\end{theorem}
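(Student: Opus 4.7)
The plan is to split the proof into an easy direction, where each of (ii)--(v) implies (i), and a hard direction, where (i) implies each of (ii)--(v). The hard implications are deep results from the literature which I would invoke directly; the easy direction admits a short unified argument.

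For the easy direction, suppose $\Pol(\relstr{A})$ contains an operation $f$ witnessing the identity in (ii), (iii), (iv), or (v). Each of these identities has the form $f(\ldots) \approx f(\ldots)$ with a single outermost function symbol on each side and the same variable set on both sides, so precomposing the output of $f$ with a unary operation preserves the identity. Since $\relstr{A}$ is a finite core, every unary polymorphism of $\relstr{A}$ is an automorphism; in particular the diagonal $x \mapsto f(x,\ldots,x)$ equals some $\alpha \in \Aut(\relstr{A})$, and hence $g := \alpha^{-1}\circ f$ is idempotent, satisfies the same identity as $f$, and lies in the stabilizer $\Pol(\relstr{A},c_1,\ldots,c_n)$. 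A direct case analysis confirms that no projection of the appropriate arity can satisfy any of the identities in (ii)--(v): for instance in the 6-ary Siggers identity, matching the $i$-th coordinate of $(x,y,x,z,y,z)$ with that of $(y,x,z,x,z,y)$ always produces one of the pairs $\{x,y\}$, $\{x,z\}$, $\{y,z\}$, forcing two pairwise distinct variables to coincide. Consequently, any clone homomorphism $\Pol(\relstr{A},c_1,\ldots,c_n) \to \trivclone$ would send $g$ to a projection satisfying the same identity, an impossibility; thus (i) holds.

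For the hard direction, I would invoke the four results that the theorem compiles: (i)$\Rightarrow$(iv) is due to Mar\'oti--McKenzie~\cite{MM08} (refining Taylor's earlier existence of a nontrivial term), (i)$\Rightarrow$(ii) is due to Siggers~\cite{Sig10}, (i)$\Rightarrow$(iii) to Kearnes--Markovi\'c--McKenzie~\cite{KMM14}, and (i)$\Rightarrow$(v) to the Barto--Kozik cyclic term theorem~\cite{Cyclic}. A few interderivations among (ii)--(v) are elementary and could be recorded for completeness; for example, a $k$-ary cyclic operation is automatically a $k$-ary weak near unanimity by substituting $x_1:=y$, $x_2=\cdots=x_k:=x$ and iterating cyclicity, yielding (v)$\Rightarrow$(iv).

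The main obstacle to a fully self-contained proof is producing any one of the operations in (ii)--(v) from the failure hypothesis (i): each of the four cited proofs relies on nontrivial universal-algebraic machinery, such as tame congruence theory, absorption, and the analysis of subdirectly irreducible subalgebras in finite powers, which is not reproduced here. Given the availability of these results, the contribution at this point reduces to the short easy direction above together with pointers to the literature.
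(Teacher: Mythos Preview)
Your proposal is correct. In fact, the paper does not give any proof of this theorem at all: it is stated purely as background, with the equivalences attributed to~\cite{Sig10,KMM14,MM08,Cyclic}. Your write-up therefore already goes further than the paper, supplying the easy direction explicitly (via the standard idempotent replacement $g=\alpha^{-1}f$ in a finite core and the observation that none of the listed identities is satisfiable by a projection) and correctly deferring the hard implications to the cited literature.
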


\subsection{CSPs over infinite structures}

When we allow the domain of $\relstr{A}$ to be infinite, the situation changes drastically: every computational decision problem is polynomial-time equivalent to $\CSP(\relstr{A})$ for some $\relstr{A}$~\cite{BodirskyGrohe}! A reasonable assumption on $\relstr{A}$ which sends the CSP back to the class $\NP$, and which still allows to cover many interesting computational problems which cannot be modeled as the CSP of any finite structure, 
is that $\relstr{A}$ is a reduct of a finitely bounded homogeneous structure.

A \emph{finitely bounded homogeneous} structure is, intuitively, highly symmetric and the class of its finite substructures admits a finitary description. Important examples include $(\mathbb{Q};<)$, whose finite substructures are linear orders, and the random graph, whose finite substructures are graphs. A \emph{reduct} of $\relstr{A}$ is a structure whose relations are first order definable in $\relstr{A}$ (without parameters). For example, the structure $(\mathbb{Q}; R)$, where $R$ is the betweenness relation $R = \{(a,b,c) \in \mathbb{Q}^3: (a<b<c) \vee (c<b<a)$\}, is a reduct of $(\mathbb{Q}; <)$ whose CSP is the (NP-complete) betweenness problem from temporal reasoning. 
Our main result is of a purely mathematical nature and concerns the  larger class of  \emph{$\omega$-categorical} structures, that is, structures whose automorphism group acts with only finitely many orbits on tuples of any fixed finite length; they thus have a high degree of symmetry, but not necessarily a finite description, and their CSP can be undecidable. Typical CSPs that are \emph{not} covered even by this larger framework are ``numerical'' CSPs (e.g., linear programming), see~\cite{Bodirsky-HDR}.

Substantial results for CSPs over reducts of finitely bounded homogeneous structures include the full complexity classification of  CSPs over reducts of $(\mathbb{Q};<)$ in \cite{tcsps-journal} (classifying the complexity of problems previously called temporal constraint satisfaction problems), the reducts of the random graph~\cite{BodPin-Schaefer-both} (generalizing Schaefer's theorem for Boolean CSPs to what could be called the propositional logic for graphs), and the reducts of the binary branching $C$-relation~\cite{Phylo-Complexity} (classifying the complexity of problems known as phylogeny CSPs). The methods here include the algebraic methods from the finite, but in addition tools from model theory and Ramsey theory~\cite{BP-reductsRamsey}. Moreover, topological considerations have played a significant role in the development of the theory~\cite{Topo-Birk}, and indeed seem inevitable in a sense, although paradoxically it was believed or at least hoped that they would  ultimately turn out inutile in a general complexity classification. On the other hand, due to the fact that the investigation of infinite domain CSPs is more recent, and the additional technical complications which are to be expected when passing from the finite to the infinite, 
the purely algebraic theory as known in the finite is still quite undeveloped in the infinite. The present work can be seen as the first purely algebraic general result for such CSPs.

Some of the basic facts about  finite domain CSPs have analogues in the infinite. In particular, the notion of a core can be generalized and, by~\cite{Cores-journal}, every $\omega$-categorical structure (in particular, every reduct of a finitely bounded homogeneous structure) is homomorphically equivalent to an $\omega$-categorical core, which is unique up to isomorphism.
Also, pp-interpretations still lead to computational reductions, and the same holds when parameters are used when pp-interpreting in a core.
A generalization of the finite domain tractability conjecture has been formulated by Manuel Bodirsky and the second author of the present article.

\begin{conjecture}[Bodirsky + Pinsker\; 2011; cf.~\cite{BPP-projective-homomorphisms}]\label{conj:dicho}
Let $\relstr{B}$ be a reduct of a finitely bounded  homogeneous structure and let $\relstr{A}$ be the core of $\relstr{B}$. Then
\begin{itemize}
\item $\relstr{A}$ pp-interprets all finite structures with parameters (and thus $\CSP(\relstr{B})$ is NP-complete), or
\item $\CSP(\relstr{B})$ is solvable in polynomial time.
\end{itemize}
\end{conjecture}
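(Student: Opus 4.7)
The plan is to mirror, in the infinite setting, the algebraic strategy that resolved the finite analogue Conjecture~\ref{conj:dicho_fin}. The hardness direction is routine and parallels the finite case: if the $\omega$-categorical core $\relstr{A}$ pp-interprets all finite structures with parameters, then in particular it pp-interprets an NP-complete template such as $\relstr{K}_3$, and the usual logspace reduction -- together with the homomorphic equivalence $\CSP(\relstr{A}) = \CSP(\relstr{B})$ -- yields NP-hardness of $\CSP(\relstr{B})$. So the genuine work lies in establishing tractability when no such pp-interpretation with parameters exists.

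For the tractability direction, I would try to transfer Theorems~\ref{thm:finite-birk} and~\ref{thm:main_fin} to the $\omega$-categorical setting. A suitable analogue of the classical Galois connection between relational and polymorphism clones should show that the failure of pp-interpretability of all finite structures with parameters corresponds to the non-existence of a clone homomorphism from some stabilizer of $\Pol(\relstr{A})$ to $\trivclone$. One then seeks a concrete nontrivial identity -- typically a \emph{pseudo-identity} involving automorphisms of $\relstr{A}$ on both sides, of the shape $\alpha s(\dots) \approx \beta s(\dots)$ -- witnessing this non-existence, and uses it to design a polynomial-time algorithm for $\CSP(\relstr{B})$, either directly by local consistency and Ramsey-theoretic combinatorial methods as in~\cite{tcsps-journal,BodPin-Schaefer-both,Phylo-Complexity}, or by reducing to a finite-domain CSP and invoking Bulatov's or Zhuk's theorem.

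The main obstacle, and the reason the conjecture is not settled in this paper, is that in the $\omega$-categorical setting $\Pol(\relstr{A})$ carries a natural topology of pointwise convergence, and only \emph{continuous} clone homomorphisms to $\trivclone$ correspond a priori to pp-interpretations with parameters; the purely algebraic condition from Theorem~\ref{thm:finite-birk} is not directly available. One must therefore either (i) prove tractability directly in the absence of any continuous homomorphism to $\trivclone$, or (ii) show that the continuous and the abstract clone homomorphism conditions in fact coincide for $\omega$-categorical cores, so that the purely algebraic identity machinery from the finite setting can be transferred unchanged. The present paper's main theorem takes route (ii) for the pseudo-Siggers identity announced in the abstract, eliminating the topological layer in the characterization of the conjectured borderline. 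The remaining genuinely hard step toward Conjecture~\ref{conj:dicho} is then the combinatorial task of constructing polynomial-time algorithms out of the sole assumption that $\Pol(\relstr{A})$ admits a pseudo-Siggers polymorphism.
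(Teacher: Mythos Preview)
The statement you are asked to prove is Conjecture~\ref{conj:dicho}, and it is genuinely a conjecture: the paper does \emph{not} prove it, and no proof of it exists in the literature. There is therefore no ``paper's own proof'' to compare your proposal against.

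Your write-up is not a proof but a strategy outline, and you appear to recognize this yourself when you say the conjecture ``is not settled in this paper'' and identify ``the remaining genuinely hard step.'' As a summary of the landscape, your outline is accurate: the hardness direction is indeed straightforward; the paper's contribution is precisely your route~(ii), namely Theorem~\ref{thm:main}, which shows that for $\omega$-categorical cores the topological condition (continuous clone homomorphism from a stabilizer to $\trivclone$) coincides with the purely algebraic one (no pseudo-Siggers polymorphism); and the paper then restates the open tractability half as Conjecture~\ref{conj:newdicho}. But none of this constitutes a proof of Conjecture~\ref{conj:dicho}: the step ``use the pseudo-Siggers identity to design a polynomial-time algorithm'' is exactly the open problem, and your proposal offers no mechanism for it beyond gesturing at the finite-domain classifications and known case studies. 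If you intended to submit this as a proof, the gap is that the entire tractability direction is missing; if you intended it as a roadmap, it is a faithful one, but you should label it as such rather than as a proof.
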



Similarly to the finite case, for an  $\omega$-categorical structure $\relstr{A}$ the complexity of $\CSP(\relstr{A})$ only depends, up to polynomial-time reductions, on the polymorphism clone $\Pol(\relstr{A})$ \cite{BodirskyNesetrilJLC},  and it has been shown that pp-interpretations in $\relstr{A}$ can be expressed in terms of this clone~\cite{Topo-Birk}.
However, to this end one additionally needs to take into consideration the natural topological structure of $\Pol(\relstr{A})$. This complication comes, roughly speaking, from the requirement that the dimension of a pp-interpretation needs to be finite, as otherwise it would not give us a computational reduction.

\begin{theorem}[\cite{Topo-Birk}]\label{thm:int}
The following are equivalent for an $\omega$-categorical structure $\relstr{A}$ with domain $A$.
\begin{itemize}
\item $\relstr{A}$ pp-interprets all finite structures with parameters.
\item There exists a \emph{continuous} clone homomorphism from $\Pol(\relstr{A},$ $c_1,\ldots,c_n)$ to $\trivclone$, for some elements $c_1,\ldots,c_n\in{A}$.
\end{itemize}
\end{theorem}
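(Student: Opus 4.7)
For the forward direction the plan is to extend routinely the argument for Theorem~\ref{thm:finite-birk}. Assuming $\relstr{A}$ pp-interprets all finite structures with parameters, in particular it pp-interprets with some parameters $c_1,\dots,c_n$ a finite structure $\relstr{B}$ whose polymorphism clone is $\trivclone$ (for instance $\relstr{L}$). Such an interpretation provides a dimension $d$, a pp-definable (with parameters) domain $D\subseteq A^d$, a pp-definable equivalence relation ${\sim}$ on $D$ with finite quotient, and pp-definable preimages of the relations of $\relstr{B}$ under a coordinate map $h:D\twoheadrightarrow B$. Every $f\in\Pol(\relstr{A},c_1,\dots,c_n)$ fixes the parameters and hence preserves each of these pp-definable ingredients; acting componentwise on $A^d$ and descending through $h$, it yields an operation $\xi(f)\in\Pol(\relstr{B})=\trivclone$. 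The map $\xi$ preserves arities, composition, and identities, so it is a clone homomorphism. For continuity we fix once and for all finitely many representatives in $D$ for the $\sim$-classes; then $\xi(f)$ is completely determined by the values of $f$ on the finite set of coordinates appearing in these representatives, so pointwise convergence $f_k\to f$ in $\Pol(\relstr{A},c_1,\dots,c_n)$ forces eventual equality $\xi(f_k)=\xi(f)$.

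For the converse direction --- the \emph{topological Birkhoff} theorem of~\cite{Topo-Birk} --- the goal is to convert a continuous clone homomorphism $\xi:\Pol(\relstr{A},c_1,\dots,c_n)\to\trivclone$ into a concrete pp-interpretation (with parameters) of some finite structure whose polymorphism clone is $\trivclone$; such a structure in turn pp-interprets all finite structures. The key idea is that continuity together with $\omega$-categoricity encodes $\xi$ by pp-definable data. Since $\trivclone$ is finite on each arity, continuity of $\xi$ ensures that its fibers on $k$-ary operations are determined by the values of $f$ on some fixed finite tuple in $A^k$. On the other hand, by $\omega$-categoricity and Ryll--Nardzewski the orbits of $\Aut(\relstr{A},c_1,\dots,c_n)$ on $A^k$ are finite in number and pp-definable with the parameters $c_1,\dots,c_n$; combining this with the $\omega$-categorical Galois connection between polymorphism clones and relations pp-definable with parameters, each fiber of $\xi$ becomes a pp-definable (with parameters) union of orbits. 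Assembling the resulting partitions across arities produces the desired pp-interpretation: its domain, the equivalence relation collapsing it to the finite quotient, and the induced relations are all pp-definable with the parameters, and the operations induced on the quotient are precisely those in the image of $\xi$, which all lie in $\trivclone$.

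The main obstacle I expect is precisely this assembly step: continuity localizes $\xi$ arity by arity, but a pp-interpretation must realize $\xi$ coherently for \emph{all} arities simultaneously, which forces a careful choice of the interpretation's dimension $d$ and of the equivalence relation ${\sim}$, and a verification that the induced relations on the quotient truly form a structure whose polymorphism clone is $\trivclone$. This is exactly the step at which the topological hypothesis is indispensable: without continuity, the finiteness supplied by $\omega$-categoricity alone is not enough to cut the abstract, possibly infinite-dimensional kernel of $\xi$ down to a finite-dimensional pp-definable object, in contrast to the purely algebraic picture of Theorem~\ref{thm:finite-birk} in the finite case.
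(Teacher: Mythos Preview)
The paper does not give a proof of Theorem~\ref{thm:int}; the result is quoted from~\cite{Topo-Birk} and used as a black box throughout. There is therefore no in-paper proof to compare your proposal against.

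That said, your forward direction is correct and standard. For the converse your plan is in the right spirit, but the sketch has a genuine gap at exactly the point you flag. You observe that continuity of $\xi$ determines its value on $k$-ary operations by restriction to a finite tuple \emph{for each $k$ separately}, and then worry about assembling these into a single finite-dimensional interpretation. The missing idea is that for an oligomorphic closed clone mapping into a clone on a finite set, continuity automatically upgrades to \emph{uniform} continuity: there is a single finite set $F\subseteq A$ such that $\xi(f)$ is determined by the restriction of $f$ to tuples over $F$, simultaneously for all arities. (The present paper alludes to precisely this step in the proof of Corollary~\ref{cor:homos}, attributing it to Lemma~20 of~\cite{Topo-Birk}.) Once one has uniform continuity, the interpretation can be taken of dimension $|F|$: the domain is the closure of the tuple enumerating $F$ under the componentwise action of the clone, the equivalence relation is the kernel of the map sending such a tuple $f(\tuple{a})$ to $\xi(f)$, and both are invariant under $\Pol(\relstr{A},c_1,\dots,c_n)$ and hence pp-definable with parameters by the $\omega$-categorical Galois connection of~\cite{BodirskyNesetrilJLC}. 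Note also that your appeal to pp-definability of \emph{orbits} of the automorphism group is not available in general, since $\relstr{A}$ is not assumed to be a core; what is used instead is pp-definability of \emph{$\Pol$-invariant} relations, which holds for arbitrary $\omega$-categorical structures.
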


More generally, the complexity of $\CSP(\relstr{A})$ depends,  for an $\omega$-categorical  structure $\relstr{A}$, only on the structure of $\Pol(\relstr{A})$ as a topological clone~\cite{Topo-Birk}. 
A natural, yet unresolved problem when comparing the finite with the $\omega$-categorical setting then
is whether the topological structure of the polymorphism clone is really essential in the infinite, or whether the abstract algebraic structure, i.e., the identities that hold in $\Pol(\relstr{A})$, is sufficient to determine the complexity of the CSP. This problem motivates, in particular, the related concept of \emph{reconstruction} of the topology of a clone from its algebraic structure introduced in~\cite{Reconstruction}, which has its own purely mathematical interest.

\subsection{The result}
We show that the borderline proposed in Conjecture~\ref{conj:dicho} \emph{is} purely algebraic. In particular, if the conjecture is true, then the complexity of CSPs over structures concerned by the conjecture only depends on the identities which hold in the polymorphism clone of their core, rather than the additional topological structure thereof. Moreover, the borderline is characterized by a single
 simple identity generalizing item (ii) in Theorem~\ref{thm:main_fin}; satisfaction of this identity could, similarly to the finite setting, potentially be exploited for proving tractability of the CSP.  We show the following.

\begin{theorem}\label{thm:main}
The following are equivalent for an $\omega$-categorical core structure $\relstr{A}$ with domain $A$. 
\begin{itemize}
\item[(i)] There exists no continuous clone homomorphism $\Pol(\relstr{A},c_1,\ldots,$ $c_n) \To\trivclone$, for any  $c_1,\ldots,c_n\in{A}$.
\item[(ii)] There exists no clone homomorphism $\Pol(\relstr{A},c_1,\ldots,c_n)\To\trivclone$, for any $c_1,\ldots,c_n\in{A}$.
\item[(iii)] $\Pol(\relstr{A})$ contains a pseudo-Siggers operation, i.e., a 6-ary operation $s$ such that 
\[
\alpha s(x,y,x,z,y,z) \approx \beta s(y,x,z,x,z,y)
\] 
for some unary operations $\alpha, \beta \in \Pol(\relstr{A})$.
\end{itemize}
\end{theorem}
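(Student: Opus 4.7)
The plan is to establish the three-way equivalence by proving the cycle (ii) $\Rightarrow$ (i) $\Rightarrow$ (iii) $\Rightarrow$ (ii). The implication (ii) $\Rightarrow$ (i) is immediate: every continuous clone homomorphism is in particular a clone homomorphism, so nonexistence of any clone homomorphism from stabilizers rules out the continuous ones.

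The implication (i) $\Rightarrow$ (iii) is essentially the topological-algebraic characterization from \cite{Topo-Birk}, which I would argue by contrapositive. If $\Pol(\relstr{A})$ has no pseudo-Siggers, then a K\"onig's lemma style compactness argument over the orbit structure of $\Aut(\relstr{A})$ yields a continuous clone homomorphism from some stabilizer to $\trivclone$. The $\omega$-categoricity of $\relstr{A}$ ensures that, at each arity $k$, there are only finitely many orbits of $k$-tuples, hence finitely many possible ``projection types'' for a $k$-ary polymorphism restricted to a chosen finite subset; the absence of pseudo-Siggers makes these local choices consistent across arities, and the continuous clone homomorphism emerges as an inverse limit on a suitable stabilizer $\Pol(\relstr{A},c_1,\ldots,c_n)$.

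The novel step is (iii) $\Rightarrow$ (ii). Assume a pseudo-Siggers operation $s \in \Pol(\relstr{A})$ with unary witnesses $\alpha,\beta\in\Pol(\relstr{A})$, and assume for contradiction that an abstract clone homomorphism $\xi\colon\Pol(\relstr{A},c_1,\ldots,c_n)\to\trivclone$ exists for some constants $c_1,\ldots,c_n$. The strategy is to produce witnesses $s',\alpha',\beta'$ of a pseudo-Siggers identity entirely \emph{within} the stabilizer $\Pol(\relstr{A},c_1,\ldots,c_n)$. Once this is achieved, applying $\xi$ collapses the unary $\alpha',\beta'$ to the identity projection in $\trivclone$ and forces the projection $\xi(s')$ to satisfy the ordinary Siggers identity $\xi(s')(x,y,x,z,y,z)=\xi(s')(y,x,z,x,z,y)$, which no projection satisfies --- a contradiction.

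The main obstacle is precisely this transfer of the pseudo-Siggers into the stabilizer. A naive one-shot absorption by left-composition with a single automorphism of $\relstr{A}$ fails, because no single automorphism can simultaneously send $\alpha(c_i)$ and $\beta(c_i)$ to $c_i$ for every $i$ while also pulling $s$ into the stabilizer. To overcome this, I would combine two structural features of the $\omega$-categorical core setting: the pp-definability of orbits of $\Aut(\relstr{A})$ on tuples (so that any polymorphism-image of $(c_1,\ldots,c_n)$ lies in the orbit of $(c_1,\ldots,c_n)$), and, if needed, Ramsey-theoretic canonization of the triple $(s,\alpha,\beta)$ with respect to the expansion of $\relstr{A}$ by $c_1,\ldots,c_n$ (possibly after first passing to a Ramsey expansion of $\relstr{A}$). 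The canonization would produce a coordinated family of automorphism adjustments that pull $s$, $\alpha$, and $\beta$ into the stabilizer \emph{together}, while preserving the identity $\alpha s(x,y,x,z,y,z)\approx\beta s(y,x,z,x,z,y)$ throughout. The delicate point is exactly this coherence: it is straightforward to correct any one of $s$, $\alpha$, $\beta$ alone, but the joint correction compatible with the identity appears to require either an iterated absorption argument combined with $\omega$-categorical compactness, or a canonical-function reduction that chooses the automorphism corrections uniformly on orbits.
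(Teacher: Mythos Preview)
You have inverted the difficulty of the two nontrivial implications. The step you call ``the novel step'', namely (iii) $\Rightarrow$ (ii), is in fact elementary and requires neither Ramsey canonization nor iterated absorption. Your obstacle --- correcting $s$, $\alpha$, $\beta$ coherently --- dissolves once you exploit that the pseudo-Siggers identity forces $\alpha$ and $\beta$ to agree on the diagonal. Set $\gamma(x):=s(x,\ldots,x)$ and $\delta(x):=\alpha\gamma(x)$; the identity gives $\delta=\beta\gamma$ as well. Since $\relstr{A}$ is a core, pick automorphisms $\epsilon,\theta$ with $\epsilon(c_i)=\gamma(c_i)$ and $\theta(c_i)=\delta(c_i)$ for all $i$. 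Then $\epsilon^{-1}s$, $\theta^{-1}\alpha\epsilon$, and $\theta^{-1}\beta\epsilon$ all lie in $\Pol(\relstr{A},c_1,\ldots,c_n)$, and
\[
(\theta^{-1}\alpha\epsilon)(\epsilon^{-1}s)(x,y,x,z,y,z)=\theta^{-1}\alpha s(x,y,x,z,y,z)=\theta^{-1}\beta s(y,x,z,x,z,y)=(\theta^{-1}\beta\epsilon)(\epsilon^{-1}s)(y,x,z,x,z,y),
\]
so the stabilizer has its own pseudo-Siggers and no clone homomorphism to $\trivclone$ can exist. Two automorphism corrections suffice; no canonical functions are involved.

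The genuine gap is in your treatment of (i) $\Rightarrow$ (iii). This is not a consequence of~\cite{Topo-Birk}, and your sketch --- ``the absence of pseudo-Siggers makes these local choices consistent across arities'' --- is not an argument: there is no mechanism by which the mere failure of one specific identity would coherently assign a projection to every polymorphism. The paper's route is entirely different. One first proves a \emph{pseudoloop lemma}: for any oligomorphic gg-system $(\relstr{G},\group{G})$ where $\relstr{G}$ contains a copy of $\relstr{K}_3$, either the system pp-interprets $\relstr{K}_3$ with parameters, or it has a pseudoloop (an edge within a single $\group{G}$-orbit). This is a substantial combinatorial result requiring a minimality reduction, a diamond-freeness argument, and a new bound on embedded powers of $\relstr{K}_3$ in terms of the number of orbits. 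One then applies it to the free $6$-ary relation generated by the Siggers tuples inside a power of $A$: a pseudoloop there is exactly a local pseudo-Siggers witness, and a compactness argument over finite sets (which is where the actual K\"onig-style reasoning lives) globalizes it. If instead the system pp-interprets $\relstr{K}_3$, then so does $\relstr{A}$ (using that $\relstr{A}$ is a core so orbits are pp-definable), and Theorem~\ref{thm:int} yields the forbidden continuous homomorphism from a stabilizer. Your proposal contains none of this structure; the implication (i) $\Rightarrow$ (iii) is the heart of the theorem, not a citation.
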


Consequently, the missing piece for proving Conjecture~\ref{conj:dicho} can now be stated in purely algebraic terms. 

\begin{conjecture} \label{conj:newdicho}
Let $\relstr{A}$ be the core of a reduct of a finitely bounded  homogeneous structure.
If $\Pol(\relstr{A})$ contains a pseudo-Siggers operation, then $\CSP(\relstr{A})$ is solvable in polynomial time. 
\end{conjecture}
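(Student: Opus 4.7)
The plan is to adapt the Bodirsky-Pinsker reduction strategy, which aims to transfer tractability results from finite to infinite-domain CSPs via canonical polymorphisms obtained from Ramsey-theoretic arguments. Let $\relstr{H}$ be a finitely bounded homogeneous structure of which $\relstr{A}$ is a reduct; after expanding $\relstr{H}$ by a homogenizable linear order and possibly further finitely many definable relations, one may assume that the age of $\relstr{H}$ has the Ramsey property. The polymorphisms of $\relstr{A}$ can then be analyzed via canonical functions: every $f \colon A^k \to A$ gives rise, on a sufficiently large orbit-dense subset, to a function that is \emph{canonical} with respect to $\relstr{H}$, meaning that its behavior depends only on the orbit of its input tuple under $\Aut(\relstr{H})$.

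Starting from a pseudo-Siggers polymorphism $s$ with unary witnesses $\alpha,\beta$, I would first canonize $s$, $\alpha$, and $\beta$ simultaneously with respect to $\relstr{H}$, obtaining canonical operations that still satisfy a version of the pseudo-Siggers identity modulo $\Aut(\relstr{H})$. The canonical polymorphisms act as genuine operations on the finite sets of orbits of $n$-tuples, for each $n$, yielding a finite clone $\clone{C}$. Since $\alpha,\beta$ become determined by orbits and the core hypothesis forces them to essentially behave as automorphisms on the orbit level, the desired consequence is that $\clone{C}$ satisfies an honest Siggers identity from Theorem~\ref{thm:main_fin}; by Bulatov-Zhuk, the associated finite-domain CSP is then tractable. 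The final step is to lift tractability to $\CSP(\relstr{A})$: given an instance, one first solves its orbit-quotient over $\clone{C}$ in polynomial time, then uses the canonical polymorphisms together with the $\omega$-categoricity of $\relstr{A}$ to lift a solution of the quotient to a homomorphism into $\relstr{A}$, possibly after enforcing a suitable local consistency notion such as $(k,\ell)$-consistency or singleton linear arc consistency.

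The main obstacle, and the reason Conjecture~\ref{conj:newdicho} is open rather than a theorem, is this last lifting step. Canonization and reduction to a finite orbit-clone satisfying a nontrivial identity are by now standard for many classes of infinite structures, but the known algorithmic techniques for finite CSPs --- few subpowers, bounded width, absorption --- do not directly transfer, because the quotient to orbits loses information about how individual elements cluster inside orbits, and precisely that information must be reconstructed during the lifting. In particular, the pseudo-Siggers witnesses $\alpha,\beta$ might identify orbits that behave very differently over $\relstr{A}$, so a solution over $\clone{C}$ need not a priori correspond to a satisfiable system over $\relstr{A}$; a uniform argument that covers all pseudo-Siggers polymorphism clones seems to require substantially new ideas. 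The current state of the art can handle important special cases (reducts of $(\mathbb{Q};<)$, of the random graph, of the binary branching $C$-relation), and Theorem~\ref{thm:main} of the present paper reduces the remaining problem to a purely algebraic one, but closing the gap in full generality is left as the central open question.
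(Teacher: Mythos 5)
This is a \emph{conjecture} in the paper, not a theorem, and the paper offers no proof of it; indeed, Theorem~\ref{thm:main} is presented precisely as reducing Conjecture~\ref{conj:dicho} to this remaining algebraic statement. You correctly recognize this: your text does not claim to close the gap, and in fact you end by naming the lifting step as the reason the conjecture is open. In that sense there is nothing to check against, and your self-assessment is accurate.

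A few comments on the content of your sketch. The strategy you outline --- expand to a Ramsey order, canonize the pseudo-Siggers operation and its unary witnesses, pass to the finite orbit-clone, invoke Bulatov--Zhuk, then lift --- is essentially the Bodirsky--Pinsker program alluded to in the discussion following Conjecture~\ref{conj:newdicho}, and your identification of the obstruction (the quotient to orbits discards the clustering information needed to reconstruct a solution, so tractability of the orbit-CSP does not obviously transfer back) is the standard and correct diagnosis. One point where your sketch is optimistic beyond what is currently known: you suggest that canonization turns the pseudo-Siggers identity into an ``honest'' Siggers identity on the orbit clone, but in general the canonical unary witnesses $\alpha,\beta$ need not become trivial on orbits, and even when they do, the resulting finite clone may fail to be the polymorphism clone of the orbit structure in a way compatible with the original instance's constraints --- this is part of why the program has succeeded only in special cases like $(\mathbb{Q};<)$, the random graph, and the binary branching $C$-relation. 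There is also a known technical complication you do not mention: the Ramsey expansion itself is only known to exist for all finitely bounded homogeneous structures modulo a further conjecture, so even the canonization step is not unconditional. None of this is a defect in your answer, since the statement is genuinely open; your write-up would read better if it flagged that the canonization step is also conditional and that the ``becomes an honest Siggers operation'' step is the heart of a second gap, not a routine consequence.
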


In a proposed strategy~\cite{Pin15} for solving Conjecture~\ref{conj:dicho}, the first step asked to prove that for an $\omega$-categorical structure $\relstr{A}$, the existence of a clone homomorphism $\Pol(\relstr{A})\To\trivclone$ implies the existence of a continuous such homomorphism (cf.~\cite{BPP-projective-homomorphisms}).
The idea then is, roughly speaking, to use Ramsey theory to ``lift" the algorithm for finite structures whose polymorphism clone satisfies  this identity to show that $\CSP(\relstr{A})$ is tractable.

While we do not answer this question, Theorem~\ref{thm:main} gives an answer for the variant which is actually relevant for the CSP: for an $\omega$-categorical core structure $\relstr A$, the existence of a clone homomorphism $\Pol(\relstr{A})\To\trivclone$ implies the existence of a continuous clone homomorphism from \emph{some stabilizer} of $\Pol(\relstr{A})$ to $\trivclone$. Taking into account the existence of non-continuous clone homomorphisms $\Pol(\relstr{A})\To\trivclone$~\cite{BPP-projective-homomorphisms}, even for $\omega$-categorical $\relstr{A}$, as well as the recent discovery of $\omega$-categorical structures $\relstr{A}, \relstr{A}'$ whose polymorphism clones are isomorphic algebraically, but not topologically~\cite{BEKP}, 
it might very well turn out that the answer to the original question is negative, but, as we could then conclude a posteriori, irrelevant for CSPs.

Let us also remark that Theorem~\ref{thm:main} is, by the fact that every $\omega$-categorical structure has a unique $\omega$-categorical core, a statement about all $\omega$-categorical structures, rather than only the structures concerned by Conjecture~\ref{conj:dicho}. Theorem~\ref{thm:main} is therefore remarkable in that non-trivial statements about the class of all $\omega$-categorical structures, other than the fundamental theorem of Ryll-Nardzewski, Engeler, and Svenonius characterizing them,  are practically non-existent.

\subsection{Outline and proof strategy}

The strategy for proving Theorem~\ref{thm:main} is similar to the finite analogue of Theorem~\ref{thm:main} proved by Siggers in~\cite{Sig10} (see also~\cite{KMM14}).
Siggers's reasoning is based on a ``loop lemma'' from Bulatov's work in~\cite{B05} that refines the CSP dichotomy theorem for finite undirected graphs~\cite{HellNesetril}. 

After providing definitions and notation in Section~\ref{sect:notation}, we start our proof in Section~\ref{sect:loop} with a generalization of the loop lemma, the \emph{pseudoloop lemma}, using some of the ideas from~\cite{B05}. Instead of finite graphs we work with infinite objects which we call graph-group-systems, and which consist of a permutation group acting on the vertex set of an infinite graph while preserving its edge relation. In our case, the action of the group will have finitely many orbits in its componentwise action on finite tuples of fixed length, so that in some sense the graph becomes finite modulo the group action. 
It might be some people's cup of tea to imagine such systems as fuzzy finite graphs, whereas others will be inclined to draw lines between ``potatoes" in order to achieve an appropriate visualization. 

Theorem~\ref{thm:main} is derived from the pseudoloop lemma in Section~\ref{sect:main} basically using a standard  technique from universal algebra, albeit adapted to the $\omega$-categorical setting via a compactness argument. 

In Section~\ref{sect:discus} we first discuss possible generalizations of our pseudoloop lemma inspired by theorems about finite graphs, and provide some evidence for the possibility of such generalizations. We then 
 consider possible extensions of our main theorem.

Section~\ref{sect:homos} contains further discussion on clone morphisms in the light of other recent results, in particular from the wonderland of reflections~\cite{wonderland}. In  particular, one of our examples provides the answer to a question in~\cite{BPP-projective-homomorphisms}.

We conclude our work with suggestions for research in Section~\ref{sect:conclusion}.

\section{Definitions and Notation}\label{sect:notation}

Relational structures are denoted by blackboard bold letters, such as $\relstr{A}$, and their domain by the same letter in the plain font, such as $A$.
By a \emph{graph} we mean a relational structure with a single symmetric binary relation.

\subsection{The range of the infinite CSP conjecture}  
A relational structure $\relstr{B}$ is \emph{homogeneous} if every isomorphism between finite induced substructures extends to an automorphism of the entire structure $\relstr{B}$. In that case, $\relstr{B}$ is uniquely determined, up to isomorphism, by its \emph{age}, i.e., the class of its finite induced substructures up to isomorphism. $\relstr{B}$ is \emph{finitely bounded} if its signature is finite and its age is given by a finite set $\forb{F}$ of forbidden finite substructures, i.e., the age consists precisely of those finite structures in its signature which do not (isomorphically) embed any member of $\forb{F}$. A \emph{reduct} of a structure $\relstr{B}$ is a structure $\relstr{A}$ on the same domain which is first-order definable without parameters in $\relstr{B}$. Reducts $\relstr{A}$ of finitely bounded homogeneous structures are \emph{$\omega$-categorical}, i.e., the up to isomorphism unique countable model of their first-order theory. Equivalently, their automorphism groups are \emph{oligomorphic}: they have finitely many orbits in their action on $n$-tuples over $\relstr{A}$, for every finite $n\geq 1$.

\subsection{ pp-formulas and interpretations} A formula is \emph{primitive positive}, in short \emph{pp}, if it contains only equalities, existential quantifiers, conjunctions, and atomic formulas -- in our case, relational symbols. A pp-formula \emph{with parameters} over a structure $\relstr{A}$ can contain, in addition, elements of the domain of $\relstr{A}$. 

A \emph{pp-interpretation} is a first-order interpretation in the sense of model theory where all the involved formulas are primitive positive: 
a structure $\relstr{A}$ \emph{pp-interprets} $\relstr{B}$ (with parameters) if there exists a partial mapping $f$ from a finite power $A^n$ to $B$ such that
the domain of $f$, the $f$-preimage of the equality relation and the $f$-preimage of every relation in $\relstr{B}$ is pp-definable in $\relstr{A}$ (with parameters). 
In particular, $\relstr{A}$ pp-interprets its substructures induced by pp-definable subsets and also its quotients modulo a pp-definable equivalence relation. 
\lib{The number $n$ is refered to as the \emph{dimension} of the interpretation.}

\subsection{Cores} An $\omega$-categorical structure $\relstr{A}$ is a \emph{core}, also called \emph{model-complete core}, if all of its endomorphisms are elementary self-embeddings, i.e., preserve all first-order formulas over $\relstr{A}$. This is the case if and only if \lib{for each finite set $F \subseteq A$ each endomorphism of $\relstr{A}$ coincides with an automorphism of $\relstr{A}$ on the set $F$. 
In topological terms,} the automorphism group of $\relstr{A}$ is dense in its endomorphism monoid with respect to the pointwise convergence topology on functions on $A$; cf.~Section~\ref{sect:top} for a description of the latter.
Two structures $\relstr{A}, \relstr{B}$ are \emph{homomorphically equivalent} if there exist homomorphisms from $\relstr{A}$ into $\relstr{B}$ and vice-versa.

\subsection{Clones} 
A \emph{function clone} $\clone{C}$ is a set of finitary operations (also called functions) on a fixed set $C$ which contains all projections and which is closed under composition. 
A \emph{polymorphism} of  a relational structure $\relstr{A}$ is a finitary operation $f(x_1,\ldots,x_n)$ on $A$  which \emph{preserves} all relations $R$ of $\relstr{A}$: this means that for all $\tuple{r}_1,\ldots,\tuple{r}_n\in R$ we have that $f(\tuple{r}_1,\ldots,\tuple{r}_n)$, calculated componentwise, is again contained in $R$. The \emph{polymorphism clone} of $\relstr{A}$, denoted by $\Pol(\relstr{A})$, consists of all polymorphisms of $\relstr{A}$, and is always a function clone. Its unary operations are precisely the endomorphisms of $\relstr{A}$, and its \emph{invertible} unary operations (i.e., those which are bijections and whose inverse is also a polymorphism) are precisely the automorphisms of $\relstr{A}$.

A \emph{clone homomorphism} is a mapping from one function clone to another which preserves arities, composition, and which sends every projection of its domain to the corresponding projection of its co-domain. Equivalently, a clone homomorphism is a mapping $\xi$ that preserves arities and \emph{identities}, i.e., universally quantified equations over $\clone{C}$: more precisely, whenever  an identity $t \approx s$ holds in $\clone{C}$, where $t$ and $s$ are terms over the signature which has one functional symbol for every element in $\clone{C}$, then the identity obtained from $t \approx s$  by replacing each \co{occurrence} of $f \in \clone{C}$ by $\xi(f)$ holds in $\clone{D}$. See~\cite{wonderland} for a more detailed exposition of connections between various variants of clone homomorphisms and identities.

\subsection{Topology}\label{sect:top}

Function clones carry a natural topology, the topology of \emph{pointwise convergence}, for which a subbasis is given by sets of functions which agree on a fixed finite tuple; the functions of a fixed arity in a function clone form a clopen set. Equivalently, the domain of a function clone is taken to be discrete, and the $n$-ary functions in the clone equipped with the product topology, for every $n\geq 1$; the whole clone is then the sum space of the spaces of $n$-ary functions. The function clones which are topologically closed within the space of all functions on their domain are precisely the polymorphism clones of relational structures.

We always understand continuity of clone homomorphisms with respect to this topology.

\subsection{Core clones and oligomorphicity}

We say that a closed function clone is a \emph{core} 
if  it is the polymorphism clone of a core. 

A function clone $\clone{C}$ is \emph{oligomorphic} if and only if  the permutation group $\clgr{\clone{C}}$ of unary invertible elements of $\clone{C}$ is oligomorphic. When $\clone{C}$ is closed, then this is the case if and only if it is the polymorphism clone of an $\omega$-categorical structure, in which case $\clgr{\clone{C}}$ consists of the automorphisms of that structure. 

When $\clone{C}$ is a core, then the set of its unary operations is the closure of $\clgr{\clone{C}}$ \co{in the space of all finitary operations on its domain.}
 

\subsection{Pseudo-Siggers operations} A 6-ary operation $s$ in a function clone $\clone{C}$ is a \emph{pseudo-Siggers operation} if there exist unary $\alpha, \beta \in \clone{C}$ 
such that the identity $\alpha s(x,y,x,z,y,z) \approx \beta s(y,x,z,x,z,y)$ holds in $\clone{C}$  (i.e., equality holds for all values for the variables in $C$). We then also say that $s$ \emph{satisfies} the pseudo-Siggers identity.

\section{The Pseudoloop Lemma}\label{sect:loop}

The following definition is a generalization of finite graphs to the $\omega$-categorical which is suitable for our purposes.

\begin{definition}\label{defn:gg}
A \emph{graph-group-system}, for  short \emph{gg-system}, is a pair $(\relstr{G},\group{G})$, where $\group{G}$ is a permutation group on a set $G$, and $\relstr{G}=(G;R)$ an (undirected) graph which is invariant under $\group{G}$. We also write $(R,\group{G})$ for the same gg-system. 

The system is called \emph{oligomorphic} if $\group{G}$ is; in that case, $\relstr{G}$ is $\omega$-categorical, since its automorphism group contains $\group{G}$ and hence is oligomorphic. 

The system \emph{pp-interprets (pp-defines)} a structure $\relstr{B}$ if $\relstr{G}$ together with the orbits of $\group{G}$ on finite tuples does.

A \emph{pseudoloop} of a gg-system $(\relstr{G},\group{G})$ is an edge of $\relstr{G}$ of the form $(a,\alpha(a))$, where $\alpha\in\group{G}$. 
\end{definition}

Note that $R$, as well as any relation that is first-order definable from a gg-system $(\relstr{G},\group{G})$, is invariant under the natural action of $\group{G}$ on tuples. In particular, such relations are unions of orbits of the action of $\group{G}$ on tuples, and 
when $\group{G}$ is oligomorphic, then there are only finite many first-order definable relations of any fixed arity. 

We are now ready to state our pseudoloop lemma for gg-systems.

\begin{lemma}[The pseudoloop lemma]\label{lem:loop}
Let $(\relstr{G},\group{G})$ be an oligomorphic gg-system, where $\relstr{G}$ has a subgraph isomorphic to $\relstr{K}_3$. Then either it {pp}-interprets $\relstr{K}_3$ with parameters, or it contains a pseudoloop.
\end{lemma}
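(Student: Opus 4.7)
The plan is to adapt Bulatov's loop lemma for finite graphs~\cite{B05} to the oligomorphic setting of gg-systems. The intuition — suggested by the paper's own description of gg-systems as ``fuzzy finite graphs'' or ``potato diagrams'' — is that an oligomorphic gg-system behaves like a finite graph modulo the $\group{G}$-action, each orbit playing the role of a vertex of the quotient. Our goal is to mimic, in this modular setting, the finite argument showing that a loopless triangle-containing graph pp-interprets $\relstr{K}_3$ with parameters.

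First comes the setup. Assume $(\relstr{G}, \group{G})$ contains no pseudoloop. The triangle $\relstr{K}_3 \subseteq \relstr{G}$ then supplies vertices $a, b, c$ lying in pairwise distinct $\group{G}$-orbits $U_a, U_b, U_c$: an $R$-edge whose endpoints share a single orbit is by definition a pseudoloop. These three orbits, each pp-definable in the gg-system as an atomic unary orbit relation, will play the role of the three vertices of $\relstr{K}_3$ in the resulting pp-interpretation, and the elements $a, b, c$ will serve as the required parameters.

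The bulk of the proof should then consist in carrying through a Bulatov-style absorption argument to produce an actual pp-interpretation of $\relstr{K}_3$ from this data. Oligomorphicity ensures that $\group{G}$ has only finitely many orbits on tuples of each fixed length, so that the minimality and inductive arguments of~\cite{B05}, which in the finite case terminate by cardinality, can be transferred to gg-systems after being reformulated modulo the $\group{G}$-action. Using $a, b, c$ as parameters, together with the rich supply of atomic orbit relations on $G, G^2, G^3, \ldots$ — in particular the orbit of the triangle $(a, b, c)$ on triples — one aims to build a pp-definable higher-dimensional domain together with a pp-definable equivalence relation whose three classes correspond to $U_a, U_b, U_c$ and whose induced edge structure realizes $\relstr{K}_3$.

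The main obstacle I expect is a subtle pp-definability issue. The naive single-dimensional quotient of $G$ by the $\group{G}$-orbit equivalence $E$ does not immediately yield a pp-interpretation, because $E$ is a union of binary orbits and is typically not pp-definable from atomic orbit relations alone (pp-formulas allow conjunction and existential quantification, but not disjunction). Circumventing this — in a manner parallel to how Bulatov's finite proof sidesteps any single global quotient by working with carefully chosen pp-definable substructures — will require multi-dimensional pp-constructions using the parameters $a, b, c$ and the orbits on longer tuples to encode the three-way partition without ever explicitly invoking $E$. This is the technical heart of the proof, and precisely the point at which the ideas of~\cite{B05} must be genuinely lifted to, rather than simply applied in, the oligomorphic setting.
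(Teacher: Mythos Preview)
Your proposal is not a proof but a plan, and the plan misidentifies both the target construction and the main obstacle.

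First, the target. You suggest that the three orbits $U_a,U_b,U_c$ of the triangle vertices will ``play the role of the three vertices of $\relstr{K}_3$'', and that the task is to pp-define an equivalence whose three classes correspond to these orbits. This is not how the paper's proof (nor Bulatov's finite proof) works, and there is no reason it should work: nothing guarantees that the edge relation respects any pp-definable three-class partition associated to $U_a,U_b,U_c$. The actual route is quite different. After reducing to a \emph{minimal} pseudoloop-free system (Step~0), the paper pp-defines the \emph{diamond-connectedness} equivalence $\sim$ (which \emph{is} pp-definable, via a finite chain of diamonds thanks to oligomorphicity), proves that $R$ modulo $\sim$ stays pseudoloop-free (Steps~2--3), and concludes by minimality that the system is diamond-free (Step~4). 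One then locates the \emph{maximal} power $\relstr{T}_k=\relstr{K}_3^k$ occurring as an induced subgraph, shows that its vertex set $\{a_1,\ldots,a_{3^k}\}$ is pp-definable with those vertices as parameters (Step~6, using diamond-freeness and a structural claim from~\cite{B05}), and finishes because $\relstr{T}_k$ pp-interprets $\relstr{K}_3$ (Step~7). The orbit equivalence $E$ you worry about never enters.

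Second, the obstacle. You flag pp-definability of $E$ as ``the technical heart'', but the genuine new ingredient is elsewhere. Steps~0--4, 6, 7 are adaptations of~\cite{B05}; the step with no finite analogue is Step~5: one must show that $\relstr{G}$ contains no induced $\relstr{T}_k$ once $3^k$ exceeds the number of $\group{G}$-orbits. In the finite proof this is trivial ($3^k>|G|$), but for an oligomorphic $\relstr{G}$ it is not a compactness consequence and requires a new combinatorial argument (the paper's Lemma~\ref{lem:Tk_bound}), involving auxiliary pp-defined relations $S$ and $Q$ and an explicit coordinate computation inside the copy of $\relstr{T}_k$. Your plan contains no hint of this, and without it Step~6 (existence of a \emph{maximal} $k$) fails. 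In short: the gap is not that Bulatov's argument must be ``lifted'' around a definability issue with $E$, but that one specific finiteness step collapses entirely and must be replaced by a substantively new lemma.
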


We need the following auxiliary notation and definitions.

\begin{definition}
Let $(\relstr{G},\group{G})$ be a gg-system. 
For $a_1,\ldots,a_n \in G$, we denote by $O(a_1,\ldots,a_n)$ the orbit of the tuple $(a_1,\ldots,a_n)$ under $\group{G}$.
\end{definition}


\begin{definition}
A pseudoloop-free gg-system $(\relstr{G},\group{G})$ \co{such} that $\relstr{G}$ contains (an isomorphic copy of) $\relstr{K}_3$ is \emph{minimal} if 
\begin{itemize}
\item it does not pp-define any proper subset $S\subsetneq G$ and a binary symmetric relation $R'$ on $S$ such that $R'$ is pseudoloop-free and contains a $\relstr{K}_3$, and
\item it does not pp-define any non-trivial equivalence relation $\sim\subseteq G^2$ and a binary symmetric relation $R'$ on the set $G/\sim$ of its equivalence classes such that $R'$ is pseudoloop-free and contains a $\relstr{K}_3$.
\end{itemize}
\end{definition}

We can now prove the pseudoloop lemma. \lib{Steps 0 to 4, 6, and 7 in the proof roughly follow the reasoning in~\cite{B05} with some adjustments to our setting. The main novelty is in Step 5, where  a new combinatorial argument is required.}

\begin{proof}[Proof of Lemma~\ref{lem:loop}]
Assuming that a gg-system $(\relstr{G},\group{G})$, where $\relstr{G} = (G;R)$ contains a $\relstr{K}_3$, has no pseudoloop, we show that it {pp}-interprets $\relstr{K}_3$ with parameters. \bigskip

\noindent \textit{Step 0}: If $(\relstr{G},\group{G})$ is not minimal, then we can replace it by a minimal gg-system as follows. When $(\relstr{G},\group{G})$ pp-defines a relation $R'$ on a pp-definable subset $S\subsetneq G$ which contains no pseudoloop but a $\relstr{K}_3$, then we replace $(\relstr{G},\group{G})$ by the system thus obtained on $S$, consisting of $R'$ and the restriction of the action of $\group{G}$ to $S$. If this is not the case, but  $(\relstr{G},\group{G})$ pp-defines a relation on a proper pp-definable factor of $G$ which contains no pseudoloop but a $\relstr{K}_3$, then  we proceed similarly, replacing the original system by the system obtained on the factor set. Iterating this procedure, it can only happen a finite number of times that we move to a system on a subset, since this step strictly decreases the finite number of orbits of the system, and since in factoring steps the number of orbits of the system does not increase either. Hence, from some point in the iteration, only factoring steps occur. But two factoring steps could be performed in a single step by combining the pp-definitions, and thus, since there exists only a finite number of pp-definable equivalence relations in the system, all factoring steps can be combined to a single step, after which the system cannot be further factored. The gg-system thus obtained must therefore be minimal.

Notice that minimality implies that every vertex is contained in an edge: otherwise restriction of $R$ to the (pp-definable) subset of those vertices which are contained in an edge would yield a smaller system.
\bigskip


\noindent \textit{Step 1}: $R$ pp-defines a symmetric binary relation $R'$ with the property that every edge of $R'$ is contained in a $\relstr{K}_3$, i.e.,  every element of $R'$ is contained in an induced  subgraph of $(G;R')$ isomorphic to $\relstr{K}_3$, and which still shares our assumptions on $R$: 
$$
R'(x,y):\leftrightarrow \exists z\; R(x,y)\wedge R(x,z)\wedge R(y,z).
$$
Hence, replacing $R$ by $R'$, we henceforth assume that every edge of $R$ is contained in a $\relstr{K}_3$.\bigskip

\noindent 
In the following, for $n\geq 1$ we say that $x,y\in G$ are \emph{$n$-diamond-connected}, denoted by $x\sim_n y$, if there exist $a_1$, $b_1$, $c_1$, $d_1$, $\ldots$, $a_n$, $b_n$, $c_n$, $d_n\in G$ such that, for every $1\leq i\leq n$, both $a_i,b_i,c_i$ and $b_i,c_i,d_i$ induce $\relstr{K}_3$ in $\relstr{G}$, 
$x=a_1$, $d_1=a_2$, $d_2=a_3$, \dots, $d_{n-1} = a_{n}$, and $d_n=y$. They are \emph{diamond-connected}, denoted by $x\sim y$, if they are $n$-diamond-connected for some $n\geq 1$. 

\MyFig{
  \begin{center}
    \begin{tikzpicture}[scale=1.2]
		 \node[ikrouzek] (a1) at (0,0) {$x=a_1$};
		 \node[krouzek] (b1) at (1,1) {$b_1$};
		 \node[krouzek] (c1) at (1,-1) {$c_1$};
		 \node[krouzek] (d1) at (2,0) {$d_1=a_2$};
		 \draw[myedge] (a1) to (b1); \draw[myedge] (a1) to (c1); \draw[myedge] (b1) to (c1);   \draw[myedge] (b1) to (d1); \draw[myedge] (c1) to (d1);
		 \node[krouzek] (b2) at (3,1) {$b_2$};
		 \node[krouzek] (c2) at (3,-1) {$c_2$};
		 \node[krouzek] (d2) at (4,0) {$d_2=a_3$};
		 \draw[myedge] (d1) to (b2); \draw[myedge] (d1) to (c2); \draw[myedge] (b2) to (c2);   \draw[myedge] (b2) to (d2); \draw[myedge] (c2) to (d2);
		 \draw[myedge] (d2) to (4.7,0.7); \draw[myedge] (d2) to (4.7,-0.7);
		 \node at (5.5,0) {$\cdots$};
		 \node[krouzek] (dn1) at (7,0) {$d_{n-1}=a_n$};
		 \draw[myedge]  (6.3,0.7) to (dn1); \draw[myedge]  (6.3,-0.7) to (dn1);
		 \node[krouzek] (bn) at (8,1) {$b_n$};
		 \node[krouzek] (cn) at (8,-1) {$c_n$};
		 \node[ikrouzek] (dn) at (9,0) {$d_n=y$};
		 \draw[myedge] (dn1) to (bn); \draw[myedge] (dn1) to (cn); \draw[myedge] (bn) to (cn);   \draw[myedge] (bn) to (dn); \draw[myedge] (cn) to (dn);
  \end{tikzpicture}
  \end{center}
  \caption{Diamond connected elements.}
  \label{fig:dimond_con}
}

Observe that $\sim_n$  is a pp-definable relation from $R$ (since our definition is in fact a pp-definition). Also recall that there are only finitely many binary relations first-order definable from $R$, and note that if $x,y$ are $n$-diamond-connected, then they are $m$-diamond-connected for all $m\geq n$. Therefore, there exists an $n\geq 1$ such that $x,y$ are diamond-connected if and only if they are $n$-diamond connected. In particular, the relation $x\sim y$ is pp-definable in $\relstr{G}$. Note also that it is an equivalence relation on $G$: it is clearly transitive and symmetric, and it is reflexive since every vertex is contained in a $\relstr{K}_3$, by Steps~0 and~1.\bigskip

\noindent \textit{Step 2}: We claim that if $x, y\in G$ are $n$-diamond-connected for some $n \geq 1$, then $\neg R(x,y')$ for all $y'\in O(y)$. Otherwise, pick a counterexample $x,y,y'$ with minimal $n\geq 1$. 

Suppose first that $n$ is odd and set $k := \frac{n-1}{2}$. Let $a$ be the $a_{k+1}$ from the chain of diamonds witnessing $x\sim_n y$. 
Consider the following pp-definition over $(\relstr{G},\group{G})$:
$$
S(w):\leftrightarrow \exists u,v\;(u\in O(a)\wedge u\sim_{k} v\wedge R(v,w))\; ;
$$
in case that $k=0$ we replace $\sim_k$ by the equality relation. Then clearly $S(b_n)$ and $S(c_n)$. But we also have $S(y)$, since $S(y')$ holds by virtue of $a\sim_k x$ and $R(x,y')$ and since $y$ is in the same orbit as $y'$. 
Hence, since $d_n=y$, we have $S(d_n)$ and so $S$ contains a $\relstr{K}_3$. 
By the minimality of $(\relstr{G},\group{G})$ (see Step~0), we must have that $S$ holds for all elements of $G$. 
\MyFig{
  \begin{center}
    \begin{tikzpicture}[scale=1]
         \node[krouzek] (yprime) at (-1,0) {$y'$}; 
		 \node[krouzek] (a1) at (0,0) {$x$};
		 \draw[myedge] (yprime) to (a1);
		 \node[krouzek] (b1) at (1,0.7) {$\ $};
		 \node[krouzek] (c1) at (1,-0.7) {$\ $};
		 \node[krouzek] (d1) at (2,0) {$a$};
		 \draw[myedge] (a1) to (b1); \draw[myedge] (a1) to (c1); \draw[myedge] (b1) to (c1);   \draw[myedge] (b1) to (d1); \draw[myedge] (c1) to (d1);
		 \node[krouzek] (b2) at (3,0.7) {$\ $};
		 \node[krouzek] (c2) at (3,-0.7) {$\ $};
		 \node[krouzek] (d2) at (4,0) {$\ $};
		 \draw[myedge] (d1) to (b2); \draw[myedge] (d1) to (c2); \draw[myedge] (b2) to (c2);   \draw[myedge] (b2) to (d2); \draw[myedge] (c2) to (d2);
		 \node[krouzek] (b3) at (5,0.7) {$b_3$};
		 \node[krouzek] (c3) at (5,-0.7) {$c_3$};
		 \node[krouzek] (d3) at (6,0) {$y$};
		 \draw[myedge] (d2) to (b3); \draw[myedge] (d2) to (c3); \draw[myedge] (b3) to (c3);   \draw[myedge] (b3) to (d3); \draw[myedge] (c3) to (d3);
		 \node[krouzek] (u) at (2,-1.9) {$u$};
		 \node[krouzek] (bu) at (3,-1.2) {$\ $};
		 \node[krouzek] (cu) at (3,-2.6) {$\ $};
		 \node[krouzek] (v) at (4,-1.9) {$v$};
		 \draw[myedge] (u) to (bu); \draw[myedge] (u) to (cu); \draw[myedge] (bu) to (cu);   \draw[myedge] (bu) to (v); \draw[myedge] (cu) to (v);
		 \node[ikrouzek] (w) at (5,-1.9) {$w$}; \draw[myedge] (v) to (w);
		 \draw[myedge,dotted] (d1) to[bend right] (u);
  \end{tikzpicture}
  \end{center}
  \caption{The case $n=3$ and the definition of $S$. The dotted line depicts the ``same orbit'' relation.}
  \label{fig:dimond_con_odd}
}
Let $u, v\in G$ as in the definition of $S$ witness that $S(x)$ holds. Then $u\sim _{k} v$, but also $u\sim_{k} x'$ for some $x'\in O(x)$, as $a \sim_k x$, $u \in O(a)$, and $\sim_k$ is invariant under $\group{G}$. Therefore, $v\sim_{n-1} x'$, which together with  $R(v,x)$ contradicts the minimality of $n$ when $n\geq 3$; when $n=1$, this means that we have discovered a pseudoloop of $(\relstr{G},\group{G})$, again a contradiction.

Suppose now that $n$ is even and denote $k := \frac{n}{2}-1$; the argument is similar. 
Let $b,c$ be the $b_{k+1}, c_{k+1}$ from the chain of diamonds witnessing $x\sim_n y$. 
Consider the following pp-definition:
\begin{align*}
S(w):\leftrightarrow \exists & u_b,u_c,u,v\;((u_b,u_c)\in O(b,c)\wedge \\
  & R(u,u_c)\wedge R(u,u_b)\wedge u\sim_{k} v\wedge R(v,w))\; .
\end{align*}
\MyFig{
  \begin{center}
    \begin{tikzpicture}[scale=1]
         \node[krouzek] (yprime) at (-1,0) {$y'$}; 
		 \node[krouzek] (a1) at (0,0) {$x$};
		 \draw[myedge] (yprime) to (a1);
		 \node[krouzek] (b1) at (1,0.7) {$\ $};
		 \node[krouzek] (c1) at (1,-0.7) {$\ $};
		 \node[krouzek] (d1) at (2,0) {$\ $};
		 \draw[myedge] (a1) to (b1); \draw[myedge] (a1) to (c1); \draw[myedge] (b1) to (c1);   \draw[myedge] (b1) to (d1); \draw[myedge] (c1) to (d1);
		 \node[krouzek] (b2) at (3,0.7) {$b$};
		 \node[krouzek] (c2) at (3,-0.7) {$c$};
		 \node[krouzek] (d2) at (4,0) {$\ $};
		 \draw[myedge] (d1) to (b2); \draw[myedge] (d1) to (c2); \draw[myedge] (b2) to (c2);   \draw[myedge] (b2) to (d2); \draw[myedge] (c2) to (d2);
		 \node[krouzek] (b3) at (5,0.7) {$\ $};
		 \node[krouzek] (c3) at (5,-0.7) {$\ $};
		 \node[krouzek] (d3) at (6,0) {$\ $};
		 \draw[myedge] (d2) to (b3); \draw[myedge] (d2) to (c3); \draw[myedge] (b3) to (c3);   \draw[myedge] (b3) to (d3); \draw[myedge] (c3) to (d3);
		 \node[krouzek] (b4) at (7,0.7) {$b_4$};
		 \node[krouzek] (c4) at (7,-0.7) {$c_4$};
		 \node[krouzek] (d4) at (8,0) {$y$};
		 \draw[myedge] (d3) to (b4); \draw[myedge] (d3) to (c4); \draw[myedge] (b4) to (c4);   \draw[myedge] (b4) to (d4); \draw[myedge] (c4) to (d4);
		 \node[krouzek] (bu) at (3,-1.3) {$u_b$};
		 \node[krouzek] (cu) at (3,-2.7) {$u_c$};
		 \node[krouzek] (v) at (4,-2) {$\ $};
		 \draw[myedge] (bu) to (cu);   \draw[myedge] (bu) to (v); \draw[myedge] (cu) to (v);
		 \node[krouzek] (bx) at (5,-1.3) {$\ $};
		 \node[krouzek] (cx) at (5,-2.7) {$\ $};
		 \node[krouzek] (dx) at (6,-2) {$v$};
		 \draw[myedge] (v) to (bx); \draw[myedge] (v) to (cx); \draw[myedge] (bx) to (cx);   \draw[myedge] (bx) to (dx); \draw[myedge] (cx) to (dx);
		 		 \node[ikrouzek] (w) at (7,-2) {$w$}; \draw[myedge] (dx) to (w);
		 \draw[myedge,dotted] (2.9,0) to[bend right,looseness=1.4] (2.9,-2);
  \end{tikzpicture}
  \end{center}
  \caption{The case $n=4$ and the definition of $S$. The dotted line depicts the ``same orbit'' relation.}
  \label{fig:dimond_con_even}
}
Then as in the odd case, $S(b_n), S(c_n), S(d_n)$, and so the set defined by $S$ contains a $\relstr{K}_3$. 
By the minimality of $(\relstr{G},\group{G})$, it contains $x$; let $u_b,u_c,u,v\in G$ as in the definition of $S$ witness this. \co{Then $u\sim _{k} v$ by the definition of $S$. Moreover,  
 $u\sim _{k+1} x'$ for some $x'\in O(x)$; this is the case since $u\sim_{1} d_{k}'$ for some $d_{k}'\in O(d_{k})$, and since $d_{k}'\sim_{k} x'$ for some $x'\in O(x)$.} Hence, $v\sim_{n-1} x'$ and $R(v,x)$ contradict the minimality of $n$.\bigskip
 
 \noindent \textit{Step 3}: Defining 
 $$
 R'(x,y):\leftrightarrow \exists x', y'\; (x\sim x'\wedge y\sim y'\wedge R(x',y'))
 $$
 we obtain a relation $R'\supseteq R$ which does not contain a pseudoloop. Indeed,
if $R'(x,y)$ is witnessed by $x',y'$ and $x$ and $y$ are in the same orbit, then 
$x \sim y''$ for some $y'' \in O(y')$ since $y\in O(x)$ and since $\sim$ is invariant under $\group{G}$. Thus $x' \sim y''$ and $R(x',y')$, a contradiction with Step~2.
Moreover, every edge in $R'$ is contained in a $\relstr{K}_3$: if $z'$ is so that $\{x',y',z'\}$ induce a $\relstr{K}_3$ in $R$, 
then $\{x,y,z'\}$ induce a $\relstr{K}_3$ in $R'$, for $z'\sim z'$ and $x\sim x'$ imply $R'(x,z')$, and $R'(y,z')$ can be inferred  similarly. 
\bigskip

\noindent \textit{Step 4}: Note that $R'$ is in fact a relation between equivalence classes of $\sim$ and the naturally defined quotient gg-system $(\relstr{G}^q,\group{G}^q)$ on $G^q = G/\sim$ contains no pseudoloops (by Step~3). 
Moreover, every edge of $\relstr{G}^q$ is still contained in a $\relstr{K}_3$. By minimality, $\sim$ must therefore be the equality relation. This means that $R$ contains no \emph{diamonds}, that is, there do not exist distinct $a,b,c,d \in G$ such that $\{a,b,c\}$ and $\{b,c,d\}$ both induce a $\relstr{K}_3$.

Summarizing we now know that our gg-system $(\relstr{G},\group{G})$ is minimal, pseudoloop-free, and diamond-free; moreover, every edge and every vertex  of $\relstr{G}$ is contained in a $\relstr{K}_3$. 
\bigskip

\noindent \textit{Step 5}: For $k\geq 1$, we denote the $k$-th power of $\relstr{K}_3$ by $\relstr{T}_k$.
By Lemma~\ref{lem:Tk_bound} shown below, $\relstr{G}$ contains no induced subgraph isomorphic to $\relstr{T}_k$ for any $k\geq 1$ such that $|T_k|$ exceeds the number of orbits of $\group{G}$.


\bigskip

\noindent \textit{Step 6}: Recall that $\relstr{G}$ contains $\relstr{T}_1 = \relstr{K}_3$. By Step 5, there exists a maximal $k\geq 1$ such that $\relstr{G}$ contains an induced subgraph isomorphic to $\relstr{T}_k$.
Let $k$ be that number and let $a_1, \dots, a_l$, where $l := |T_k| = 3^{k}$, denote the vertices of such an induced subgraph. 
We show that $\relstr{G}$ pp-defines the set $A = \{a_1, \dots, a_l\}$ with parameters $a_1,\ldots,a_l$. 

By~\cite{BodirskyNesetrilJLC}, this is the case if each $l$-ary operation $f$ in $\Pol(\relstr{G},a_1, \dots, a_l)$ preserves $A$. So, suppose that such a function $f$ does not preserve $A$.
Now, $f$ is a homomorphism $\relstr{G}^l \To \relstr{G}$ and its restriction to \co{$A^l$} is a homomorphism $f'$ from \co{$(\relstr{T}_k)^l$} to the diamond-free graph $\relstr{G}$ whose image, which contains $A$ because $f$ stabilizes each $a_i$, is strictly larger than $|T_k|$. Then~\cite[Claim 3, Subsection 3.2]{B05} shows that the image of $f'$ induces a graph isomorphic to $\relstr{T}_m$ for some $m > k$, contradicting the maximality of $k$.\bigskip

\noindent \textit{Step 7}: Step 6 implies that $\relstr{G}$ pp-interprets $\relstr{T}_k$ with parameters. But $\relstr{K}_3$ can be pp-interpreted in $\relstr{T}_k$ with parameters by the final sentence of~\cite{B05}.
\end{proof}


It remains to prove the lemma referred to in Step~5 of the proof of Lemma~\ref{lem:loop} above. It might be worth noting  that the essential conclusion of Step~5, namely that $\relstr{G}$ can be assumed not to contain an isomorphic copy of $\relstr{T}_k$ for $k$ large enough, is trivial in the proof of the finite loop lemma in~\cite{B05} for cardinality reasons: it is certainly true for all $k$ such that $|T_k|>|G|$. Lifting this statement to the $\omega$-categorical setting seems to require a non-trivial combinatorial argument rather than, for example, a simple compactness argument. This is accomplished in the following lemma.

\begin{lemma}\label{lem:Tk_bound}
Let $(\relstr{G},\group{G})$, where 
$\relstr{G}=(G;R)$, 
be a minimal (hence pseudoloop-free)  oligomorphic gg-system. Assume moreover that  
 every edge of $\relstr{G}$ is contained in a $\relstr{K}_3$. Then $\relstr{G}$ does not contain any induced subgraph isomorphic to $\relstr{T}_k$ where $k\geq 1$ is such that  $|T_k|=3^{k}$  exceeds the number of orbits of $\group{G}$.

\ignore{
Let $(\relstr{G},\group{G})$, where 
$\relstr{G}=(G;R)$, 
be a minimal (hence pseudoloop-free)  oligomorphic gg-system. Assume moreover that  
\begin{itemize}
\item every edge of $\relstr{G}$ is contained in a $\relstr{K}_3$, and 
\item $\relstr{G}$ contains an induced subgraph isomorphic to $\relstr{T}_k$ for some $k\geq 1$ such that  $|T_k|=3^{k}$  exceeds the number of orbits of $\group{G}$.
\end{itemize}
Then $(\relstr{G},\group{G})$ 
pp-defines a symmetric relation $R'\supsetneq R$ without pseudoloops such that every edge of $R'$ is contained in a $\relstr{K}_3$.\bigskip
}
\end{lemma}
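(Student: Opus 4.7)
The plan is to argue by contradiction: suppose that $\relstr{G}$ contains an induced subgraph isomorphic to $\relstr{T}_k$ on vertex set $V=\{v_\sigma : \sigma\in\{0,1,2\}^k\}\subseteq G$ with $|V|=3^k$ strictly exceeding the number of orbits of $\group{G}$ on $G$. Pigeonhole on orbits then yields two distinct indices $\sigma\neq\tau$ such that $v_\sigma$ and $v_\tau$ lie in the same $\group{G}$-orbit, and I fix $\pi\in\group{G}$ with $\pi(v_\sigma)=v_\tau$. Since any two distinct vertices of $\relstr{T}_k$ are at graph distance at most $2$, I would split the argument into two subcases, according to whether $\sigma$ and $\tau$ differ in every coordinate or agree in at least one.

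In the easy subcase, $\sigma$ and $\tau$ differ in every coordinate, so $v_\sigma$ and $v_\tau$ are adjacent in the embedded $\relstr{T}_k$, hence in $\relstr{G}$; then $R(v_\sigma,\pi(v_\sigma))$ is a pseudoloop of $(\relstr{G},\group{G})$, contradicting the pseudoloop-freeness implied by minimality. The hard subcase is when $\sigma$ and $\tau$ agree in exactly $m\ge 1$ coordinates: then $v_\sigma$ and $v_\tau$ are non-adjacent in $\relstr{T}_k$ but share exactly $2^m$ common neighbors there. My plan here is to exploit this common-neighborhood multiplicity together with the pp-definability of orbits in a gg-system. Starting from the distance-$2$ configuration $R(v_\sigma,c)\wedge R(v_\tau,c)$ witnessed by a common neighbor $c\in V$, I would pp-define a new symmetric binary relation $R'$ on $G$ from $R$ together with orbit predicates of tuples drawn from the product structure of $\relstr{T}_k$ (possibly also involving iterates $\pi^i$), and show that $R'$ either produces a pseudoloop outright---by turning an orbit collision into an actual adjacency once enough redundancy is present---or pp-defines a proper subset or a non-trivial equivalence on $G$ on which a pseudoloop-free symmetric relation containing a $\relstr{K}_3$ persists, thereby contradicting minimality. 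Here the hypothesis that every edge of $\relstr{G}$ lies in a $\relstr{K}_3$ is used to guarantee that the constructed $R'$ still inherits this property, and oligomorphicity caps any iteration because only finitely many binary pp-definable relations are available.

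The main obstacle is exactly this hard subcase. In the finite loop lemma of~\cite{B05} it is completely avoided by the cardinality pigeonhole $|T_k|>|G|$, which forces two \emph{identical} vertices in the embedded $\relstr{T}_k$ and therefore produces adjacency for free. In the $\omega$-categorical setting this shortcut is gone: two vertices of $V$ in the same orbit are only related by an element of $\group{G}$, not literally equal, and we have to convert that weaker orbit collision into either a genuine pseudoloop or a genuine pp-definable reduction of $(\relstr{G},\group{G})$. I expect the combinatorial novelty flagged by the authors to consist precisely in the way the product structure of $\relstr{T}_k$, the $2^m$-fold redundancy of its common neighborhoods, and the triangle-richness of $\relstr{G}$ are combined with orbit predicates to force such a conversion.
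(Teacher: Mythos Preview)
Your setup (contradiction, pigeonhole on orbits, the easy subcase where $\sigma$ and $\tau$ differ in every coordinate) is fine and matches the paper. The gap is the ``hard subcase'': what you wrote there is not a proof but a wish-list, and the concrete mechanisms you gesture at --- the $2^m$-fold multiplicity of common neighbours and iterates $\pi^i$ --- play no role in the paper's argument and do not obviously lead anywhere. The paper's proof does not attempt to manufacture a pseudoloop directly, nor does it shrink to a subset or factor by an equivalence; instead it \emph{enlarges} $R$ to a pp-definable symmetric relation $R'\supseteq R$ which is still pseudoloop-free and triangle-rich but now contains a \emph{diamond}, contradicting the diamond-freeness that minimality forces (via Steps~1--4 of the proof of the pseudoloop lemma).

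Concretely, fix a triangle $\{\tuple a,\tuple b,\tuple c\}=\{1^k,2^k,3^k\}$ in the embedded $\relstr T_k$, let $A,B,C$ be their orbits, and let $\tuple a'\in A$ be the second vertex in the orbit of $\tuple a$. The paper pp-defines
\[
S(u,v):\ \text{$u,v$ have common $R$-neighbours adjacent to $A$, to $B$, and to $C$,}
\]
proves that $S$ is reflexive (by minimality: the set of second-$R$-neighbours of $A$ already contains a $\relstr K_3$, hence equals $G$), and then sets $Q(u,v):\leftrightarrow\exists s\,(R(u,s)\wedge S(s,v))\wedge\exists t\,(S(u,t)\wedge R(t,v))$. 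Minimality again is used to show $Q$ has no pseudoloop: if $R(u,v)$ and $S(u,v')$ with $v'\in O(v)$, one exhibits a vertex $z$ whose orbit-neighbourhood is a proper pp-definable subset containing the triangle on $O(u),O(v),O(w)$. The genuinely new combinatorial step is then to verify, by explicit witnesses built from the product coordinates of $\relstr T_k$, that $Q(\tuple a',\tuple b)$ and $Q(\tuple a',\tuple c)$ hold; taking $R'$ to be the triangle-closure of $Q$ yields the forbidden diamond on $\{\tuple a,\tuple a',\tuple b,\tuple c\}$. None of this is visible in your sketch, and the route you propose (exploiting the number of common neighbours, or powers of a single permutation $\pi$) does not supply these witnesses.
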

\begin{proof}
Assume to the contrary that the lemma does not hold, and fix a copy of $\relstr{T}_k$ in $\relstr{G}$, the elements of which we denote by tuples in $\{1,2,3\}^k$. So, two vertices in $\{1,2,3\}^k$ are adjacent if and only if they differ in every coordinate. 
From the cardinality assumption, we can pick two elements $\tuple{a},\tuple{a}'$ of the copy that belong to the same orbit $A$. Let $\tuple{b},\tuple{c}$ in the copy be so that $\{\tuple{a},\tuple{b},\tuple{c}\}$ induce a $\relstr{K}_3$, and let $B,C$ be their orbits. Since $(\relstr{G},\group{G})$ has no pseudoloops, the three orbits $A,B,C$ are distinct. Without loss of generality, assume $\tuple{a}=1^k$ (i.e., the $k$ tuple all of whose entries equal $1$), $\tuple{b}=2^k$, and $\tuple{c}=3^k$.

 Define a relation
\begin{align*}
	S(u,v):\leftrightarrow \exists & a'',b'',c'', n_A,n_B,n_C \; \\
	&(R(u,n_A)\wedge R(v,n_A)\wedge R(n_A,a'')\wedge a''\in A\; \wedge\\
	& R(u,n_B)\wedge R(v,n_B)\wedge R(n_B,b'')\wedge b''\in B\; \wedge\\
	& R(u,n_C)\wedge R(v,n_C)\wedge R(n_C,c'')\wedge c''\in C)\; .
\end{align*}
In words, $u,v$ have common neighbors adjacent to elements in $A, B,$ and $C$. 

\MyFig{
  \begin{center}
    \begin{tikzpicture}[scale=1]
         \node[ikrouzek] (u) at (0,2) {$u$}; 
		 \node[ikrouzek] (v) at (0,1) {$v$};
		 \node[krouzek] (nc) at (1.5,0.5) {$n_C$};
		 \node[krouzek] (nb) at (1.5,1.5) {$n_B$};
		 \node[krouzek] (na) at (1.5,2.5) {$n_A$};
		 \node[krouzek] (c) at (3,0.5) {$c'' \in C$};
		 \node[krouzek] (b) at (3,1.5) {$b'' \in B$};
		 \node[krouzek] (a) at (3,2.5) {$a'' \in A$};
		 \draw[myedge] (u) to (na); \draw[myedge] (u) to (nb); \draw[myedge] (u) to (nc);
		 \draw[myedge] (v) to (na); \draw[myedge] (v) to (nb); \draw[myedge] (v) to (nc);
		 \draw[myedge] (a) to (na); \draw[myedge] (b) to (nb); \draw[myedge] (c) to (nc);
		 \node[ikrouzek] (u2) at (6,2.5) {$u$};
		 \node[ikrouzek] (v2) at (8,0.5) {$v$};
		 \node[krouzek] (s) at (8,2.5) {$s$};
		 \node[krouzek] (t) at (6,0.5) {$t$};
		 \draw[myedge] (u2) to node[above] {$R$} (s);
		 \draw[myedge] (t) to node[above] {$R$} (v2);
		 \draw[myedge] (s) to node[left] {$S$} (v2);
		 \draw[myedge] (t) to node[left] {$S$} (u2);
  \end{tikzpicture}
  \end{center}
  \caption{The definition of $S$ and $Q$.}
  \label{fig:def_SQ}
}

The relation $S$ is obviously symmetric. It is also reflexive. To see this, it suffices to observe that 
 every element in $G$ 
 is a neighbor of a neighbor of an element in $A$, and similarly in $B$ and $C$. But the latter follows from  the minimality of $(\relstr{G},\group{G})$: otherwise, we could restrict $R$ to neighbors of neighbors of $A$, a set which contains $A\cup B\cup C$, and thus  obtain a smaller gg-system containing a $\relstr{K}_3$, namely the one induced by $\{\tuple{a},\tuple{b},\tuple{c}\}$. 

Observe that whenever $S(u,v)$ holds, then every element of $G$ is adjacent to a common neighbor of $O(u)$ and $O(v)$: this follows as above from the minimality of $(\relstr{G},\group{G})$ since the elements of  $A\cup B \cup C$ are adjacent to a common neighbor of $O(u)$ and $O(v)$.

Set
$$
Q(u,v):\leftrightarrow \exists s\; (R(u,s)\wedge S(s,v))\; \wedge\; \exists t\; (S(u,t)\wedge R(t,v))\; .
$$
Then $Q\supseteq R$: since $S$ is reflexive, setting $s=v$ and $t=u$ in the above definition shows that $R(u,v)$ implies $Q(u,v)$. Moreover, $Q$ is symmetric by definition. Let $R'$ consist of those edges of $Q$ which are contained in a $\relstr{K}_3$ with respect to $Q$. We still have that $R'\supseteq R$.

We now show that $(Q,\group{G})$, and thus $(R',\group{G})$, has no pseudoloop. To this end, it suffices to show that whenever $R(u,v)$ holds, then we cannot have $S(u,v')$ for any $v'\in O(v)$. Suppose to the contrary that there exist such elements. 
The $R$-edge $(u,v)$ is contained in a $\relstr{K}_3$, induced by $\{u,v,w\}$, for some $w\in G$.
As observed above, each vertex, in particular the vertex $w$, is adjacent to a common neighbor of $O(u)$ and $O(v')=O(v)$. 
Therefore, there exists a common neighbor $z$ of $O(u)$, $O(v)$ and $O(w)$. 
The set of neighbors of $O(z)$ contains $O(u), O(v)$, and $O(w)$; it is a proper subset of $G$ since $(\relstr{G},\group{G})$ has no pseudoloops; it is pp-definable in $(\relstr{G},\group{G})$; and finally, it contains a $\relstr{K}_3$, contradicting the minimality of $(\relstr{G},\group{G})$.

We claim that $\tuple{a}'$, the second element of the copy of $\relstr{T}_k$ in the orbit $A$ of $\tuple{a}$, is related to $\tuple{b}$ and $\tuple{c}$ via $R'$. This implies that $R'$ contains a diamond as an induced subgraph, a contradiction since the minimality of the system $(\relstr{G},\group{G})$ does imply diamond-freeness of $R$, and similarly of $R'$, by Steps~1 to~4 of the proof of Lemma~\ref{lem:loop}.

To prove our claim, we only show  $R'(\tuple{a}',\tuple{b})$, the second claim is analogous. 
Reordering the tuples when necessary, we may assume that $a'_i\neq 2$ for all $1\leq i\leq j$, and $a'_i= 2$ for all $j<i\leq k$. Since $\tuple{a}'\neq \tuple{b}$, we have $j\geq 1$. 
\lib{
Observe that whenever $\tuple{u}, \tuple{v} \in \{1,2,3\}^k$ are of the form $(x_1,\ldots,x_j,2,\ldots,2)$ and $(x_1,\ldots,x_j,3,\ldots,3)$, respectively, then $S(\tuple{u},\tuple{v})$: this is witnessed by their common neighbor $(y_1,\ldots,y_j,1,\ldots,1)$, where $y_i\notin\{a'_i,x_i\}$ for all $1\leq i\leq j$, which is $R$-related to $\tuple{a}'\in A$; their common neighbor $(z_1,\ldots,z_j,1,\ldots,1)$, where  $z_i\notin\{2,x_i\}$, which is $R$-related to $\tuple{b}\in B$; and their common neighbor $(w_1,\ldots,w_j,1,\ldots,1)$, where $w_i\notin\{3,x_i\}$, which is $R$-related to $\tuple{c}\in C$. See Figure~\ref{fig:final_SQ}.}
But now we see that $Q(\tuple{a}',\tuple{b})$ holds: setting $\tuple{t} := (a_1',\ldots,a_j',3,\ldots,3)$, 
we have $S(\tuple{a}',\tuple{t})$ and $R(\tuple{t},\tuple{b})$; on the other hand, setting $\tuple{s}:=(2,\ldots,2,3,\ldots,3)$, with $j$ occurrences of $2$, we have $R(\tuple{a}',\tuple{s})$ and $S(\tuple{s},\tuple{b})$. \co{See again  Figure~\ref{fig:final_SQ}.} We can then conclude that $R'(\tuple{a}',\tuple{b})$ holds, since any two elements of $\{1,2,3\}^k$, in particular $\tuple{a}'$ and $\tuple{b}$, have a common neighbor with respect to $R$, and hence also with respect to $Q$, showing that the $Q$-edge $(\tuple{a}',\tuple{b})$ is contained in a $\relstr{K}_3$ with respect to $Q$. 
\MyFig{
  \begin{center}
    \begin{tikzpicture}[scale=1]
         \node[krouzek] (u) at (0,2) {$x_1\dots x_j2\dots 2$}; 
		 \node[krouzek] (v) at (0,1) {$x_1 \dots x_j3\dots 3$};
		 \node[krouzek] (na) at (3,2.5) {$y_1\dots y_j1 \dots 1$};
		 \node[krouzek] (nb) at (3,1.5) {$z_1\dots z_j1\dots 1$};
		 \node[krouzek] (nc) at (3,0.5) {$w_1 \dots w_j1 \dots 1$};
		 \node[krouzek] (a) at (6,2.5) {$a_1' \dots a_j'2\dots 2$};
		 \node[krouzek] (b) at (6,1.5) {$2\dots 2 2\dots 2$};
		 \node[krouzek] (c) at (6,0.5) {$3\dots 3 3\dots 3$};
		 \draw[myedge] (u) to (na); \draw[myedge] (u) to (nb); \draw[myedge] (u) to (nc);
		 \draw[myedge] (v) to (na); \draw[myedge] (v) to (nb); \draw[myedge] (v) to (nc);
		 \draw[myedge] (a) to (na); \draw[myedge] (b) to (nb); \draw[myedge] (c) to (nc);
 		 \node[krouzek] (u2) at (9,2.5) {$a_1'\dots a_j'2 \dots 2$};
		 \node[krouzek] (v2) at (12,0.5) {$2\dots2 2 \dots 2$};
		 \node[krouzek] (s) at (12,2.5) {$2 \dots 2 3 \dots 3$};
		 \node[krouzek] (t) at (9,0.5) {$a_1' \dots a_j' 3 \dots 3$}; 
		 \draw[myedge] (u2) to node[above] {$R$} (s);
		 \draw[myedge] (t) to node[above] {$R$} (v2);
		 \draw[myedge] (s) to node[left] {$S$} (v2);
		 \draw[myedge] (t) to node[left] {$S$} (u2);
		 \end{tikzpicture}
  \end{center}
  \caption{Arguments that $S(x_1 \dots x_j 2 \dots 2,x_1 \dots x_j 3 \dots 3)$ and $Q(\tuple{a}',\tuple{b})$.}
  \label{fig:final_SQ}
}
\end{proof}

\section{The main result} \label{sect:main}



In order to derive Theorem~\ref{thm:main}, we will produce pseudo-Siggers operations locally using the pseudoloop lemma, and then derive a global pseudo-Siggers operation via a compactness argument.

\begin{definition}
We say that a function clone $\clone{C}$ has \emph{local pseudo-Siggers operations} if for every finite $A\subseteq C$ there exists a $6$-ary $s\in\clone{C}$ and unary $\alpha, \beta \in\clone{C}$ satisfying 
\[
\alpha s(x,y,x,z,y,z) = \beta s(y,x,z,x,z,y)
\]
for all $x,y,z \in A$.
\end{definition}

\begin{lemma}\label{lem:localglobal}
Let $\clone{C}$ be a closed oligomorphic function clone. If it has local pseudo-Siggers operations, then it has a pseudo-Siggers operation.
\end{lemma}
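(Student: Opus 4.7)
The plan is a diagonal compactness argument. Fix an enumeration $C=\{c_1,c_2,\ldots\}$ of the domain and set $A_n=\{c_1,\ldots,c_n\}$. For each $n\geq 1$, the hypothesis yields a triple $(s_n,\alpha_n,\beta_n)$ with $s_n$ a $6$-ary member of $\clone{C}$ and $\alpha_n,\beta_n$ unary in $\clone{C}$ such that
\[
\alpha_n(s_n(x,y,x,z,y,z))=\beta_n(s_n(y,x,z,x,z,y))\quad\text{for all }x,y,z\in A_n.
\]
The goal is to extract a subsequence converging pointwise on $C$ to a triple $(s,\alpha,\beta)$; closedness of $\clone{C}$ will put the limit back into $\clone{C}$, and since each instance of the global pseudo-Siggers identity involves only the three variables $x,y,z$ -- hence only finitely many points, which eventually lie in some $A_n$ -- the identity will pass to the limit.

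The central obstacle is that the ambient space of functions $C^k\to C$ under pointwise convergence is not compact, so a subsequence of $(s_n,\alpha_n,\beta_n)$ need not converge a priori. This is overcome by exploiting the freedom to modify the triples by automorphisms: for any $\gamma,\delta\in\clgr{\clone{C}}$, the triple
\[
(\gamma\circ s_n,\;\delta\circ\alpha_n\circ\gamma^{-1},\;\delta\circ\beta_n\circ\gamma^{-1})
\]
again consists of operations in $\clone{C}$ of the correct arities, and a one-line substitution (using $\gamma^{-1}\circ\gamma=\mathrm{id}$) shows it still witnesses the identity on $A_n$. Thus, independently, one may post-compose $s_n$ with any automorphism (at the cost of a compensating pre-composition inside the outer functions) and post-compose $\alpha_n,\beta_n$ with another automorphism. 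The key finiteness then comes from oligomorphicity: $\clgr{\clone{C}}$ has only finitely many orbits on $C^k$ for every $k\geq 1$.

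With this setup a standard diagonal argument completes the proof. Enumerate the countably many evaluation sites of interest -- all $6$-tuples from $C^6$ on which we wish to stabilize $s$, and all singletons on which we wish to stabilize $\alpha$ and $\beta$ -- and process them one at a time. At stage $k$, pass to an infinite subsequence along which the relevant output (or pair of outputs, in the case of the shared modification of $\alpha_n$ and $\beta_n$) lies in a single orbit of the pointwise stabilizer in $\clgr{\clone{C}}$ of the values already fixed at earlier stages, and use an element of that stabilizer to move the current output to a chosen representative while preserving all previously stabilized values. Diagonalizing across stages yields a subsequence of suitably modified triples that is eventually constant on every finite set of inputs, and hence converges pointwise to a triple $(s,\alpha,\beta)\in \clone{C}$ that satisfies the pseudo-Siggers identity on all of $C$. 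The only genuine delicacy is the simultaneous bookkeeping of the modifications on the three components $s_n,\alpha_n,\beta_n$; beyond that, no idea is required past the standard $\omega$-categorical compactness trick.
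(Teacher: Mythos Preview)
Your proposal is correct and uses the same compactness-via-oligomorphicity idea as the paper. The one difference worth noting is that the paper decouples the two stabilizations: it first stabilizes $s$ alone (via $s_i\mapsto\gamma s_i$, ignoring the unary witnesses during this phase), observes that the limit $s$ still admits local witnesses $(\alpha_i,\beta_i)$ on each $A_i$ since $s$ eventually agrees there with some modified $s_j$, and only then runs a second, independent compactness pass on the pairs $(\alpha_i,\beta_i)$ via $(\alpha_i,\beta_i)\mapsto(\delta\alpha_i,\delta\beta_i)$. This two-phase structure eliminates precisely the bookkeeping delicacy you flag: in your simultaneous scheme, each $\gamma$ applied to pin a new value of $s_n$ also pre-composes $\alpha_n,\beta_n$ by $\gamma^{-1}$, which can disturb their values at evaluation points already processed unless $\gamma$ is additionally required to fix those \emph{input points} (not just the previously fixed output values) --- a constraint your description does not make explicit, though it is easily added and still leaves an oligomorphic stabilizer to work with. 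The paper's separation of concerns sidesteps this interaction entirely.
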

\begin{proof}
Let $A_0\subseteq A_1\subseteq\cdots$ be a sequence of finite subsets of $C$ whose union equals $C$, and pick for every $i\in \omega$ a $6$-ary operation $s_i\in\clone{C}$ witnessing the definition of local pseudo-Siggers operations on $A_i$, i.e., there exist unary $\alpha_i,\beta_i\in\clone{C}$ such that $\alpha_i s_i(x,y,x,z,y,z) = \beta_i s_i(y,x,z,x,z,y)$ for all $x,y,z\in A_i$. 
 Note that if  $s_i$ is such a witness for $A_i$, then so is $\gamma s_i$, for all $\gamma\in\clgr{\clone{C}}$. Hence, because $\clgr{\clone{C}}$ is oligomorphic,  we may thin out the sequence in such a way that $s_j$ agrees with $s_i$ on $A_i$, for all $j>i\geq 0$. We briefly describe this standard compactness argument for the convenience of the reader: there exists a smallest $j_0\geq 0$ such that for infinitely many $k\geq j_0$ there exists $\gamma_k\in\clgr{\clone{C}}$ such that $\gamma_k s_k$ agrees with $s_{j_0}$ on $A_0$, by oligomorphicity. Replace $s_0$ by $s_{j_0}$, all $s_k$ as above by $\gamma_k s_k$, and remove all other $s_{k'}$ where $k'\geq 0$ from the sequence. Next repeat this process picking $j_1\geq 1$ for $A_1$, and so on. This completes the argument.
 
Since the elements of the sequence $(s_i)_{i\in\omega}$ agree on every fixed $A_i$ eventually, and since $\clone{C}$ is closed, they converge to a function $s\in \clone{C}$. The function $s$, restricted to any $A_i$, witnesses local pseudo-Siggers operations on $A_i$, i.e., there exist unary $\alpha_i,\beta_i\in\clone{C}$ such that $\alpha_i s(x,y,x,z,y,z) = \beta_i s(y,x,z,x,z,y)$ for all $x,y,z\in A_i$. By a similar compactness argument as the one above for the functions $s_i$, this time applied to the pairs $(\alpha_i,\beta_i)$ of functions, we may assume that these pairs converge to a pair $(\alpha,\beta)$ of functions $\alpha,\beta\in\clone{C}$. We then have  $\alpha s(x,y,x,z,y,z) = \beta s(y,x,z,x,z,y)$ for all $x,y,z\in C$.
\end{proof}

We now consider gg-systems where the group $\clgr{\clone{C}}$ of a closed oligomorphic function clone $\clone{C}$ acts on finite powers of its domain.

\begin{lemma}\label{lem:1}
Let $\clone{C}$ be a closed oligomorphic function clone. Suppose that every gg-system $(\relstr{G},\group{G})$ where
\begin{itemize}
\item $\relstr{G}=(C^k;R)$ for some $k\geq 1$,
\item $\group{G}$ corresponds to the componentwise action of $\clgr{\clone{C}}$ on $C^k$,
\item $\relstr{G}$ contains a  $\relstr{K}_3$, and 
\item $R\subseteq C^{2k}$ is invariant under $\clone{C}$
\end{itemize}
has a pseudoloop. Then $\clone{C}$ has a pseudo-Siggers operation.
\end{lemma}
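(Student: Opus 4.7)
The plan is to apply Lemma~\ref{lem:localglobal} and thereby reduce the statement to producing \emph{local} pseudo-Siggers operations on finite subsets $A\subseteq C$. Fixing such an $A$ (which we may enlarge to ensure $|A|\geq 2$), the task becomes to find a $6$-ary $s\in\clone{C}$ and unary $\alpha,\beta\in\clone{C}$ satisfying $\alpha s(x,y,x,z,y,z) = \beta s(y,x,z,x,z,y)$ for every $x,y,z\in A$. The idea is to set up a gg-system on a suitable power $C^n$ whose relation is rigged so that it contains a $\relstr{K}_3$ by design, and so that any pseudoloop encodes the desired local pseudo-Siggers identity; the hypothesis of the lemma then supplies the pseudoloop.

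Concretely, enumerate $A^3 = \{(x_i,y_i,z_i) : 1\leq i\leq n\}$ with $n := |A|^3$ and form the three ``column'' tuples
\[
\tuple{p}_1 := (x_1,\ldots,x_n),\quad \tuple{p}_2 := (y_1,\ldots,y_n),\quad \tuple{p}_3 := (z_1,\ldots,z_n)
\]
in $C^n$. I would define a binary relation $R$ on $C^n$ by declaring $R(\tuple{u},\tuple{v})$ to hold iff there exists a $6$-ary $f\in\clone{C}$ with $\tuple{u} = f(\tuple{p}_1,\tuple{p}_2,\tuple{p}_1,\tuple{p}_3,\tuple{p}_2,\tuple{p}_3)$ and $\tuple{v} = f(\tuple{p}_2,\tuple{p}_1,\tuple{p}_3,\tuple{p}_1,\tuple{p}_3,\tuple{p}_2)$. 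The clone composition law makes $R$ invariant under the componentwise action of $\clone{C}$ on $C^n$; replacing $f$ by $f'(a,b,c,d,e,g):=f(b,a,d,c,g,e)$, which is still in $\clone{C}$, shows that $R$ is symmetric. With $\group{G}$ taken to be the componentwise action of $\clgr{\clone{C}}$ on $C^n$, the resulting gg-system meets the first, second, and fourth hypothesis of the lemma.

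For the $\relstr{K}_3$ hypothesis, letting $f$ be the first, third, and fifth $6$-ary projection in turn produces the edges $R(\tuple{p}_1,\tuple{p}_2)$, $R(\tuple{p}_1,\tuple{p}_3)$, and $R(\tuple{p}_2,\tuple{p}_3)$; and $\tuple{p}_1,\tuple{p}_2,\tuple{p}_3$ are pairwise distinct since $A^3$ is enumerated completely and $|A|\geq 2$. Hence $\{\tuple{p}_1,\tuple{p}_2,\tuple{p}_3\}$ induces a $\relstr{K}_3$ in the system. Invoking the hypothesis of the lemma then yields a pseudoloop: there exist $\tuple{p}\in C^n$, $\gamma\in\clgr{\clone{C}}$, and --- after possibly swapping the two endpoints of the pseudoloop edge, which only swaps the eventual roles of $\alpha$ and $\beta$ --- a $6$-ary $f\in\clone{C}$ with
\[
\gamma\, f(\tuple{p}_1,\tuple{p}_2,\tuple{p}_1,\tuple{p}_3,\tuple{p}_2,\tuple{p}_3) = f(\tuple{p}_2,\tuple{p}_1,\tuple{p}_3,\tuple{p}_1,\tuple{p}_3,\tuple{p}_2).
\]
Reading this equality coordinate by coordinate over $i=1,\ldots,n$ and using that $(x_i,y_i,z_i)$ ranges over all of $A^3$ gives $\gamma f(x,y,x,z,y,z) = f(y,x,z,x,z,y)$ for every $x,y,z\in A$; thus $s:=f$, $\alpha:=\gamma$, $\beta:=\mathrm{id}$ witness the local pseudo-Siggers identity on $A$, and Lemma~\ref{lem:localglobal} promotes this to a global pseudo-Siggers operation in $\clone{C}$.

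The only real design choice is the pair of ``Siggers patterns'' $(x,y,x,z,y,z)$ and $(y,x,z,x,z,y)$: they are precisely tuned so that the pairs at positions $1$, $3$, and $5$ range over all three ordered pairs of distinct members of $\{\tuple{p}_1,\tuple{p}_2,\tuple{p}_3\}$, which is what makes the $\relstr{K}_3$ appear for free in $R$. Once that is arranged, the back-and-forth between pseudoloops and pseudo-Siggers identities is essentially mechanical; the conceptual work has already been done by the pseudoloop lemma, which converted the (algebraic) absence of pseudo-Siggers into the (combinatorial) absence of triangles in gg-systems.
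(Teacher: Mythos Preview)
Your proof is correct and follows essentially the same route as the paper's own argument: reduce to local pseudo-Siggers via Lemma~\ref{lem:localglobal}, encode $A^3$ into three column tuples in $C^{|A|^3}$, define $R$ as the set of pairs arising from $6$-ary clone members applied to the two Siggers patterns, verify the four bullet points, and read off the local identity from a pseudoloop. The only cosmetic differences are that the paper phrases invariance and symmetry of $R$ by describing it as the $\clone{C}$-closure of the six concatenated tuples $(\tuple{a}^u,\tuple{a}^v)$ with $u\neq v$, whereas you argue symmetry directly via the permutation $f'(a,b,c,d,e,g)=f(b,a,d,c,g,e)$; and the paper places the unary witness on the right-hand side of the pseudoloop equation while you place it on the left---both yield a local pseudo-Siggers with one of $\alpha,\beta$ the identity.
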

\begin{proof}
We show that $\clone{C}$ has local pseudo-Siggers operations and apply Lemma~\ref{lem:localglobal}. Let $A\subseteq C$ be finite, and pick $k\geq 1$ and $\tuple{a}^x,\tuple{a}^y,\tuple{a}^z\in A^k$ such that the rows of the $(k\times 3)$-matrix $(\tuple{a}^x,\tuple{a}^y,\tuple{a}^z)$ form an enumeration of $A^3$. Let $R$ be the binary relation on $C^k$ where tuples $\tuple{b},\tuple{c}\in C^k$ are related via $R$ if there exists a $6$-ary $s\in \clone{C}$ such that $\tuple{b}=s(\tuple{a}^x,\tuple{a}^y,\tuple{a}^x,\tuple{a}^z,\tuple{a}^y,\tuple{a}^z)$ and $\tuple{c}=s(\tuple{a}^y,\tuple{a}^x,\tuple{a}^z,\tuple{a}^x,\tuple{a}^z,\tuple{a}^y)$ \co{($s$ is applied to the $k$-tuples componentwise)}. 
In other words, it is the $\clone{C}$-invariant  subset of $(2k)$-tuples generated by the six vectors obtained by concatenating $\tuple{a}^u$ and $\tuple{a}^v$, where $u,v\in\{x,y,z\}$ are distinct. The latter description reveals that $R$ is a symmetric relation on $C^k$ invariant under $\clone{C}$ and containing $\relstr{K}_3$, therefore the gg-system $(R,\group{G})$, where $\group{G}$ is the componentwise action of $\clgr{\clone{C}}$ on $C^k$,  has a pseudoloop $(\tuple{b},\tuple{c})$. That means that there exists a $6$-ary $s\in \clone{C}$ and $\alpha \in\clgr{\clone{C}}$ such that $s(\tuple{a}^x,\tuple{a}^y,\tuple{a}^x,\tuple{a}^z,\tuple{a}^y,\tuple{a}^z)=\alpha s(\tuple{a}^y,\tuple{a}^x,\tuple{a}^z,\tuple{a}^x,\tuple{a}^z,\tuple{a}^y)$, proving the claim.
\end{proof}

\begin{corollary}\label{cor:K3-Siggers}
Let $\relstr{A}$ be an $\omega$-categorical core. Then either it pp-interprets $\relstr{K}_3$ with parameters, or $\Pol(\relstr{A})$ has a pseudo-Siggers operation.
\end{corollary}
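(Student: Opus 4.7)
The plan is to derive the corollary directly from Lemma~\ref{lem:1} and the pseudoloop lemma (Lemma~\ref{lem:loop}). I will set $\clone{C}:=\Pol(\relstr{A})$ and argue by contrapositive: assuming $\clone{C}$ contains no pseudo-Siggers operation, I will deduce that $\relstr{A}$ pp-interprets $\relstr{K}_3$ with parameters.

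First I would note that $\clone{C}$ is closed and oligomorphic (as $\relstr{A}$ is $\omega$-categorical) and that $\clgr{\clone{C}}=\Aut(\relstr{A})$ (as $\relstr{A}$ is a core). Invoking the contrapositive of Lemma~\ref{lem:1}, I obtain some $k\geq 1$ together with a symmetric $\Pol(\relstr{A})$-invariant relation $R\subseteq A^{2k}$ such that the gg-system $(\relstr{G},\group{G})$ with $\relstr{G}=(A^k;R)$ and $\group{G}$ the componentwise action of $\Aut(\relstr{A})$ on $A^k$ contains an induced $\relstr{K}_3$ but admits no pseudoloop. Because the componentwise action of an oligomorphic permutation group is again oligomorphic, this gg-system is oligomorphic, and so the pseudoloop lemma applies and forces it to pp-interpret $\relstr{K}_3$ with parameters.

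The remaining step is to lift this pp-interpretation from the gg-system to $\relstr{A}$ itself. By the definition of pp-interpretation for a gg-system, the interpreting formulas involve only $R$ together with the orbits of $\group{G}$ on finite tuples of $A^k$. Now $R$ is pp-definable in $\relstr{A}$ by the standard Galois correspondence between $\Pol(\relstr{A})$-invariant relations and pp-definable relations for $\omega$-categorical structures, and each orbit of $\group{G}$ on $m$-tuples of $A^k$ is, after reindexing coordinates, an orbit of $\Aut(\relstr{A})$ on $km$-tuples of $A$. Such orbits are pp-definable in $\relstr{A}$ with parameters, because $\relstr{A}$ is an $\omega$-categorical core (cf.~\cite{Bodirsky-HDR}). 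Substituting these pp-definitions into the interpreting formulas furnished by the pseudoloop lemma yields a pp-interpretation of $\relstr{K}_3$ in $\relstr{A}$ itself, with parameters, of dimension $k$ times the dimension of the original.

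The main mathematical content of the corollary has already been handled by Lemma~\ref{lem:1} and Lemma~\ref{lem:loop}; what remains is routine bookkeeping within the Galois framework for $\omega$-categorical cores. The only point that requires care is the appeal to pp-definability with parameters of orbits of $\Aut(\relstr{A})$ on tuples, which must be cited carefully, together with a clean dimension count for the final composition of interpretations.
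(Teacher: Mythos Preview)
Your proposal is correct and follows essentially the same route as the paper: apply Lemma~\ref{lem:1} to $\clone{C}=\Pol(\relstr{A})$, use the failure of its hypothesis to extract a pseudoloop-free gg-system on $A^k$, invoke the pseudoloop lemma to pp-interpret $\relstr{K}_3$, and then pull the interpretation back to $\relstr{A}$ via pp-definability of $R$ (Galois correspondence, \cite{BodirskyNesetrilJLC}) and of the $\Aut(\relstr{A})$-orbits (core property, \cite{Bodirsky-HDR}). One cosmetic remark: the equality $\clgr{\clone{C}}=\Aut(\relstr{A})$ holds for any closed polymorphism clone and does not use the core assumption; the core hypothesis enters only when you need the orbits to be pp-definable.
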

\begin{proof}
We apply Lemma~\ref{lem:1} to the clone $\clone{C}:=\Pol(\relstr{A})$; then $\clgr{\clone{C}}$ consists precisely of the automorphisms of $\relstr{A}$. 
If the assumptions of that lemma are satisfied, then $\clone{C}$ has a pseudo-Siggers operation.
Otherwise, there exists a pseudoloop-free gg-system $((C^k;R),\group{G})$ satisfying the four conditions. By Lemma~\ref{lem:loop}, this gg-system pp-interprets $\relstr{K}_3$ with parameters.  Since $R$ is invariant under $\clone{C}$, it is pp-definable from $\relstr{A}$ by~\cite{BodirskyNesetrilJLC}. Moreover, since $\relstr{A}$ is a core, the orbits of $\group{G}$ are pp-definable from $\relstr{A}$ as well by~\cite{Bodirsky-HDR}. It follows that $\relstr{A}$ pp-interprets $\relstr{K}_3$ with parameters, as required.
\end{proof}

We are now ready to prove Theorem~\ref{thm:main}. For the convenience of the reader, we restate it here.

\begin{theorem}[Theorem~\ref{thm:main}]
The following are equivalent for an $\omega$-categorical core structure $\relstr{A}$. 
\begin{itemize}
\item[(i)] There exists no continuous clone homomorphism $\Pol(\relstr{A},c_1,\ldots,$ $c_n) \To\trivclone$, for any $c_1,\ldots,c_n\in {A}$.
\item[(ii)] There exists no clone homomorphism $\Pol(\relstr{A},c_1,\ldots,c_n)\To\trivclone$, for any $c_1,\ldots,c_n\in {A}$.
\item[(iii)] $\Pol(\relstr{A})$ contains a pseudo-Siggers operation, i.e., a 6-ary operation $s$ such that 
\[
\alpha s(x,y,x,z,y,z) \approx \beta s(y,x,z,x,z,y)
\] 
for some unary operations $\alpha, \beta \in \Pol(\relstr{A})$.
\end{itemize}
\end{theorem}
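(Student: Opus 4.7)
The plan is to prove the cycle $(\mathrm{ii}) \Rightarrow (\mathrm{i}) \Rightarrow (\mathrm{iii}) \Rightarrow (\mathrm{ii})$. The first implication is immediate, since every continuous clone homomorphism is in particular a clone homomorphism, so the non-existence of any clone homomorphism trivially rules out the existence of a continuous one. For $(\mathrm{i}) \Rightarrow (\mathrm{iii})$, I would combine Theorem~\ref{thm:int} with Corollary~\ref{cor:K3-Siggers}: the former tells us that (i) is equivalent to $\relstr{A}$ not pp-interpreting all finite structures with parameters, and since $\Pol(\relstr{K}_3, 0, 1, 2) = \trivclone$ (so that Theorem~\ref{thm:int} applied to $\relstr{K}_3$ itself exhibits a pp-interpretation of every finite structure in $\relstr{K}_3$ with parameters), composition of pp-interpretations shows that $\relstr{A}$ in particular does not pp-interpret $\relstr{K}_3$ with parameters. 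Corollary~\ref{cor:K3-Siggers} then provides the desired pseudo-Siggers operation in $\Pol(\relstr{A})$.

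The only genuinely algebraic remaining step is $(\mathrm{iii}) \Rightarrow (\mathrm{ii})$. Fix a pseudo-Siggers operation $s \in \Pol(\relstr{A})$ with witnesses $\alpha, \beta \in \Pol(\relstr{A})$, together with a hypothetical clone homomorphism $\xi: \Pol(\relstr{A}, c_1, \ldots, c_n) \To \trivclone$. The obstacle is that $s$, $\alpha$, $\beta$ need not themselves lie in the stabilizer, so $\xi$ cannot be applied to the pseudo-Siggers identity directly. I would overcome this using that $\relstr{A}$ is a core, hence automorphisms are dense among endomorphisms on any finite subset of $A$: pick an automorphism $\gamma$ that agrees on $\{c_1, \ldots, c_n\}$ with the endomorphism $x \mapsto s(x,x,x,x,x,x)$, and replace $s$ by $s' := \gamma^{-1} s$, which satisfies $s'(c_i,c_i,c_i,c_i,c_i,c_i) = c_i$ and hence lies in the stabilizer. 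The identity rewrites as $\alpha\gamma \, s'(x,y,x,z,y,z) \approx \beta\gamma \, s'(y,x,z,x,z,y)$; evaluating it at $x = y = z = c_i$ forces $\alpha\gamma$ and $\beta\gamma$ to agree on every $c_i$, and a second application of core density produces an automorphism $\delta$ equal to both of them on $\{c_1, \ldots, c_n\}$. Setting $\alpha' := \delta^{-1} \alpha \gamma$ and $\beta' := \delta^{-1} \beta \gamma$ then brings all three witnesses into the stabilizer while preserving the identity.

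With $s'$, $\alpha'$, $\beta'$ all in $\Pol(\relstr{A}, c_1, \ldots, c_n)$, $\xi$ applies termwise to the rewritten identity. Since $\trivclone$ possesses only the identity as a unary operation, $\xi(\alpha')$ and $\xi(\beta')$ both vanish from the equation, and $\xi(s')$ is some $6$-ary projection $\pi_i$; one would therefore obtain $\pi_i(x,y,x,z,y,z) = \pi_i(y,x,z,x,z,y)$ on a set of at least two elements, which fails by direct inspection of the six coordinates for every $i \in \{1, \ldots, 6\}$, contradicting the existence of $\xi$. The main conceptual difficulty of the theorem is thus entirely housed in Corollary~\ref{cor:K3-Siggers} and ultimately in the pseudoloop lemma of Section~\ref{sect:loop}; the remaining manipulations are a standard transfer of the finite-domain arguments from~\cite{Sig10, KMM14} to pseudo-Siggers operations, facilitated by the density of automorphisms in endomorphisms in the core setting.
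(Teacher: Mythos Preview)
Your proof is correct and follows essentially the same route as the paper: the paper also proves $(\mathrm{iii})\Rightarrow(\mathrm{ii})$ by using core density to push $s,\alpha,\beta$ into the stabilizer (its $\epsilon,\theta$ are your $\gamma,\delta$), and derives $(\mathrm{i})\Rightarrow(\mathrm{iii})$ from Theorem~\ref{thm:int} together with Corollary~\ref{cor:K3-Siggers} via the fact that $\relstr{K}_3$ pp-interprets all finite structures. The only cosmetic difference is that the paper stops at ``the stabilizer has a pseudo-Siggers operation'' and leaves the incompatibility with a clone homomorphism to $\trivclone$ implicit, whereas you spell out the final projection check.
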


\begin{proof}
We first prove that (iii) implies (ii).
Take $\alpha,\beta,s \in \Pol(\relstr{A})$ satisfying the pseudo-Siggers identity. We claim that every stabilizer $\Pol(\relstr{A},c_1,\dots, c_n)$ has a pseudo-Siggers operation. To see that, consider the endomorphisms $\gamma,\delta$ of $\relstr{A}$ defined by $\gamma(x):=s(x,\dots,x)$, $\delta(x) := \alpha\gamma(x)$ ($=\beta\gamma(x)$ by the pseudo-Siggers identity). Because $\relstr{A}$ is a core, its automorphisms are dense in endomorphisms, thus there exist automorphisms $\epsilon, \theta$ of $\relstr{A}$ such that $\epsilon(c_i) = \gamma(c_i)$ and $\theta(c_i) = \delta(c_i)$ for every $i$. But then $\theta^{-1}\alpha\epsilon$, $\theta^{-1}\beta\epsilon$ and $\epsilon^{-1}s$ are contained in $\Pol(\relstr{A},c_1, \dots, c_n)$ and satisfy 
\[
(\theta^{-1}\alpha\epsilon)(\epsilon^{-1}s)(x,y,x,z,y,z) \approx (\theta^{-1}\beta\epsilon)(\epsilon^{-1}s)(y,x,z,x,z,y). 
\]


The implication from (ii) to (i) is trivial. 

Finally, assume that no stabilizer of $\Pol(\relstr{A})$ has a continuous clone homomorphism to $\trivclone$. Then no such stabilizer has a continuous clone homomorphism to $\Pol(\relstr{K}_3)$ either, since it is well-known that the latter clone has a continuous clone homomorphism to $\trivclone$: 
 it is a folklore fact that all operations of $\Pol(\relstr{K}_3)$ are \emph{essentially unary}, i.e., depend on at most one variable. 
By Theorem~\ref{thm:int}, $\relstr{A}$ does not pp-interpret $\relstr{K}_3$ with parameters. Corollary~\ref{cor:K3-Siggers} then tells us that $\Pol(\relstr{A})$ has a pseudo-Siggers operation.
\end{proof}





\section{Discussion}\label{sect:discus}

\subsection{Discussion of the pseudoloop lemma}

The proof of the pseudoloop lemma presented here draws on   ideas from the proof in~\cite{B05} of the loop lemma for finite undirected graphs~\cite{HellNesetril,B05}, which states that every finite non-bipartite graph either contains a loop (i.e., an edge of the form $(a,a)$) or pp-interprets $\relstr{K}_3$ with parameters. A generalization of the latter result, which is useful in the theory of finite algebras and their applications to finite domain CSPs, 
 is its expansion from non-bipartite undirected graphs to certain directed graphs (digraphs). We shall now state this generalization.

Recall that a digraph is \emph{smooth} if each vertex has an incoming and an outgoing edge, and that a digraph has \emph{algebraic length 1} if it contains a closed walk whose number of forward edges exceeds its number of backward edges by $1$. Note that graphs containing an induced copy of $\relstr{K}_3$ have algebraic length $1$.

\begin{theorem}[\cite{BartoKozikNiven,Cyclic}] \label{thm:loop-lemma}
Let $\relstr{G} = (G;R)$ be a \co{finite} smooth  digraph of algebraic length $1$ without a loop. Then $\relstr{G}$  {pp}-interprets $\relstr{K}_3$ with parameters.
\end{theorem}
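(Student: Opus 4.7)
The plan is to reduce the statement to the algebraic criterion of Theorem~\ref{thm:main_fin} and then to derive a loop by applying a cyclic polymorphism to a closed walk whose length equals the polymorphism's arity.

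Since smoothness, algebraic length $1$, and loop-freeness all descend to the core $\relstr{G}'$ of $\relstr{G}$, and $\relstr{G}$ pp-interprets $\relstr{K}_3$ with parameters if and only if $\relstr{G}'$ does, I may work with $\relstr{G}'$. Arguing by contraposition, assume $\relstr{G}'$ does not pp-interpret $\relstr{K}_3$ with parameters; then by Theorem~\ref{thm:main_fin}(v), strengthened to prime arity as in~\cite{Cyclic}, $\Pol(\relstr{G}')$ contains a cyclic polymorphism $c$ of every prime arity $p > |G'|$.

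A classical fact about finite smooth digraphs of algebraic length $1$, which I would invoke as a lemma, says that the lengths of their all-forward closed walks form a numerical semigroup whose generators have $\gcd$ equal to $1$; hence by a Sylvester--Frobenius argument this set is cofinite in $\mathbb{N}$. Fixing a prime $p > |G'|$ that occurs as such a length and picking a closed walk $W \colon \mathbb{Z}/p\mathbb{Z} \to G'$ (so that $W(i) \to W(i{+}1)$ is an edge of $\relstr{G}'$ for every index read modulo $p$), define
\[
W'(i) := c\bigl(W(i), W(i{+}1), \ldots, W(i{+}p{-}1)\bigr).
\]
Then $W'$ is a closed walk in $\relstr{G}'$ since $c$ is a polymorphism, and a direct computation using the periodicity of $W$ and the cyclic identity on $c$ gives $W'(i{+}1) = c(W(i{+}1), \ldots, W(i{+}p{-}1), W(i{+}p)) = c(W(i{+}1), \ldots, W(i{+}p{-}1), W(i)) = W'(i)$ for every $i$. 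Hence $W'$ is constant, so its unique value carries an edge to itself, contradicting loop-freeness.

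The main obstacle I anticipate is the combinatorial lemma relating algebraic length $1$ to the existence of closed all-forward walks of every sufficiently large length; although intuitive and well-known, it requires a careful separation of forward and backward edges, using smoothness to trade the latter for the former, followed by the Sylvester--Frobenius argument on the induced numerical semigroup. Secondary ingredients I would treat as black boxes are the prime-arity cyclic-term theorem of~\cite{Cyclic} and the standard reduction to the core in the finite setting.
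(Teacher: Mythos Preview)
The paper does not give its own proof of this theorem; it is quoted from \cite{BartoKozikNiven,Cyclic} and used as a black box (in Step~3 of Proposition~\ref{prop:3orbits}). So there is no proof in the paper to compare against, and I will simply evaluate your argument.

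Your overall strategy --- pass to the core, extract a cyclic polymorphism of large prime arity $p$ via Theorem~\ref{thm:main_fin}(v), and apply it to the $p$ cyclic shifts of a directed closed walk of length $p$ --- is natural and does work for undirected graphs and for strongly connected digraphs. The gap is the ``classical fact'' you invoke: the set of lengths of directed closed walks in a finite smooth digraph of algebraic length~1 need \emph{not} be a numerical semigroup, and need not be cofinite. Take $G$ on $\{a,b,c,d,e\}$ with edges $a\to b$, $b\to a$, $c\to d$, $d\to e$, $e\to c$, and $a\to c$. This digraph is smooth and loop-free, and it has algebraic length~1 (traverse the $2$-cycle once backward, then $a\to c$ forward, the $3$-cycle once forward, and $a\to c$ backward; net length $+1$). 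But every directed closed walk stays inside one strongly connected component, so its length is a multiple of $2$ or of $3$; the set of lengths contains $2$ and $3$ but not $5$, hence is not a sub-semigroup of $\mathbb{N}$, and it misses every prime larger than $3$. Your Sylvester--Frobenius step therefore cannot be carried out.

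The obstruction is exactly that directed closed walks in distinct strongly connected components cannot be concatenated, so ``smoothness to trade backward edges for forward ones'' does not yield what you need. The actual proofs in \cite{BartoKozikNiven,Cyclic} handle this via absorption-theoretic machinery rather than a direct walk-length argument; there is also a circularity risk in your plan, since the cyclic-term theorem of \cite{Cyclic} that you invoke as a black box is itself built on results intimately tied to the loop lemma.
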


 We conjecture that the pseudoloop lemma can be generalized to such digraphs as well.

\begin{conjecture}\label{conj:loop}
Let $(\relstr{G},\group{G})$ be an oligomorphic gg-system, where $\relstr{G} = (G;R)$ is a smooth digraph of algebraic length $1$ without a pseudoloop. 
Then $(\relstr{G},\group{G})$ {pp}-interprets $\relstr{K}_3$ with parameters.
\end{conjecture}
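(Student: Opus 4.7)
The plan is to adapt the proof of Lemma~\ref{lem:loop} to the digraph setting, drawing on the additional machinery used by Barto–Kozik–Niven to prove Theorem~\ref{thm:loop-lemma} in the finite case. Since the hypothesis here is weaker than that of Lemma~\ref{lem:loop} (containment of $\relstr{K}_3$ is replaced by smoothness and algebraic length $1$), the bulk of the work is to \emph{produce}, from $R$, a pp-definable $\group{G}$-invariant symmetric relation that still avoids pseudoloops but does contain an induced $\relstr{K}_3$. Once such a relation is in hand, the pseudoloop lemma applies and closes the argument.

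\textbf{Step 1: Minimality.} As in Step~0 of the proof of Lemma~\ref{lem:loop}, I would first pass to a minimal pseudoloop-free oligomorphic gg-system $(\relstr{G},\group{G})$ for which the hypotheses persist. Smoothness is pp-definable (existence of incoming and outgoing edges) and algebraic length $1$ is preserved under pp-definable subsets and quotients provided one restricts attention to subsets/quotients that still witness a closed walk of net length $1$; thus the descent on orbits used in Lemma~\ref{lem:loop} still terminates.

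\textbf{Step 2: From algebraic length $1$ to an induced $\relstr{K}_3$.} Here I would mimic the finite BKN strategy: iterate compositions of $R$ and $R^{-1}$ guided by the given closed walk of net length $1$ to produce pp-definable relations whose symmetric closure is progressively better behaved, and eventually isolate a pp-definable symmetric relation $S \supseteq R \cup R^{-1}$ that contains an induced copy of $\relstr{K}_3$. Because pp-definability in a gg-system automatically yields $\group{G}$-invariance, the resulting $S$ is a genuine edge relation of a gg-system on the same vertex set and orbits.

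\textbf{Step 3: Preserving pseudoloop-freeness.} The delicate point is that $S$ could a priori acquire a pseudoloop where $R$ had none. I would argue by minimality, exactly as in Steps~2--4 of the proof of Lemma~\ref{lem:loop}: a pseudoloop of $S$ can be unfolded back along the pp-definition to either produce a pseudoloop of $R$ (contradiction) or to exhibit a pp-definable proper subset/quotient on which the full hypotheses of the conjecture still hold, contradicting minimality. Once $S$ is established to be symmetric, pseudoloop-free, $\group{G}$-invariant, and to contain $\relstr{K}_3$, Lemma~\ref{lem:loop} applied to $(S,\group{G})$ yields a pp-interpretation of $\relstr{K}_3$ with parameters, which composes with the pp-definition of $S$ from $R$ to give the required interpretation over $(\relstr{G},\group{G})$.

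\textbf{Main obstacle.} The hardest part is Step~2. In the finite BKN proof, the construction of $S$ and the emergence of $\relstr{K}_3$ ultimately rest on cardinality bounds (stabilization of iterated powers of $R$, bounds on forbidden subgraphs) that have no direct counterpart in the oligomorphic setting. An oligomorphic substitute, in the spirit of Lemma~\ref{lem:Tk_bound} — where finiteness of the number of orbits of $\group{G}$ replaces finiteness of $G$ — would have to be developed, most likely via careful bookkeeping of orbit-types of walks of bounded length together with a combinatorial analysis of how the relevant powers of $\relstr{G}$ interact with orbit equivalences. I expect this step to require genuinely new combinatorial ideas, and it is the step at which the analogy with the finite proof is least automatic; it is conceivable that the conjecture as stated is too optimistic and needs to be supplemented, for example by a Ramsey-type assumption on $\relstr{G}$.
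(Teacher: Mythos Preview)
The statement you are addressing is a \emph{conjecture} in the paper, not a theorem: the paper does not prove it, and states explicitly that even the special case of undirected non-bipartite graphs (the infinite pseudo-version of the Hell--Ne\v{s}et\v{r}il reduction to graphs containing $\relstr{K}_3$) remains open. The only positive evidence the paper supplies is Proposition~\ref{prop:3orbits}, which handles the very restricted situation where $\group{G}$ has exactly three orbits and the quotient modulo the orbit equivalence is $\relstr{K}_3$; the argument there works by pp-defining an equivalence relation whose quotient is \emph{finite} and then invoking the finite loop lemma (Theorem~\ref{thm:loop-lemma}) directly --- a route quite different from your outline, and one that visibly does not generalise.

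Your proposal is therefore not a proof but a strategy, and you have correctly located the genuine gap yourself. Step~2 (producing from $R$ a pp-definable symmetric relation containing $\relstr{K}_3$) together with Step~3 (showing that relation remains pseudoloop-free) is exactly the missing piece, and your appeal to ``bookkeeping of orbit-types of walks'' and something ``in the spirit of Lemma~\ref{lem:Tk_bound}'' is not an argument. Concretely: even the trivial finite reduction from a non-bipartite undirected graph to one containing $\relstr{K}_3$ --- take a suitable odd relational power of $R$ --- is not known to preserve pseudoloop-freeness in the oligomorphic setting, because a pseudoloop in the power corresponds to a walk whose endpoints lie in the same $\group{G}$-orbit, not to an actual loop in $R$. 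Your Step~3 claim that such a pseudoloop can be ``unfolded back'' to a pseudoloop of $R$ or to a contradiction with minimality has no justification, and the paper singles out precisely this as the first open case. Until that obstacle is overcome, the conjecture stands unproven.
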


It is easy to reduce the loop lemma for finite non-bipartite undirected graphs to finite undirected graphs containing a $\relstr{K}_3$ (see \cite{HellNesetril}). For the infinite pseudo-version, even this special case of Conjecture~\ref{conj:loop} is open. 
On the other hand, the generalization of the loop lemma for finite non-bipartite undirected graphs to the digraphs of  Theorem~\ref{thm:loop-lemma} requires completely different techniques. 
It might be possible to circumvent the additional complications in the infinite by reducing to the finite, as exemplified in the proof of the following proposition.

\begin{proposition}\label{prop:3orbits}
Let $(\relstr{G},\group{G})$ be an oligomorphic gg-system, where $\relstr{G} = (G;R)$ is a smooth connected digraph of algebraic length $1$.
Moreover, assume that 
\begin{itemize}
\item $\group{G}$ acts with three orbits on $G$, and
\item $\relstr{G}$ modulo the orbit equivalence on $G$ is  isomorphic to $\relstr{K}_3$ (i.e., when $U,V\subseteq G$ are orbits of $\group{G}$, then there is an edge from some $u\in U$ to some $v\in V$ if and only if $U\neq V$).
\end{itemize}
Then $(\relstr{G},\group{G})$ {pp}-interprets $\relstr{K}_3$ with parameters.
\end{proposition}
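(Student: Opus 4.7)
The plan is to reduce to the finite loop lemma (Theorem~\ref{thm:loop-lemma}): I would apply it to the quotient $\relstr{G}^q := \relstr{G}/{\sim}$ of $\relstr{G}$ by orbit equivalence $\sim$. By the hypotheses, $\relstr{G}^q$ is a finite loopless digraph on three vertices---loopless because the assumption that edges between orbits exist only for distinct orbits rules out intra-orbit edges in $\relstr{G}$, and hence pseudoloops. Smoothness and algebraic length $1$ both descend from $\relstr{G}$ to $\relstr{G}^q$, since closed walks in $\relstr{G}$ project to closed walks in $\relstr{G}^q$ preserving forward and backward step counts. Hence $\relstr{G}^q$ satisfies the hypotheses of Theorem~\ref{thm:loop-lemma} (in fact, as an undirected graph it \emph{is} $\relstr{K}_3$), so it {pp}-interprets $\relstr{K}_3$ with parameters.

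The main step is to lift this to a pp-interpretation of $\relstr{K}_3$ in $(\relstr{G},\group{G})$ itself, which by Theorem~\ref{thm:int} reduces to exhibiting a continuous clone homomorphism from some stabilizer of $\Pol(\relstr{G},\group{G})$ to $\trivclone$ (using, as in the proof of Theorem~\ref{thm:main}, that $\Pol(\relstr{K}_3)$ admits such a homomorphism to $\trivclone$). To produce such a homomorphism $\xi$, I would select parameters $u_0\in U,\; v_0\in V,\; w_0\in W$ together with a pp-definable configuration connecting them in $\relstr{G}$ that encodes the algebraic length $1$ hypothesis at the level of the parameters. For each $k$-ary polymorphism $f$ in the stabilizer of these parameters, I would define the induced map $h_f\colon \{U,V,W\}^k \to \{U,V,W\}$ by $h_f(X_1,\dots,X_k) := $ the $\group{G}$-orbit of $f(x_1,\dots,x_k)$, where $x_j \in \{u_0,v_0,w_0\}$ is the representative of the orbit $X_j$. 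Preservation of $R$ by $f$, applied along the parameter configuration, forces $h_f$ to preserve the edge relation of $\relstr{G}^q \cong \relstr{K}_3$; since $\Pol(\relstr{K}_3)$ consists of essentially unary operations, $h_f$ has a well-defined essential coordinate $i$, and setting $\xi(f) := \pi^k_i$ yields a clone homomorphism. Continuity is immediate because $\xi(f)$ depends only on $f$'s values on a finite set of parameter inputs, and composition compatibility follows from the coordinate-tracking nature of the assignment.

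The main technical obstacle is the choice of parameter configuration. Naively, one wants $u_0,v_0,w_0$ to span a rainbow triangle in $R$; but this may fail, as simple oligomorphic gg-systems with $\relstr{K}_3$-quotient and algebraic length $1$ whose shortest rainbow closed walk has length $5$ already show. Handling this properly seems to require either a pp-definable modification of $R$ enforcing sufficient triangle-like local structure (in the spirit of Step~1 of the proof of Lemma~\ref{lem:loop}), together with a verification that the modified relation still satisfies the hypotheses, or else a refined parameter configuration laid out along a minimal odd closed walk guaranteed by the algebraic length $1$ hypothesis. Carrying out the combinatorial orbit-tracking along such a walk to confirm that $h_f$ remains a polymorphism of $\relstr{K}_3$ is where the bulk of the argument lies, and is the analogue in this specific setting of the combinatorial work that in the general digraph case of Conjecture~\ref{conj:loop} requires completely different techniques.
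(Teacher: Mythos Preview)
Your plan has a gap that runs deeper than the parameter-configuration issue you flag. Even if you were granted a rainbow triangle $u_0,v_0,w_0$ in $R$ (so that each $h_f$ lands in $\Pol(\relstr{K}_3)$ and, since $f$ stabilises the representatives, is in fact a projection), the assignment $f\mapsto h_f$ is \emph{not} a clone homomorphism. The obstruction is composition: computing $h_{f\circ(g_1,\dots,g_n)}$ at a tuple $X$ means applying $f$ to the elements $g_j(\text{rep}\,X)$, which lie in the right orbits but are generally \emph{not} the fixed representatives $u_0,v_0,w_0$, and your definition of $h_f$ tells you nothing about $f$ there. The root cause is that the orbit-equivalence relation is a \emph{union} of orbits of pairs and hence need not be pp-definable in $(\relstr{G},\group{G})$; polymorphisms therefore need not respect it, so the orbit of $f(x_1,\dots,x_k)$ is not determined by the orbits of the $x_i$. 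Your map \emph{does} preserve height-$1$ identities, but a uniformly continuous h1 clone homomorphism to $\trivclone$ is strictly weaker than what Theorem~\ref{thm:int} requires (cf.\ Corollary~\ref{cor:homos}), and no refinement of the parameter walk repairs the composition failure.

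The paper avoids Theorem~\ref{thm:int} altogether. It constructs, by an iterated pp-definition, an equivalence $\sim$ refining orbit equivalence that is forward- and backward-compatible with $R$: if $a\sim b$, $R(a,a')$, $R(b,b')$, and $O(a')=O(b')$, then $a'\sim b'$, and dually for incoming edges. These compatibility conditions together with connectedness force the quotient $\relstr{G}/{\sim}$ to be \emph{finite}: once one $\sim$-class is fixed, every other class is determined along any path, and oligomorphicity bounds the number of classes. Theorem~\ref{thm:loop-lemma} then applies to this finite smooth digraph of algebraic length~$1$ without a loop, and since $\sim$ is pp-definable the resulting pp-interpretation of $\relstr{K}_3$ lifts directly to $(\relstr{G},\group{G})$. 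The work is thus in making the congruence pp-definable, not in choosing parameters.
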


\begin{proof}
We prove the claim in three steps. In Step~1, we construct an equivalence relation $\sim$ on $G$ which is pp-definable from $(\relstr{G},\group{G})$ and has the following properties.
\begin{itemize}
\item[(i)] If $a \sim b$, then $O(a)=O(b)$.
\item[(ii)] If $a \sim b$, $R(a,a')$, $R(b,b')$, and $O(a')=O(b')$, then $a' \sim b'$.
\item[(iii)] If $a \sim b$, $R(a',a)$, $R(b',b)$, and $O(a')=O(b')$, then $a' \sim b'$.
\end{itemize}
In Step~2, we prove that $\relstr{G}$, factored by $\sim$, is a finite digraph. \co{This digraph satisfies the conditions of Theorem~\ref{thm:loop-lemma}, which we apply in Step~3 to finish the proof.} 

Before we start, we make a simple observation which will be used throughout the proof.
If $A,B\subseteq G$ are distinct orbits of $\group{G}$ and $a \in A$, then there exists $b \in B$ such that $R(a,b)$.
Indeed, since $\relstr{G}$ modulo the orbits is isomorphic to $\relstr{K}_3$, there are $a' \in A$ and $b' \in B$ such that $R(a',b')$, and so we can set \co{$b := \alpha(b')$} for any $\alpha \in \group{G}$ with $\alpha(a')=a$. A similar statement holds, of course, with the role of $a$ and $b$ reversed. 

\smallskip

\noindent \textit{Step 1:} For each pp-definable equivalence relation $\sim$ on $G$ satisfying (i) and elements $a,a',b,b'\in G$ falsifying (ii) (or, similarly, (iii)) we are going to pp-define an equivalence relation $\sim'$ containing $\sim$ which still satisfies (i), and such that  $a' \sim' b'$. This is enough for achieving the goal of Step~1 since the process of enlarging $\sim$ (starting with the smallest equivalence relation on $G$) must stop after a finite number of steps due to oligomorphicity (recall the remark after Definition~\ref{defn:gg}).

We will only consider the situation where (ii) does not yet hold for $\sim$, the situation for (iii) being completely analogous. So, let $\sim$ satisfy (i) and let $a,a',b,b'\in G$ be as in (ii). Denote $U:=O(a)=O(b)$ and $V:=O(a')=O(b')$,  and the remaining orbit of $\group{G}$ by $W$. 
Note that $U\cup V$ can be pp-defined since $x\in U\cup V$ if and only if $\exists y \; R(x,y) \wedge y \in W$ holds. Similarly, $U\cup W$ and $V\cup W$ are pp-definable in $(\relstr{G},\group{G})$.

We first construct a reflexive binary relation $S:=S_1 \cap S_2 \cap S_3 \cap S_4$ on $G$ which contains both $(a',b')$ and $(b',a')$ and satisfies (i). Using $S$ we define a symmetric and reflexive  relation $Q$ which contains $\sim \cup \{(a',b')\}$ and which still satisfies (i), and finally set $\sim'$ to be the transitive closure of $Q$.

The relations $S_i$ are pp-defined as follows:
\begin{align*}
S_1(x,y) &:\leftrightarrow \exists z_1,z_2 \; R(z_1,x) \wedge z_1 \sim z_2 \wedge R(z_2,y) \wedge z_1 \in U\cup V \\ 
S_2(x,y) &:\leftrightarrow \exists z_1,z_2 \; R(z_1,x) \wedge z_1 \sim z_2 \wedge R(z_2,y) \wedge z_1 \in U\cup W \\
S_3(x,y) &:\leftrightarrow \exists z_1, z_2, z_3, z_4 \; R(z_1,x) \wedge z_1 \sim z_2 \wedge R(z_2,z_3) \wedge R(z_4,z_3) \wedge R(z_4,y) \\
  & \quad \wedge z_1 \in U\cup V \wedge z_3 \in U\cup V \wedge z_4 \in V\cup W \\
S_4(x,y) &:\leftrightarrow S_3(y,x)\; .
\end{align*}
We claim that each $S_i$ is reflexive. Each $c \in G$ has an incoming $R$--edge from some $d \in U\cup V$. The choice $z_1=z_2:=d$ then shows $S_1(c,c)$, proving that $S_1$ is reflexive. Similarly, $S_2$ is reflexive as well. To prove the claim for $S_3$, let $c\in G$ arbitrary. If $c \in U\cup V$ then, for any $d \in U\cup V, e \in V\cup W$ with $R(d,c)$ and $R(e,c)$, the choice $(z_1,z_2,z_3,z_4) := (d,d,c,e)$ shows $S_3(c,c)$; if $c \in W$, then we pick $d \in V$, $e \in U$ such that $R(d,c)$, $R(d,e)$ and the choice $(z_1,z_2,z_3,z_4) := (d,d,e,d)$ works.

It is also easy to verify that $(a',b')$ is contained in each $S_i$: $(a',b') \in S_1 \cap S_2$ follows from the choice $(z_1,z_2)=(a,b)$. To see that $(a',b') \in S_3$, set $(z_1,z_2,z_3,z_4) =(a,b,b',c)$, where $c \in W$ is such that $R(c,b')$. \co{The analogous argument flipping the roles of $a$ and $b$   
shows that $(b',a') \in S_3$}, and \co{hence}  $(a',b') \in S_4$. Similarly, $(b',a')$ is also contained in each $S_i$.

Now we claim that $S_1$ does not intersect $U \times V$. Indeed, if $x \in U$, $y\in G$, and $S_1(x,y)$ is witnessed by $z_1,z_2\in G$ via the defining formula, then $z_1 \in V$, since $x \in U$, $z_1 \in U\cup V$, $R(x,z_1)$, and $\relstr{G}$ has no pseudoloops. Thus, since $\sim$ satisfies (i) and $z_1 \sim z_2$, we must also have $z_2 \in V$. Hence, we cannot have $y\in V$, since $R(z_2,y)$ and since $\relstr{G}$ has no pseudoloops. The same reasoning shows that $S_1$ does not intersect $V \times U$, that $S_2$ intersects neither $U \times W$ nor $W \times U$, that $S_3$ does not intersect $V \times W$, and that $S_4$ does not intersect $W \times V$. It follows that $S = S_1 \cap S_2 \cap S_3 \cap S_4$ satisfies (i).

In summary, $S$ is reflexive, satisfies (i), and contains both $(a',b')$ and $(b',a')$.
We now define the relation $Q$ by
$$
Q(x,y) :\leftrightarrow \exists s \; (x \sim s \wedge S(s,y)) \wedge \exists t \; (S(t,x) \wedge t \sim y).
$$
By definition, $Q$ is symmetric, and it satisfies (i) as both $\sim$ and $S$ do.
From the reflexivity of $S$, it follows that $Q$ contains $\sim$; in particular, it is itself reflexive. Moreover, the reflexivity of $\sim$ together with $(a',b'), (b',a') \in S$ imply that  $(a',b')\in Q$.

Finally, by repeatedly replacing $Q$ by $Q'(x,y) :\leftrightarrow \exists z \; Q(x,z) \wedge Q(z,y)$ we obtain an increasing chain of reflexive symmetric relations. Using again the oligomorphicity of $\group{G}$, this chain stabilizes after finitely many steps and we get the transitive closure of the original $Q$ -- the sought after equivalence relation  $\sim'$. 

\bigskip

\noindent \textit{Step 2:}
Let $(\relstr{G}^q,\group{G}^q)$, where $\relstr{G}^q = (G^q; R^q)$, be the quotient gg-system modulo $\sim$: that is, $G^q$ consists of the equivalence classes of $\sim$, $\group{G}^q$ is the natural action of $\group{G}$ on these classes, and $R^q(U,V)$ holds for classes $U,V$ if $R(u,v)$ holds for some $u\in U$ and $v\in V$.

Note that $(\relstr{G},\group{G})$ pp-interprets $\relstr{G}^q$ and that $\relstr{G}^q$ is a connected smooth digraph of algebraic length 1. Moreover, $\relstr{G}^q$ has no pseudoloops since $\sim$ satisfies (i). From (ii) (or (iii)) it follows that 
if $R^q(U,V)$ and $R^q(U,W)$ (or $R^q(V,U)$ and $R^q(W,U)$) where $V$ is in the same $\group{G}^q$-orbit as $W$, then $V = W$.

We show that $G^q$ is finite. Fix any vertex $U \in G^q$ and consider any $V,W \in G^q$ such that $(U,V)$ is in the same $\group{G}^q$-orbit of pairs as $(U,W)$, say $\alpha(U,V) = (U,W)$, where $\alpha \in \group{G}^q$. Recall that $\relstr{G}^q$ is connected and consider any oriented path $U=U_0, U_1, \dots, U_k=V$ and its $\alpha$--image $U=\alpha(U_0), \alpha(U_1), \dots \alpha(U_k) = W$. Clearly, each $U_i$ is in the same orbit as $\alpha(U_i)$ and then, using the property of $R^q$ from the previous paragraph, we obtain $\alpha(U_1)=U_1$, \dots, $\alpha(U_k) = U_k$, so $V=W$. We have proved that $(U,V)$ and $(U,W)$ are in different orbits of pairs whenever $V \neq W$. Since $\group{G}^q$ is oligomorphic, we conclude that $G^q$ must be finite. 

\bigskip
\noindent \textit{Step 3:} 
Now $\relstr{G}^q$ is a finite smooth digraph of algebraic length 1 without a loop (in fact, it does not even have a pseudoloop).  By Theorem~\ref{thm:loop-lemma}, the quotient $\relstr{G}^q$ pp-interprets $\relstr{K}_3$ with parameters. Since $(\relstr{G},\group{G})$ pp-interprets $\relstr{G}^q$, the proof is complete.  
\end{proof}

Some assumptions in Proposition~\ref{prop:3orbits} can be weakened. For example, the connectivity assumption can be omitted and more quotient digraphs (other than $\relstr{K}_3$) can be allowed. We have decided to present the simplest version just to illustrate the general idea.

A theorem recently announced by  Marcin Kozik suggests that even the following strengthening of Conjecture~\ref{conj:loop} might be true. 

\begin{conjecture}\label{conj:loop-two}
Let $(\relstr{G},\group{G})$ be an oligomorphic gg-system such that the quotient of $\relstr{G}$ modulo the $\group{G}$-orbit equivalence is a smooth digraph of algebraic length $1$ without a loop. 
Then $(\relstr{G},\group{G})$ {pp}-interprets $\relstr{K}_3$ with parameters.
\end{conjecture}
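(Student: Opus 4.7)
The plan is to generalize the three-step strategy used in the proof of Proposition~\ref{prop:3orbits}. Write $\relstr{G}^o$ for the quotient of $\relstr{G}$ by the $\group{G}$-orbit equivalence, so the hypothesis asserts that $\relstr{G}^o$ is smooth, of algebraic length $1$, and loop-free. First, I would pp-define an equivalence relation $\sim$ on $G$ refining the $\group{G}$-orbit partition and satisfying the digraph-congruence property: whenever $a \sim b$ and both $R(a,a')$ and $R(b,b')$ hold with $O(a')=O(b')$, then $a' \sim b'$, and symmetrically for incoming edges. Second, I would argue that the further quotient $\relstr{G}^q$ of $\relstr{G}$ modulo $\sim$ is a finite smooth digraph of algebraic length $1$ without a loop. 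Third, I would invoke Theorem~\ref{thm:loop-lemma} on $\relstr{G}^q$ to obtain a pp-interpretation of $\relstr{K}_3$ with parameters, which pulls back to a pp-interpretation in $(\relstr{G},\group{G})$ because $\sim$ is pp-definable in the gg-system and the orbits of $\group{G}$ are by definition available for pp-constructions.

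For the first step, I would start with $\sim$ equal to equality and enlarge it iteratively whenever the congruence property fails. Given witnesses $a \sim b$, $R(a,a')$, $R(b,b')$, $O(a')=O(b')$ but $a' \not\sim b'$, I would pp-define a reflexive, symmetric, orbit-respecting relation $S$ containing $(a',b')$, and then form
\[
Q(x,y) :\leftrightarrow \exists s\, (x\sim s \wedge S(s,y)) \wedge \exists t\, (S(t,x) \wedge t\sim y),
\]
whose transitive closure (stabilizing after finitely many iterations by oligomorphicity) is the next equivalence $\sim'$. This outer iteration must itself terminate because there are only finitely many pp-definable equivalence relations on $G$ refining the orbit partition. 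In the second step, finiteness of $\relstr{G}^q$ would follow the orbit-of-pairs argument from Proposition~\ref{prop:3orbits}: within any $\group{G}$-orbit, the digraph-congruence property forces, for a fixed basepoint $U$, that two vertices $V, W$ with $(U,V)$ and $(U,W)$ in the same $\group{G}$-orbit of pairs must coincide, and oligomorphicity of $\group{G}$ then yields finiteness of $G^q$. Loop-freeness of $\relstr{G}^q$ is immediate since any loop would correspond to an intra-orbit edge in $\relstr{G}$, contradicting loop-freeness of $\relstr{G}^o$, and smoothness is inherited easily.

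The main obstacle is the construction of the orbit-respecting reflexive relation $S$ in the enlargement step. In the proof of Proposition~\ref{prop:3orbits}, the specific relations $S_1,\dots,S_4$ were crafted from the symmetric shape of the $\relstr{K}_3$-quotient, exploiting the three pp-definable two-orbit unions $U \cup V$, $U \cup W$, $V \cup W$ as orbit separators. For an arbitrary smooth orbit quotient of algebraic length $1$, these ad hoc separators have to be replaced by more intricate pp-formulae built from the walk-structure of $\relstr{G}^o$; this is where the absorption-style techniques developed for the finite digraph version of Theorem~\ref{thm:loop-lemma} would need to be transported into the oligomorphic setting, presumably via a compactness argument on orbits of finite tuples, and where the recently announced ideas of Kozik would enter. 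A smaller but non-trivial point is ensuring that the algebraic-length-$1$ property survives from $\relstr{G}^o$ to the intermediate quotient $\relstr{G}^q$, which may require a short lifting argument using smoothness and $\group{G}$-invariance to replace a witnessing closed walk in $\relstr{G}^o$ by a closed walk of the same algebraic length in $\relstr{G}^q$.
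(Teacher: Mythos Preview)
The statement you are addressing is Conjecture~\ref{conj:loop-two}, and the paper does \emph{not} prove it: immediately after stating it, the authors remark that ``Kozik proved this conjecture in the finite case'' and leave the infinite version open as motivation for further work. There is therefore no proof in the paper to compare your proposal against.

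Your write-up is, in fact, not a proof but a research plan, and you acknowledge this yourself when you identify ``the main obstacle'' as the construction of the orbit-respecting reflexive relation $S$ in the enlargement step and note that ``the recently announced ideas of Kozik would enter'' there. That is exactly the gap: the pp-definitions of $S_1,\dots,S_4$ in Proposition~\ref{prop:3orbits} depend crucially on the quotient being $\relstr{K}_3$ (three orbits, all two-orbit unions pp-definable via a single edge to the third orbit), and there is no known general recipe for producing analogous separators when the orbit quotient is an arbitrary smooth digraph of algebraic length~$1$. Without that, Step~1 of your plan cannot be carried out, and the rest of the argument never gets off the ground. A secondary issue you flag --- lifting algebraic length~$1$ from $\relstr{G}^o$ to $\relstr{G}^q$ --- is also genuinely non-trivial and not addressed. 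In short, your outline correctly identifies the shape a proof might take by analogy with Proposition~\ref{prop:3orbits}, but it does not supply the missing ingredient, and neither does the paper; the conjecture remains open.
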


Kozik proved this conjecture in the finite case. The infinite version may be of use for, e.g., answering the original question about (continuous) homomorphisms to projections discussed after Conjecture~\ref{conj:newdicho} --- the implication (2) $\Rightarrow$ (1) in Corollary~\ref{cor:homos}.

\subsection{Discussion of the pseudo-Siggers theorem}
\lib{Theorem~\ref{thm:main_fin} gives several equivalent conditions to the existence of a Siggers operation (and many more exist in the literature). Are there analogues in the $\omega$-categorical setting?}

A positive answer to Conjecture~\ref{conj:loop}, at least under the assumption that $R$ contains edges $(r,a), (a,r), (r,e), (e,a)$ for some $a,e,r 
\in G$,  would allow a strengthening of item~(iii) of Theorem~\ref{thm:main} to a $4$-variable pseudo-Siggers operation $s$ satisfying $\alpha s(r,a,r,e) \approx \beta s(a,r,e,a)$ for some unary functions $\alpha,\beta$ (which works in the finite case, see~\cite{Sig10,KMM14}).
Another open problem is whether it is possible to replace item (iii) of Theorem~\ref{thm:main} by a pseudo-weak-near-unanimity operation, i.e., an operation $w(x_1,\ldots,x_n)$, for some $n\geq 2$, satisfying $\alpha_1 w(x,\ldots,x,y) \approx \alpha_2 w(x,\ldots,x,y,x)\approx\ldots\approx \alpha_n w(y,x\ldots,x)$, for unary functions $\alpha_1,\ldots,\alpha_n$ (which again works in the finite, see~\cite{MM08}).

On the negative side, it has been observed that the CSP classification for the  reducts of $(\relstr{Q};<)$ in~\cite{tcsps-journal} shows that
the syntactically strongest characterization of (iii) in Theorem~\ref{thm:main} in the finite case by means of \emph{cyclic operations} (see~\cite{Cyclic}) cannot be lifted to the infinite, at least not in the straightforward way of adding unary functions. For a well-known  example, the clone of those functions in $\Pol(\relstr{Q};<)$ which are injective (up to dummy variables) contains a pseudo-Siggers operation~\cite{tcsps-journal} (and  the CSP of the corresponding structure is in P), but no operation $c(x_1,\ldots,x_n)$, where $n\geq 2$, satisfying $\alpha c(x_1,\ldots,x_n) \approx \beta c(x_2,\ldots,x_n,x_1)$: if there was such an operation, then picking any non-constant tuple $(a_1,\ldots,a_n)\in\relstr{Q}^n$, there would be an automorphism $\gamma$ of $(\relstr{Q};<)$ such that $c(b_1,\ldots,b_n)=\gamma c(b_2,\ldots,b_n,b_1)$ for all 
$b_1,\ldots, b_n\in \{a_1,\ldots,a_n\}$, since $(\relstr{Q};<)$ is a core structure.  But then shifting $n$ times we would get $c(a_1,\ldots,a_n)=\gamma^n c(a_1,\ldots,a_n)$, and so $\gamma$ would fix $c(a_1,\ldots,a_n)$, contradicting the injectivity of $c$.

Finally, we remark that our main theorem as well as the pseudoloop lemma can be used as tools for proving hardness. \lib{We are not aware of any such applications so far, but let us state these NP-hardness criteria for future reference.}


\lib{
\begin{corollary}
Let $\relstr{A}$ be an $\omega$-categorical core structure.
Assume that $\relstr{A}$ pp-interprets with parameters a pseudoloop-free graph containing a $\relstr{K}_3$, or equivalently, that $\relstr{A}$ does not have a pseudo-Siggers polymorphism. Then $\CSP(\relstr{A})$ is $\NP$-hard.
\end{corollary}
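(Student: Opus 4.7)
The plan is to derive the corollary directly by combining Theorem~\ref{thm:main} with the pseudoloop lemma (Lemma~\ref{lem:loop}) and the interpretation-versus-continuous-homomorphism equivalence of Theorem~\ref{thm:int}. Since any pp-interpretation with parameters in a core yields a polynomial-time reduction between the corresponding CSPs, it suffices to show that each stated condition forces $\relstr{A}$ to pp-interpret $\relstr{K}_3$ with parameters; NP-hardness of $\CSP(\relstr{A})$ then follows from the NP-hardness of $\CSP(\relstr{K}_3)$ (graph 3-coloring).

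For the equivalence, the direction from the failure of the pseudo-Siggers identity to pp-interpretation of a pseudoloop-free graph containing $\relstr{K}_3$ is essentially free. By Theorem~\ref{thm:main}, the negation of (iii) implies the negation of (i), i.e., some stabilizer $\Pol(\relstr{A},c_1,\ldots,c_n)$ admits a continuous clone homomorphism to $\trivclone$. Applying Theorem~\ref{thm:int} to the expansion $(\relstr{A},c_1,\ldots,c_n)$ shows that $\relstr{A}$ pp-interprets all finite structures with parameters, in particular the graph $\relstr{K}_3$ itself, which trivially contains a $\relstr{K}_3$ and is pseudoloop-free.

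For the converse direction, suppose $\relstr{A}$ pp-interprets with parameters a graph $\relstr{G}$ that contains $\relstr{K}_3$ and has no pseudoloop. The pp-interpretation together with the coordinate-wise action of the automorphisms of $\relstr{A}$ fixing the parameters induces a natural oligomorphic permutation group $\group{G}$ on the vertex set of $\relstr{G}$ preserving the edge relation; this is precisely the gg-system of Definition~\ref{defn:gg} with respect to which the pseudoloop-free hypothesis is meant. Since $\relstr{G}$ contains $\relstr{K}_3$ and $(\relstr{G},\group{G})$ is pseudoloop-free, the pseudoloop lemma applies and shows that $(\relstr{G},\group{G})$ pp-interprets $\relstr{K}_3$ with parameters. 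Composing this with the given pp-interpretation of $\relstr{G}$ in $\relstr{A}$ yields a pp-interpretation of $\relstr{K}_3$ in $\relstr{A}$ with parameters, so $\relstr{A}$ pp-interprets all finite structures with parameters. Theorem~\ref{thm:int} then provides a continuous clone homomorphism from some stabilizer of $\Pol(\relstr{A})$ to $\trivclone$, and the implication (i) $\To$ (iii) of Theorem~\ref{thm:main} gives the absence of a pseudo-Siggers polymorphism.

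I anticipate no substantive obstacle. The only point that warrants care is the verification that a pp-interpretation of $\relstr{G}$ in $\relstr{A}$ with parameters canonically determines the gg-system structure required by the pseudoloop lemma; this amounts to unwinding the definitions and noting that the coordinate-wise action of a parameter-stabilizer of $\Aut(\relstr{A})$ on the domain of the interpreting coordinate map descends to an oligomorphic action on $G$ that preserves every pp-definable relation, and in particular the edge relation of $\relstr{G}$.
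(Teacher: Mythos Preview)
The paper states this corollary without proof, presenting it merely as a ``hardness criterion for future reference'' derived from the main theorem and the pseudoloop lemma. Your argument is correct and is exactly the intended one: combine Lemma~\ref{lem:loop}, Theorem~\ref{thm:int}, and Theorem~\ref{thm:main} (or equivalently Corollary~\ref{cor:K3-Siggers}) to obtain a pp-interpretation of $\relstr{K}_3$ in $\relstr{A}$ with parameters, and then use that pp-interpretations with parameters in a core yield polynomial-time reductions.

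One small point worth tightening: when you argue that the absence of a pseudo-Siggers operation yields a pp-interpretation of $\relstr{K}_3$ and that ``$\relstr{K}_3$ is trivially pseudoloop-free'', note that the group induced on $\relstr{K}_3$ via the interpretation need not be trivial a priori, and any nontrivial element of $S_3$ produces a pseudoloop. This is harmless, since you may simply enlarge the parameter set by representatives of the three vertices in the interpreting power of $A$, thereby forcing the induced action to be trivial. Alternatively, invoke Corollary~\ref{cor:K3-Siggers} together with the proof of Lemma~\ref{lem:1}, which directly produces a $\clone{C}$-invariant pseudoloop-free relation on some $C^k$ containing a $\relstr{K}_3$; since $\relstr{A}$ is a core, this relation and the relevant orbits are pp-definable. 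Either route closes the equivalence cleanly. Your handling of the converse direction, including the observation that the orbits of the induced group are pp-definable with parameters because $\relstr{A}$ is a core (so that the gg-system interpretation composes with the given one), is exactly right.
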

}

\section{Mappings to the projection clone}\label{sect:homos}

A recent revision of the basic reductions between CSPs  in~\cite{wonderland} has revealed the importance of a different kind of mapping -- \emph{uniformly continuous h1 clone homomorphisms} -- which is obtained from continuous clone homomorphisms by strengthening the topological requirement and weakening the algebraic one. We now briefly introduce these concepts and refer to~\cite{wonderland} for more details.

The pointwise convergence topology of a function clone is induced by a  \emph{uniformity}~\cite{uniformbirkhoff} which \co{in turn}  is, in the case of a countable domain, induced by a metric~\cite{Reconstruction}. An arity-preserving mapping $\xi$ from a clone $\clone{C}$ on a set $C$ to a  clone $\clone{D}$ on a finite set $D$ is uniformly continuous with respect to this \lib{uniformity} if there exists a finite subset $C' \subseteq C$ such that $\xi(f) = \xi(g)$ whenever $f$ and $g$ agree on $C'$.  
An \emph{h1 clone homomorphism} is a mapping from one function clone to another which preserves arities and composition with projections; equivalently, preserves identities of height $1$, i.e., identities of the form $f(\mbox{variables} ) \approx g(\mbox{variables})$. For example, the pseudo-Siggers identity of our main theorem is not of height~1, but becomes height~1 when the outer unary functions are removed.

These notions give us 12=$3 \times2 \times 2$ different meaningful types of morphisms from a clone to the clone of projections:  we may require that the mapping is uniformly continuous, or continuous, or we do no impose any topological condition at all; we may require that the mapping is a clone homomorphism or only an h1 clone homomorphism; finally, we can ask for a total mapping from the clone itself, or only for a mapping from some stabilizer. The trivial implications between the existence of such morphisms form a 3 $\times$ 2 $\times$ 2 grid. However, as it turns out, for polymorphism clones of $\omega$-categorical core structures there are at most six non-equivalent conditions and the implications among them form a chain.

\begin{corollary} \label{cor:homos}
Consider the following statements for an $\omega$-categorical core $\relstr{A}$.
\begin{enumerate}
\item[(1)] $\Pol(\relstr{A})$ has a uniformly continuous clone homomorphism to $\trivclone$.
\item[(1')] $\Pol(\relstr{A})$ has a continuous clone homomorphism to $\trivclone$.
\item[(2)] $\Pol(\relstr{A})$ has a clone homomorphism to $\trivclone$.
\item[(3)] Some $\Pol(\relstr{A},c_1,\ldots,c_n)$ has a clone homomorphism to $\trivclone$.
\item[(3')] Some $\Pol(\relstr{A},c_1,\ldots,c_n)$ has a continuous clone homomorphism to $\trivclone$.
\item[(3'')] Some $\Pol(\relstr{A},c_1,\ldots,c_n)$ has a uniformly continuous clone homomorphism to $\trivclone$.
\item[(4)] Some $\Pol(\relstr{A},c_1,\ldots,c_n)$ has a uniformly continuous  h1 clone homomorphism to $\trivclone$. 
\item[(4')] $\Pol(\relstr{A})$ has a uniformly continuous  h1 clone homomorphism to $\trivclone$. 
\item[(5)] $\Pol(\relstr{A})$ has a continuous  h1 clone homomorphism to $\trivclone$. 
\item[(5')] Some $\Pol(\relstr{A},c_1,\ldots,c_n)$ has a continuous  h1 clone homomorphism to $\trivclone$. 
\item[(6)] $\Pol(\relstr{A})$ has an  h1 clone homomorphism to $\trivclone$. 
\item[(6')] Some $\Pol(\relstr{A},c_1,\ldots,c_n)$ has an  h1 clone homomorphism to $\trivclone$. 
\end{enumerate}
Then all statements with equal number are equivalent, and (i) implies (j) for all $1\leq i\leq j\leq 6$.
\end{corollary}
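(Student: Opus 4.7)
The plan is to organize the twelve statements along three independent axes — topology (uniformly continuous / continuous / none), kind of morphism (full clone homomorphism / h1 clone homomorphism), and domain (all of $\Pol(\relstr{A})$ / some stabilizer $\Pol(\relstr{A},c_1,\dots,c_n)$) — record the tautological implications going from ``stronger'' to ``weaker'' conditions, and then supply the few nontrivial equivalences that collapse the resulting grid into the claimed chain.

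The tautological ingredients are: uniform continuity implies continuity implies ``no topological condition''; every clone homomorphism is in particular an h1 clone homomorphism; and any (topological) morphism defined on $\Pol(\relstr{A})$ restricts to every stabilizer with the same topological strength, since a stabilizer is a closed subclone. Combined with the nontrivial equivalences stated below they deliver the chain $(1) \Rightarrow (2) \Rightarrow (3) \Rightarrow (4) \Rightarrow (5) \Rightarrow (6)$: for example, $(3) \Rightarrow (4)$ is obtained as $(3) \Leftrightarrow (3'')$ followed by the trivial $(3'') \Rightarrow (4)$, and $(4) \Rightarrow (5)$ as $(4) \Leftrightarrow (4')$ followed by the trivial $(4') \Rightarrow (5)$.

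For the equivalences grouped under $3$, the equivalence $(3) \Leftrightarrow (3')$ is exactly the contrapositive of $(i) \Leftrightarrow (ii)$ in the main Theorem~\ref{thm:main}. The equivalence $(3') \Leftrightarrow (3'')$ will follow from the general principle that a continuous clone homomorphism from an oligomorphic clone to a clone on a finite set is automatically uniformly continuous: stabilizing finitely many points preserves oligomorphicity, so by oligomorphicity the $\xi$-preimage of each element has finitely many classes in each arity, each of them open, and a compactness/finite-choice argument then produces a single finite witness $C' \subseteq A$ such that $\xi(f)$ is determined by $f|_{(C')^n}$ for all arities $n$ simultaneously. The same argument applied to $\Pol(\relstr{A})$ itself yields $(1) \Leftrightarrow (1')$.

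The remaining equivalences $(4) \Leftrightarrow (4')$, $(5) \Leftrightarrow (5')$ and $(6) \Leftrightarrow (6')$ all assert that for h1 clone homomorphisms to $\trivclone$ replacing $\Pol(\relstr{A})$ by a stabilizer does not change solvability, and that topological strength is preserved. The easy direction in each case is restriction. For extending an h1 clone homomorphism $\xi$ from a stabilizer $\Pol(\relstr{A},c_1,\dots,c_n)$ to all of $\Pol(\relstr{A})$, the key ingredient is the core trick: since $\relstr{A}$ is a model-complete core, $\Aut(\relstr{A})$ is dense in the endomorphism monoid, so for every $f \in \Pol(\relstr{A})$ the diagonal endomorphism $x \mapsto f(x,\dots,x)$ agrees on $\{c_1,\dots,c_n\}$ with some $\alpha \in \Aut(\relstr{A})$, and $\alpha^{-1} \circ f$ then lies in the stabilizer. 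Setting $\tilde\xi(f) := \xi(\alpha^{-1}\circ f)$ and checking, using the h1 identities enjoyed by $\xi$, that the value is independent of the choice of $\alpha$ yields the extension, and continuity or uniform continuity is propagated by the same approximation. I expect the main obstacle to be reconciling the weaker h1 invariance (under which compositional information outside individual function symbols is not respected) with the bookkeeping needed to make the extension well-defined and to carry along the topological type; this is exactly the content of the now-standard machinery developed in~\cite{wonderland,BPP-projective-homomorphisms}.
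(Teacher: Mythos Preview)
Your proposal is correct and follows essentially the same route as the paper: the equivalence $(3)\Leftrightarrow(3')$ comes from the main theorem, the equivalences $(1)\Leftrightarrow(1')$ and $(3')\Leftrightarrow(3'')$ from the principle that continuous clone homomorphisms from oligomorphic clones to clones on finite sets are automatically uniformly continuous (which the paper cites from~\cite{Topo-Birk,uniformbirkhoff} rather than sketching), and the equivalences among the h1 items from the core trick producing a uniformly continuous h1 clone homomorphism $\Pol(\relstr{A})\to\Pol(\relstr{A},c_1,\dots,c_n)$ (which the paper cites from~\cite{wonderland}). One remark: the paper explicitly notes that the argument for this last ingredient given in~\cite{wonderland} as Corollary~8.1 is flawed and must be replaced by combining Lemma~3.9 with Corollary~4.7 there; your anticipated obstacle---that h1 invariance alone does not obviously make $f\mapsto\xi(\alpha^{-1}f)$ well-defined, since outer unary composition is not a height~1 identity---is precisely where this subtlety lives, so you are right to flag it rather than claim it is routine.
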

\begin{proof}
Items (1) and (1') are equivalent by the proof of Lemma 20 in~\cite{Topo-Birk} (cf.~\cite{uniformbirkhoff} for an explicit proof thereof), (3) and (3') are equivalent by Theorem~\ref{thm:main}, and (3') and (3'') again by the proof in~\cite{Topo-Birk}. Items (4) and (4'), (5) and (5'), as well as (6) and (6') are equivalent by Corollary 8.1 in~\cite{wonderland} which implies that there is a uniformly continuous h1 clone homomorphism from $\Pol(\relstr{A})$ to $\Pol(\relstr{A},c_1,\ldots,c_n)$ (although the argument presented there does not work; a correct argument is obtained by combining Lemma 3.9 \co{of that article with Corollary 4.7 thereof)}. Given these equivalences
it is clear that the strength of the statements is decreasing with increasing number.
\end{proof}

In~\cite{wonderland} it was shown that (4') still implies $\NP$-hardness of the CSP. This challenged the original dichotomy conjecture of Bodirsky and Pinsker, Conjecture~\ref{conj:dicho}, which identified NP-hardness with (3'). More on the eventual resolution of this matter in the next subsection.

Trung Van Pham has observed that our main theorem implies that the failure of (4') is witnessed by the satisfaction of the pseudo-Siggers identity even for $\omega$-categorical structures which are not cores. In the following proof, we assume that the reader is familiar with the basic facts about the core of an $\omega$-categorical structure, and refer to~\cite{BKOPP, BKOPP-equations} for more details. 
\begin{corollary}
Let $\relstr{B}$ be $\omega$-categorical. If $\Pol(\relstr{B})$ has no uniformly continuous h1 clone homomorphism to $\trivclone$, then $\Pol(\relstr{B})$ contains a pseudo-Siggers operation.
\end{corollary}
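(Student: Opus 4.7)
The strategy is to apply Theorem~\ref{thm:main} to the $\omega$-categorical model-complete core $\relstr{A}$ of $\relstr{B}$ and then transfer the resulting pseudo-Siggers operation from $\Pol(\relstr{A})$ back to $\Pol(\relstr{B})$. Fix homomorphisms $h\colon\relstr{A}\to\relstr{B}$ and $g\colon\relstr{B}\to\relstr{A}$; because $\relstr{A}$ is a model-complete core, the composition $gh$ is an endomorphism of $\relstr{A}$ which agrees on every finite $T\subseteq A$ with some $\phi\in\Aut(\relstr{A})$.

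First, I would show that no stabilizer $\Pol(\relstr{A},c_1,\ldots,c_n)$ admits a clone homomorphism to $\trivclone$: otherwise, by the implications and equivalences of Corollary~\ref{cor:homos} applied to the core~$\relstr{A}$, $\Pol(\relstr{A})$ would admit a uniformly continuous h1 clone homomorphism to $\trivclone$. Since pp-constructibility of a fixed finite structure is preserved under homomorphic equivalence of $\omega$-categorical structures and, for $\omega$-categorical structures, corresponds to the existence of a uniformly continuous h1 clone homomorphism from the polymorphism clone to that of the finite structure (see~\cite{wonderland}), the same would hold for $\Pol(\relstr{B})$, contradicting the hypothesis. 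Theorem~\ref{thm:main} applied to $\relstr{A}$ therefore produces a $6$-ary $s\in\Pol(\relstr{A})$ together with unary $\alpha,\beta\in\Pol(\relstr{A})$ satisfying $\alpha s(x,y,x,z,y,z)\approx\beta s(y,x,z,x,z,y)$.

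To transfer this identity to $\Pol(\relstr{B})$, I would invoke Lemma~\ref{lem:localglobal}, reducing the task to producing local pseudo-Siggers operations on every finite $F\subseteq B$. Given such an $F$, write $\vec{x}_L=(x,y,x,z,y,z)$ and $\vec{x}_R=(y,x,z,x,z,y)$ for $\vec{x}=(x,y,z)$, put $s_B(x_1,\ldots,x_6):=h(s(g(x_1),\ldots,g(x_6)))\in\Pol(\relstr{B})$, and let $T\subseteq A$ be the finite set collecting all values $s(g(\vec{x}_L))$ and $s(g(\vec{x}_R))$ for $\vec{x}\in F^3$ (with $g$ applied componentwise). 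Pick $\phi\in\Aut(\relstr{A})$ agreeing with $gh$ on $T$, and define $\alpha_B(b):=h(\alpha(\phi^{-1}(g(b))))$ and $\beta_B(b):=h(\beta(\phi^{-1}(g(b))))$; each is a composition of homomorphisms, so both lie in $\Pol(\relstr{B})$. For $\vec{x}\in F^3$ and $a:=s(g(\vec{x}_L))\in T$ one then verifies $\alpha_B(h(a))=h(\alpha(\phi^{-1}((gh)(a))))=h(\alpha(a))$, using $\phi|_T=(gh)|_T$, and analogously for $\beta_B$ at $s(g(\vec{x}_R))$. Applying $h$ to the pseudo-Siggers identity of $\relstr{A}$ at the componentwise image $g(\vec{x})$ then yields $\alpha_B s_B(\vec{x}_L)=\beta_B s_B(\vec{x}_R)$ for every $\vec{x}\in F^3$.

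The main obstacle is the transfer step: the naive definition $\alpha_B:=h\alpha g$ fails because the endomorphism $gh$ of $\relstr{A}$ does not commute with the unary operations $\alpha$ and $\beta$ of the pseudo-Siggers identity. Working locally on $F$ and inserting the automorphism $\phi^{-1}$ between $g$ and the unary operation exploits precisely the density of $\Aut(\relstr{A})$ in the endomorphism monoid of $\relstr{A}$ to absorb the unwanted occurrence of $gh$ on the finite set $T$ where the identity is required.
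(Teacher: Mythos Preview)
Your proposal is correct and follows essentially the same route as the paper: pass to the $\omega$-categorical core $\relstr{A}$, use \cite{wonderland} and Corollary~\ref{cor:homos} to conclude that $\Pol(\relstr{A})$ has no uniformly continuous h1 clone homomorphism to $\trivclone$, apply Theorem~\ref{thm:main} to obtain a pseudo-Siggers operation in $\Pol(\relstr{A})$, and then transfer it locally to $\Pol(\relstr{B})$ via Lemma~\ref{lem:localglobal}, correcting the extra endomorphism on a finite set by an automorphism of the core. The only cosmetic difference is that the paper assumes $A\subseteq B$ as an induced substructure (so your $h$ becomes the inclusion and your $g$ is the paper's $h$), which slightly streamlines the bookkeeping; your version with two explicit homomorphisms and the automorphism $\phi$ absorbing $gh$ on $T$ is the same argument written without that reduction.
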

\begin{proof}
Let $\relstr{A}$ be the core of $\relstr{B}$. Since  $\Pol(\relstr{B})$ has no uniformly continuous h1 clone homomorphism to $\trivclone$, neither does $\Pol(\relstr{A})$, by the results from~\cite{wonderland}. Hence, by Corollary~\ref{cor:homos} and Theorem~\ref{thm:main}, $\Pol(\relstr{A})$ has a pseudo-Siggers operation $s(x_1,\ldots,x_6)$ whose pseudo-Siggers identity is witnessed by unary functions $\alpha,\beta\in\Pol(\relstr{A})$.

To show that $\Pol(\relstr{B})$ contains a pseudo-Siggers operation as well, we show that it contains local pseudo-Siggers operations and then refer to Lemma~\ref{lem:localglobal}. So let $F\subseteq B$ be finite. 
We may assume that $A\subseteq B$ and that $\relstr{A}$ is an induced substructure of $\relstr{B}$. Let $h$ be a homomorphism from $\relstr{B}$ into $\relstr{A}$. Let $F'\subseteq A$ be the image of $F^6$ under the operation $s'(x_1,\ldots,x_6):=s(h(x_1),\ldots,h(x_6))\in\Pol(\relstr{B})$, and let $\gamma$ be an automorphism of $\relstr{A}$ which agrees with $h$ on $F'$; this automorphism exists since $\relstr{A}$ is a core and since $h$ restricted to $A$ is an endomorphism of $\relstr{A}$. Then for all \co{$x,y,z\in F$} we have 
$$
\alpha \gamma^{-1} h s'(x,y,x,z,y,z)= \beta \gamma^{-1} h s'(y,x,z,x,z,y)\; .
$$
Since $\alpha\gamma^{-1}h$ and $\beta\gamma^{-1}h$ are contained in $\Pol(\relstr{B})$, we see that the pseudo-Siggers identity is indeed satisfied by $s'$ on $F$.
\end{proof}


\subsection{Converse implications}

The implication (2) $\Rightarrow$ (1) is the original question from~\cite{BPP-projective-homomorphisms} about clone  homomorphisms to projections, which we already \co{briefly} discussed in the present paper after Conjecture~\ref{conj:newdicho}. It remains open.  

A counterexample to (3) $\Rightarrow$ (2) can be found  among the first-order reducts of $(\relstr{Q};<)$, investigated in~\cite{tcsps-journal}: it is the structure of \emph{shuffle closed temporal relations}, or equivalently, those relations preserved by a binary operation called  \emph{pp}, which essentially behaves like one projection on one half plane, and like the other projection on the remaining half plane. It is not hard to see that it satisfies (3) but not (2).

The following example, a simplified version of a construction communicated to us by Ross Willard, shows that the implication from (2) to (3) cannot be reversed even for finite structures. The example is a disjoint union of two copies of $\relstr{K}_3$ with an additional relation ensuring that the structure is a core.
 
\begin{example}
Consider $\relstr{A} = (A; R,S)$, where
\begin{align*}
A &:= \{1_0, 2_0, 3_0, 1_1, 2_1, 3_1\}, \\
R(a_i,b_j) &:\leftrightarrow  (i=j) \wedge (a \neq b), \mbox{ and } \\
S(a_i,b_j) &:\leftrightarrow  i \neq j.
\end{align*}
The structure $\relstr{A}$ is a core. Indeed, for each endomorphism $\alpha$ of $\relstr{A}$ and each pair of elements $a_i, b_j \in \relstr{A}$ with $\alpha(a_i)=\alpha(b_j)$ we have $i=j$ since $\alpha$ preserves $S$, and then $a=b$ since $\alpha$ preserves $R$. Therefore, each endomorphism of $\relstr{A}$ is injective, and hence an automorphism as $\relstr{A}$ is finite.

Next observe that, for an arbitrary chosen $c \in A$, the substructure of $(A; R)$ induced by the set $\{x\;|\; S(x,c)\}$, which is obviously pp-definable with a parameter, is isomorphic to $\relstr{K}_3$. It follows that $\relstr{A}$  pp-interprets $\relstr{K}_3$ with parameters and therefore a stabilizer of $\Pol(\relstr{A})$ has a clone homomorphism to $\trivclone$. 

Finally, in order to show that $\Pol(\relstr{A})$ does not have a homomorphism to $\trivclone$, we define polymorphisms $\alpha$ and $s$, where $\alpha$ is  unary and $s$ ternary, satisfying a \emph{non-trivial} identity: that is, an identity not satisfiable by members of $\trivclone$. 
Set
$$
\alpha(a_i) := a_{1-i} \quad \mbox{ and } \quad 
s(a_i,b_j,c_k) := \left\{ 
\begin{array}{ll}
c_k & \mbox{ if } i=j \\
a_i & \mbox{ if } i \neq j
\end{array}
\right. .
$$
Now $\alpha$, $s$ are clearly polymorphisms \co{of $\relstr{A}$} and they satisfy the nontrivial identity
$$
s(x,x,y) \approx s(y,\alpha(y),x)
$$
since both sides are equal to $y$ by the definitions.
\end{example}

A counterexample to (4) $\Rightarrow$ (3) is given in~\cite{BKOPP}. However, it is also proved in the same paper that each such counterexample must have at least double exponential growth (in $n$) of the number of orbits of $n$-tuples. This is never the case for reducts of finitely bounded homogeneous structures, so that (3) and (4) are equivalent in the range of Conjectures~\ref{conj:dicho} and~\ref{conj:newdicho}, and the conjectures remain plausible even \co{in light of} the above-mentioned results from~\cite{wonderland} which show that (4) still implies NP-hardness. 

No counterexamples are known for (5) $\Rightarrow$ (4) or (6) $\Rightarrow$ (5).
We suspect that (6) $\Rightarrow$ (4) at least for reducts of finitely bounded homogeneous structures. Note that if Conjecture~\ref{conj:dicho} is true and (6) does not imply (4), then we would be in a rather peculiar situation: The complexity of a CSP would depend only on the height 1 identities together with the uniform structure of the polymorphism clone (this fact follows from~\cite{wonderland}), but height 1 identities per se would be insufficient to decide the complexity. 
On the other hand, the topology would be irrelevant when all the identities (or at least the pseudo-identities as in our main theorem) are taken into account. 

We now provide an example showing that the implication (6) $\Rightarrow$ (5) does not hold in general for polymorphism clones. Our example answers Question~7.2 in~\cite{BPP-projective-homomorphisms} to the negative, by the results in~\cite{wonderland,uniformbirkhoff,Topo-Birk}.

\begin{example}
Define for every $n\geq 1$ a binary function $g_n\colon \omega^2\To\omega$ as follows: 
$$
g_n(x,y):= \left\{ 
\begin{array}{ll}
n & \mbox{ if } x<n \\
x & \mbox{ otherwise.}
\end{array}
\right.
$$
Let $\clone{C}$ be the smallest function clone of the domain $\omega$ which contains the set  $\{g_n\;|\; n\geq 1\}$; then $\clone{C}$ consists of all term functions over this set.

Let $f\in \clone{C}$, and denote its arity by $m$. We claim that there exists $k_f\geq 0$ such that $f(x_1,\ldots,x_m)\geq k_f$ for all $x_1,\ldots,x_m \in [0,k_f)$, and such that $f$ restricted to $[k_f,\infty)^m$ equals a projection. To see this, we use induction over terms. Clearly,  the claim is true for the functions $g_n$ (set $k_{g_n}:=n$) and all projections. Now let $$f(x_1,\ldots,x_m)=g_n(s(x_1,\ldots,x_m),t(x_1,\ldots,x_m)),$$ where $n\geq 1$, and $s,t\in \clone{C}$ satisfy the claim.
\begin{itemize}
\item If $x_1,\ldots,x_m \in [0,k_s)$, then $f(x_1,\ldots,x_m)\geq s(x_1,\ldots,x_m)\geq k_s$ since $g_n(x,y)\geq x$ for all $x,y\in\omega$ by definition. 

\item If $x_1,\ldots,x_m \in [0,n)$, then either $s(x_1,\ldots,x_m)\geq n$ and hence 
$$f(x_1,\ldots,x_m)\geq s(x_1,\ldots,x_m)\geq n,$$ or $s(x_1,\ldots,x_m)<n$, in which case we have  $g_n(s(x_1,\ldots,x_m),t(x_1,\ldots,x_m))= n$ by the definition of $g_n$.

\item On $[\max\{k_s,n\},\infty)^m$ we have that $s(x_1,\ldots,x_m)$ behaves like a projection, by the definition of $k_s$. Therefore, for $x_1,\ldots,x_m\in [\max\{k_s,n\},\infty)$ we have $s(x_1,\ldots,x_m)\in\{x_1,\ldots,x_m\}$, and in particular $s(x_1,\ldots,x_m)\geq n$. Thus,  $$g_n(s(x_1,\ldots,x_m),t(x_1,\ldots,x_m))=s(x_1,\ldots,x_m).$$ Consequently, $f$ behaves like the same projection as $s$ on $[\max\{k_s,n\},\infty)^m$.
\end{itemize}
Summarizing, we obtain that setting $k_f:=\max\{k_s,n\}$ proves the claim.


The mapping $\xi$ which sends every function $f(x_1,\ldots,x_m)\in \clone{C}$ to the projection which it equals on $[k_f,\infty)^m$ clearly is a clone homomorphism from  $\clone{C}$ to $\trivclone$. 

However, the clone $\clone{C}$ does not enjoy any uniformly continuous h1 clone homomorphism $\xi'$ to $\trivclone$: uniform continuity would imply that there exists a finite set $F\subseteq \omega$ such that the value of every function in $\clone{C}$ under $\xi'$ only depends on its restriction to $F$. But then $\xi'(g_n(x,y))=\xi'(g_n(y,x))$ for any $n\geq 1$ such that $F\subseteq [0,n)$, a contradiction since $\xi'(g_n(y,x))$ is, by the definition of an h1 clone homomorphism, also the opposite projection of $\xi'(g_n(x,y))$ (as such homomorphisms preserve the switching of variables).

The topological closure $\overline{\clone{C}}$ of $\clone{C}$ in the space of all finitary functions on $\omega$ is a polymorphism clone. \co{Containing $\clone{C}$, the clone $\overline{\clone{C}}$ has no uniformly continuous h1 clone homomorphism to $\trivclone$, and thus does not satisfy~(5). Moreover,} every $f\in \overline{\clone{C}}$ still enjoys the property of the claim above: for any sequence $(f_n)_{n\in\omega}$ in $\clone{C}$ converging to $f$ we must have that $(k_{f_n})_{n\in\omega}$ is eventually constant, and so setting $k_f$ to that value works. Hence, $\overline{\clone{C}}$ has a clone homomorphism to $\trivclone$ as well, \co{and in particular an h1 clone homomorphism, showing that it satisfies~(6).}

We remark that the mapping $\xi$ is continuous: if a sequence $(f_n)_{n\in\omega}$ in $\overline{\clone{C}}$ converges to $f\in \overline{\clone{C}}$, then we must have $k_f=k_{f_n}$ eventually, and that all $f_n$ behave like the same projection on $[k_f,\infty)^m$ eventually.\smallskip

\end{example}


\subsection{Taylor equations without idempotency}

Long before the CSP motivated the intensive investigation of function clones that do not have a clone homomorphism to $\trivclone$, such clones were characterized by Walter Taylor by means of certain identities~\cite{T77}. His theorem does not require any finiteness or topological assumptions. On the other hand, it only concerns \emph{idempotent} clones, that is, clones whose all operations $f$ satisfy the identity $f(x,x\co{,} \dots,x) \approx x$ (or, in other words, all operations are polymorphisms of all singleton unary relations).

\begin{theorem}[\cite{T77}] \label{thm:Taylor}
Let $\clone{C}$ be an idempotent function clone. Then $\clone{C}$ does not have a clone homomorphism to $\trivclone$ if and only if it contains an operation $t(x_1,\ldots,x_n)$ for some $n\geq 1$ such that, for any $1\leq i\leq n$, $t$ satisfies an identity of the form $t(\dots,x,\dots) \approx t(\dots,y,\dots)$, where $x$ and $y$ are at the $i$-th position (and the remaining positions contain some variables which are not further specified).
\end{theorem}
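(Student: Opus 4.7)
For the easier direction, suppose $\clone{C}$ contains a Taylor operation $t$ of arity $n$ and, toward a contradiction, that $\xi\colon \clone{C}\to \trivclone$ is a clone homomorphism. Then $\xi(t)$ must be some $n$-ary projection, say $\pi_j$. By hypothesis, $t$ satisfies an identity of the form $t(u_1,\ldots,u_n) \approx t(v_1,\ldots,v_n)$ where $u_j$ and $v_j$ are distinct variables, say $x$ and $y$. Since clone homomorphisms preserve identities, applying $\xi$ yields $\pi_j(u_1,\ldots,u_n) \approx \pi_j(v_1,\ldots,v_n)$, i.e., $x \approx y$ in $\trivclone$, contradicting the fact that $\trivclone$ is defined on a set of size at least two.

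For the converse, I would argue contrapositively. Let $\mathbf{A}$ be an algebra with clone of term operations equal to $\clone{C}$, and let $\mathbf{F} = F_{\clone{V}}(x,y)$ be the free algebra on two generators in the variety $\clone{V} = \mathrm{HSP}(\mathbf{A})$. A clone homomorphism $\clone{C}\to\trivclone$ corresponds bijectively to a two-element quotient of $\mathbf{F}$ that separates the generators and on which every fundamental operation acts as a projection; equivalently, to a choice function $\phi$ assigning to each $n$-ary $f\in \clone{C}$ a coordinate $\phi(f)\in[n]$ such that (a) $f$ satisfies no identity $f(\ldots,x,\ldots)\approx f(\ldots,y,\ldots)$ with distinct variables at position $\phi(f)$, and (b) the composition rule $\phi(f(g_1,\ldots,g_n)) = \phi(g_{\phi(f)})$ is respected. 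The plan is to obtain such $\phi$ via the algebraic route: apply Zorn's lemma to the poset of congruences on $\mathbf{F}$ that do not relate $x$ and $y$ to produce a maximal such congruence $\theta$; the quotient $\mathbf{F}/\theta$ is then subdirectly irreducible with monolith $\mu$ containing $(x/\theta, y/\theta)$.

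The main obstacle is to establish that $|\mathbf{F}/\theta|=2$ and that every operation acts as a projection on $\mathbf{F}/\theta$. The assumption of no Taylor term enters combinatorially: any third element $z/\theta$ together with the monolith $\mu$ would allow one to reverse-engineer a Taylor operation in $\clone{C}$ by writing $z = t(x,y)$ for some binary term $t$ and then, using that $\mu$ is generated by $(x/\theta,y/\theta)$, producing at every coordinate a pair of substitutions that collapse under $t$ composed with carefully chosen auxiliary operations extracted from $\mu$. Idempotency plays a crucial role throughout, since it ensures that identifying variables corresponds to passing to subalgebras and that the subalgebra of $\mathbf{F}/\theta$ generated by $\{x/\theta, y/\theta\}$ is all of $\mathbf{F}/\theta$. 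In contrast to our main theorem, no topological hypothesis or oligomorphicity assumption is available, so the argument must proceed by purely universal-algebraic methods; in particular, the compactness technique of Section~\ref{sect:main} is not applicable, and the extension problem must be resolved by the Zorn step together with this direct quotient analysis.
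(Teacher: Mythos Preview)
Your easy direction is correct and standard.

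For the hard direction, note first that the paper does not give its own proof (the result is cited from~\cite{T77}); it only describes the standard argument: one shows that any h1 clone homomorphism from an \emph{idempotent} clone to $\trivclone$ is automatically a clone homomorphism (this is the idempotent special case of Proposition~\ref{prop:almost}), and then uses the absence of a Taylor term to produce an h1 clone homomorphism by a \emph{compactness} argument. Your claim that ``the compactness technique of Section~\ref{sect:main} is not applicable'' conflates two unrelated devices: Section~\ref{sect:main} uses topological convergence in an oligomorphic clone, whereas the compactness in the standard Taylor proof is the compactness theorem of propositional logic (or an equivalent Zorn step on partial coordinate assignments), which requires no topology and no oligomorphicity.

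Your alternative route via a subdirectly irreducible quotient of $\mathbf{F}=F_V(x,y)$ has a genuine gap: the assertion that $|\mathbf{F}/\theta|=2$ fails for a general maximal $\theta$. Take the variety $V$ of idempotent magmas (one binary $f$ subject only to $f(x,x)\approx x$). This variety has no Taylor term, since the 2-element algebra with $f$ acting as the first projection lies in $V$ and yields a clone homomorphism to $\trivclone$. Now let $B=\{0,1,2\}$ with $f(0,1)=f(1,0)=2$ and $f(i,2)=f(2,i)=i$ for $i\in\{0,1\}$ (and $f(x,x)=x$). One checks directly that $B$ is simple, idempotent, and generated by $\{0,1\}$. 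Hence the kernel of the surjection $F_V(x,y)\to B$ sending $x\mapsto 0$, $y\mapsto 1$ is a maximal congruence not relating $x$ and $y$, yet the quotient has three elements. Zorn's lemma may perfectly well hand you this congruence. So your sketch of how a third element would ``reverse-engineer a Taylor operation'' is not merely vague but cannot work as stated; one must instead pick the \emph{right} congruence, and the clean way to do that is precisely the compactness-then-upgrade argument the paper outlines.
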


The standard proof (see, e.g., \cite{HM88}) of the non-trivial  implication (i.e., from left to right) can be split into two parts. First, every h1 clone homomorphism from an idempotent clone to $\trivclone$ is, in fact, a clone homomorphism. Second, the absence of $t$ as in Theorem~\ref{thm:Taylor} implies the existence of an h1 clone homomorphism by a compactness argument. While the former part can be generalized to the non-idempotent situation as proved in the following proposition, Example~\ref{ex:non-taylor} below shows that the latter part does not generalize in a straightforward fashion.  

For the purpose of the next proposition, we say that a mapping  $\xi\colon \clone{C} \to \clone{D}$, where $\clone{C}, \clone{D}$ are function clones, is an \emph{almost clone homomorphism} if it is an h1 clone homomorphism and it preserves the composition with unary members both from inside and outside; equivalently, $\xi$ preserves arities and all identities of the form
$$
\alpha (f ( \beta_1(x_{i_1}), \dots, \beta_n(x_{i_n}))) \approx
\gamma (g ( \delta_1(x_{j_1}), \dots, \delta_n(x_{j_n}))),
$$
where some of the unary functions of the identity can be missing (so that in particular, h1 identities are of this form).

\begin{proposition}\label{prop:almost}
Let $\clone{C}$ be a function clone and let $\xi\colon \clone{C} \to \trivclone$ be a mapping.
Then $\xi$ is a clone homomorphism if and only if it is an almost clone homomorphism.
\end{proposition}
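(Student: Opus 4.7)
The forward direction is immediate: a clone homomorphism preserves every identity valid in $\clone{C}$, in particular those of the restricted almost-clone form, so it is automatically an almost clone homomorphism.

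For the backward direction, assume $\xi : \clone{C} \to \trivclone$ is an almost clone homomorphism. Since $\xi$ is an h1 clone homomorphism, for every $f \in \clone{C}$ of arity $n$ there is a unique index $r(f) \in \{1,\dots,n\}$ such that $\xi(f) = \pi^n_{r(f)}$. To prove that $\xi$ is a clone homomorphism it is enough to show, for arbitrary $f$ of arity $n$ and $g_1,\dots,g_n$ all of arity $k$, that the composition $h := f(g_1,\dots,g_n)$ satisfies $r(h) = r(g_{r(f)})$. My plan is to introduce the ``split'' $nk$-ary operation
\[
F(\vec x^{(1)}, \dots, \vec x^{(n)}) := f\bigl(g_1(\vec x^{(1)}), \dots, g_n(\vec x^{(n)})\bigr),
\]
indexing its arity by pairs $(j,t) \in \{1,\dots,n\}\times\{1,\dots,k\}$, and to compute $\xi(F) = \pi^{nk}_{(j^*, t^*)}$ from almost-clone-form identities. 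Because the diagonal substitution $\vec x^{(j)} := (x_1,\dots,x_k)$ reconstructs $h$, h1 immediately yields $r(h) = t^*$, reducing everything to the determination of the pair $(j^*, t^*)$.

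The block coordinate $j^*$ is pinned down by collapsing each variable block of $F$ to its own fresh variable: this yields the identity $F(\vec y^{(1)}, \dots, \vec y^{(n)}) = f(u_1(y_1), \dots, u_n(y_n))$, where $\vec y^{(j)} := (y_j,\dots,y_j)$ and $u_j(y) := g_j(y,\dots,y)$ is unary. Both sides are of almost-clone form and preservation by $\xi$ forces $j^* = r(f)$. To get the within-block coordinate $t^* = r(g_{r(f)})$, I would examine, for each candidate $t_0 \in \{1,\dots,k\}$, the substitution of $F$ that collapses all variables to a single variable $y$ except $x^{(r(f))}_{t_0}$, kept as a distinct variable $z$; under this substitution, h1 on $F$ gives the value $z$ if $t^* = t_0$ and $y$ otherwise, while the same operation can be expressed using $v^{t_0}(y,z) := g_{r(f)}(y,\dots,y,z,y,\dots,y)$ placed into the $r(f)$-th input of $f$ with the remaining inputs being $u_j(y)$. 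By h1 applied to $g_{r(f)}$, the essential index of $v^{t_0}$ equals $2$ exactly when $r(g_{r(f)}) = t_0$, so matching the two descriptions consistently across all $t_0$ should force $t^* = r(g_{r(f)})$.

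The main technical obstacle is that the substitution isolating one coordinate of the $r(f)$-th block leaves the binary operation $v^{t_0}$ sitting inside $f$, so the resulting identity is \emph{not literally} of almost-clone form and cannot be preserved verbatim by $\xi$. Overcoming this requires a small chain of almost-clone-form identities that effectively replaces $v^{t_0}$ by its ``projection shadow'': one such auxiliary identity is obtained by further substituting $z := y$ (which collapses $v^{t_0}(y,y) = u_{r(f)}(y)$ to a unary and returns to the already-handled case), and the crucial point is that in the target $\trivclone$ every unary is the identity, so the difference between $v^{t_0}$ and its shadow has no effect on the $\xi$-image once the h1 data for $g_{r(f)}$ is fixed. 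The delicate step of the proof is to arrange these almost-clone-form identities so that their simultaneous preservation by $\xi$ pins $t^*$ down uniquely; this is where the structural simplicity of $\trivclone$ is essential, and it is what makes the non-idempotent part of Taylor's argument go through in this restricted setting.
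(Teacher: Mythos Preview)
Your forward direction and the first half of the backward direction are fine: the block-collapse identity $F(\vec y^{(1)},\dots,\vec y^{(n)})\approx f(u_1(y_1),\dots,u_n(y_n))$ is genuinely of almost-clone form and correctly pins $j^*=r(f)$. The problem is that your attempt to determine the within-block coordinate $t^*$ is not a proof. You yourself note that the identity isolating $x^{(r(f))}_{t_0}$ has the \emph{binary} operation $v^{t_0}$ sitting inside $f$, so it is not of almost-clone form; you then assert that a ``small chain of almost-clone-form identities'' fixes this, but the only concrete identity you offer is the further collapse $z:=y$, which discards precisely the information distinguishing the candidates $t_0$ and hence cannot determine $t^*$. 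The closing paragraph describes what a successful argument would have to accomplish rather than supplying one.

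The paper's proof shows where the missing idea lies. Rather than allowing distinct $g_1,\dots,g_n$, it first treats the special case where the \emph{same} $g$ is plugged into every argument of $f$ (Step~3). With equal inner functions one gains a second almost-clone identity that your setup lacks: identifying the blocks, i.e., setting $\vec x^{(1)}=\dots=\vec x^{(n)}=(x_1,\dots,x_m)$, yields
\[
F(x_1,\dots,x_m,\dots,x_1,\dots,x_m)\;\approx\;\beta\bigl(g(x_1,\dots,x_m)\bigr),\qquad \beta(x):=f(x,\dots,x),
\]
which \emph{is} of almost-clone form (outer unary $\beta$). Preservation of this identity forces the within-block coordinate to be $r(g)$, and combining with the block-collapse gives $\xi(F)=\pi_{(r(f),r(g))}$. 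The general single-insertion composition $f(x_1,\dots,g(\dots),\dots)$ is then reduced to this case in Step~4. Your choice of arbitrary $g_i$'s blocks exactly this cross-block collapse, since $f(g_1(\vec x),\dots,g_n(\vec x))$ is not of the form $\beta(g(\vec x))$; that is why you could not find the needed almost-clone identity.
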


\begin{proof}
Assume that $\xi$ is an almost clone homomorphism. We will show that $\xi$ is a clone homomorphism in a sequence of steps. The first step is well-known.

\bigskip
\noindent \textit{Step 1:} It is enough to show that $\xi$ preserves the projections and identities of the form
$$
f(x_1, \dots, x_{k}, g(x_{k+1}, \dots, x_{l}), x_{l+1}, \dots, x_m) \approx h(x_1, x_2, \dots, x_m).
$$
This follows from the fact that each term can be obtained by repeatedly using the type of composition on the left hand side and then merging the variables.

\bigskip

\noindent \textit{Step 2:} \co{We claim that} whenever an operation $f(x_1,\ldots,x_n) \in \clone{C}$  depends only on its $i$-th argument, then it is mapped via $\xi$ to the $i$-th $n$-ary projection; in particular, $\xi$ preserves projections. Indeed, such  $f$ satisfies the identity $$
f(x_1, \dots, x_n) \approx f(y_1, \dots, y_{i-1},x_i,y_{i+1}, \dots, y_n)
$$ and the only projection satisfying this identity is the projection onto the $i$-th coordinate. Since $\xi$ is an h1 clone homomorphism, the claim follows.

\bigskip

\noindent \textit{Step 3:} \co{We next prove that} the mapping $\xi$ preserves identities of the form
\begin{align*}
f(g(x_{1,1}, \dots, x_{1,m}), g(x_{2,1}, \dots, x_{2,m}), \dots, g(x_{n,1}, \dots, x_{n,m})) \approx \\
h(x_{1,1}, \dots, x_{1,m},x_{2,1}, \dots, \dots, x_{n,m})\; .
\end{align*}
Suppose $\xi(f)$ and $\xi(g)$ are the $i$-th and $j$-th projection, respectively. We need to show that
the above identity holds in $\trivclone$ after application of $\xi$, or in other words, that
$$
x_{i,j}\approx \xi(h)(x_{1,1}, \dots, x_{1,m},x_{2,1}, \dots, \dots, x_{n,m})
$$
holds in $\trivclone$.

Denoting $\alpha(x):=g(x,\dots, x)$, the following identity holds in $\clone{C}$:
$$
h(\underbrace{x_1, \dots, x_1}_{m \times },\co{\underbrace{x_2, \dots, x_2}_{m \times }}, \dots, \underbrace{x_{n}, \dots, x_{n}}_{m \times }) \approx f(\alpha(x_1), \alpha(x_2), \dots, \alpha(x_{n})) \; .
$$
Since $\xi$ preserves this identity and $\xi(\alpha)$ is the identity mapping, we obtain
$$
\xi(h)(\underbrace{x_1, \dots, x_1}_{m \times },\underbrace{x_2, \dots, x_2}_{m \times }, \dots, \underbrace{x_{n}, \dots, x_{n}}_{m \times }) \approx \xi(f)(x_1, x_2, \dots, x_{n}) \approx x_i
$$
and it follows that 
$$
x_{i,\ell}\approx \xi(h)(x_{1,1}, \dots, x_{1,m},x_{2,1}, \dots, \dots, x_{n,m})
$$
for some $1\leq \ell\leq m$.

Similarly, from
$$
h(x_1, \dots, x_{m},x_1, \dots, x_{m}, \dots) \approx \beta(g(x_1, \dots, x_{m})),
$$
where $\beta(x) := f(x, \dots, x)$, we get that 
$$
x_{k,j}\approx \xi(h)(x_{1,1}, \dots, x_{1,m},x_{2,1}, \dots, \dots, x_{n,m})
$$
for some $1\leq k\leq n$, proving \co{the claim of this step}.




\bigskip

\noindent \textit{Step 4:} Now we can finish the proof using Step~1. For simplicity, consider the case where $g$ appears at the first coordinate. So, suppose the following identity holds in $\clone{C}$. 
$$
f(g(x_{1}, \dots, x_{m}), x_{m+1}, \dots, x_n) \approx h(x_1, x_2, \dots, x_n).
$$
Set 
$$
t(x_{1,1}, \dots, x_{1,m}, x_{2,1}, \dots, \dots, x_{n,m}):=f(g(x_{1,1}, \dots, x_{1,m}),g(x_{2,1},\dots), \dots).
$$
From Step~3 we get $\xi(t)(x_{1,1}, \dots, x_{n,m}) \approx \xi(f)(\xi(g)(x_{1,1}, \dots), \dots)$, \co{and} in particular, 
\begin{align*}
\xi(t)(x_1, \dots, x_m,& \underbrace{x_{m+1}, \dots, x_{m+1}}_{m \times }, \ldots
,\underbrace{x_n, \dots, x_n}_{m 
\times }) \approx \\
&\approx \xi(f)(\xi(g)(x_1, \dots, x_m), \xi(g)(x_{m+1}, \dots, x_{m+1}), \dots, \xi(g)(x_n, \dots, x_n))\; .
\end{align*}
Since $\xi(g)$ is a projection, we have $\xi(g)(x, \dots, x) \approx x$.
Finally, from the definition of $t$ it follows that 
$$
t(x_1, \dots, x_m, x_{m+1}, \dots, x_{m+1}, \dots, x_n, \dots, x_n) \approx h(x_1, \dots, x_m, \alpha(x_{m+1}), \dots, \alpha(x_n)),
$$ 
where $\alpha(x) := g(x, \dots, x)$, so
$$
\xi(t)(x_1, \dots, x_m, x_{m+1}, \dots, x_{m+1}, \dots, x_n, \dots, x_n) \approx \xi(h)(x_1, \dots, x_m, x_{m+1}, \dots, x_n).
$$
Combining these facts we obtain 
\begin{align*}
\xi(f)&(\xi(g)(x_1, \dots, x_m), x_{m+1}, \dots, x_n) \\
&\approx \xi(f)(\xi(g)(x_1, \dots, x_m), \xi(g)(x_{m+1}, \dots, x_{m+1}), \dots, \xi(g)(x_n, \dots, x_n)) \\
&\approx \xi(t)(x_1, \dots, x_m, x_{m+1}, \dots, x_{m+1}, \dots, x_n, \dots, x_n) \\
&\approx  \xi(h) (x_1, \dots, x_n),
\end{align*}
which finishes the proof.

\end{proof}

The following example due to Miroslav Ol\v s\'ak shows that a direct analogue of Theorem~\ref{thm:Taylor}, or more precisely the above-mentioned second part of its standard proof, does not hold.

\begin{example} \label{ex:non-taylor}
Let $\clone{C}$ be the clone of all term operations of the free algebra over countably many generators in the signature consisting of unary operations $p,q$ and binary operations $r,s$ modulo the identities
$$
p r(x,y) \approx p s(x,y), \quad
q r(x,y) \approx q s(y,x).
$$
Since this system of identities is not satisfiable by projections, the clone $\clone{C}$ does not have a clone homomorphisms to $\trivclone$. 

Let $t\in \clone{C}$ be an arbitrary $n$-ary function. We claim that $t$  has a coordinate $1\leq k\leq n$ such that no identity of the following form is satisfied: 
$$
\alpha (t ( \beta_1(x_{i_1}), \dots, \beta_n(x_{i_n}))) \approx
\gamma (t ( \delta_1(x_{j_1}), \dots, \delta_n(x_{j_n}))),
$$
where ${i_k} \neq {j_k}$, and some of the unary functions can be missing in the identity.

Striving for a contradiction, suppose there was an $n$-ary $t\in \clone{C}$ satisfying such an identity for every $1\leq k\leq n$. Fix a term $t'(x_1,\ldots,x_n)$ over the variables $x_1, \ldots, x_n$ in the signature $\{p,q,r,s\}$ which evaluates in $\clone{C}$ as $t$.


Assume first that the outermost symbol of $t'$ is not $r$. We inductively define a mapping $\theta$ from the set of all terms over the variables $x_1, \ldots, x_n$ to the set $\{1, \dots, n\}$ as follows:
$$
\theta(z) = \left\{
\begin{array}{ll}
i & \mbox{ if } z = x_i \\
\theta(z_2) & \mbox{ if } z = q r( z_1, z_2) \\
\theta(z_1) & \mbox{ otherwise, and  $z = f(z_1, \dots, z_l)$ for some $f \in \{p,q,r,s\}$\; .}
\end{array}
\right.
$$
In other words, $\theta(z)$ is the index of its leftmost variable after rewriting it using the rule which replaces each subterm of the form $q r(z_1,z_2)$ by $q s(z_2,z_1)$. It is clear that $\theta$ preserves  composition with unary \co{operation} symbols from the inside and, for terms $z$  whose outermost symbol is not $r$, also from the outside. 

It is apparent from the defining identities of $\clone{C}$ that $\theta(z_1)=\theta(z_2)$ whenever $z_1$ and $z_2$ evaluate to the same function in $\clone{C}$. On the other hand, setting $k := \theta(t')$ we obtain 
\begin{align*}
\theta(\alpha (t' ( \beta_1(x_{i_1}), \dots, \beta_n(x_{i_n}))))
&=
\theta(t' (x_{i_1}, \dots, x_{i_n})) = i_k, \text{and}\\
\theta(\gamma (t' ( \delta_1(x_{j_1}), \dots, \delta_n(x_{j_n}))) &= j_k,
\end{align*}
a contradiction.

If the outermost symbol of $t'$ is $r$, then we adjust the definition of $\theta$ by swapping the roles of $r$ and $s$, and argue until a contradiction in the same fashion.
\end{example}

Note that any pseudo-Siggers operation satisfies the modification of Taylor identities (from Theorem~\ref{thm:Taylor}) which is described in the second paragraph of  Example~\ref{ex:non-taylor} (even without the inner unary functions). Hence, the non-satisfaction of identities of that form does, in the case of polymorphism clones of $\omega$-categorical core structures, imply the existence of a uniformly continuous clone homomorphism from  some stabilizer to $\trivclone$, and then also the existence of an h1 clone homomorphism to $\trivclone$ by Corollary~\ref{cor:homos}. 
The non-satisfaction of Taylor identities (without inner and outer unary functions) is not sufficient to this end: the function clone of all (up to dummy variables) injective functions on a countable set has no Taylor term (whose identities are incompatible with injectivity), but satisfies non-trivial h1 identities, e.g., those implicitly described in the statement of Lemma IV.3 in~\cite{BKOPP}.

\subsection{Ternary identities}\label{subsect:ternary}

Perhaps surprisingly in the light of our main theorem, it has been observed by Jakub Opr\v{s}al and independently by Ross Willard (and perhaps might have been known to others as well)  that the existence of clone homomorphisms to $\trivclone$ only depends on the structure of the ternary operations in the clone.

\begin{proposition}
Let $\clone{C}$ be a function clone, and let $\xi$ be a mapping from the ternary operations in $\clone{C}$ to $\trivclone$ which preserves arities and identities (h1 identities). Then $\xi$ has a unique extension to a clone homomorphism (h1 clone homomorphism) from $\clone{C}$ to $\trivclone$.
\end{proposition}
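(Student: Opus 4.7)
The plan is to use \emph{ternary shadows}: for each $n$-ary $f \in \clone{C}$ and each map $\sigma : [n] \to [3]$, the operation $f_\sigma(x_1, x_2, x_3) := f(x_{\sigma(1)}, \ldots, x_{\sigma(n)})$ is a ternary element of $\clone{C}$, on which $\xi$ is already defined. I would first establish \emph{uniqueness}: if $\bar\xi$ extends $\xi$ to a (h1) clone homomorphism and $\bar\xi(f) = p_\ell^n$, then preservation of the identity $f_\sigma = f \circ (p_{\sigma(1)}^n, \ldots, p_{\sigma(n)}^n)$ forces $\xi(f_\sigma) = p_{\sigma(\ell)}^3$ for every $\sigma$. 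Any two distinct indices $\ell, \ell' \in [n]$ are separated by some surjection $\sigma$, so the value of $\ell$, and hence $\bar\xi(f)$, is determined uniquely by $\xi$.

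For \emph{existence}, I would prove that there is a unique $\ell \in [n]$ with $\xi(f_{e_\ell}) = p_1^3$ and then define $\bar\xi(f) := p_\ell^n$. Here $e_i : [n] \to [3]$ sends $i \mapsto 1$ and every other index to $2$, so the shadow $f_{e_i}(x, y, z) = f(y, \ldots, y, x, y, \ldots, y)$ (with $x$ at position $i$) does not depend on $z$, whence $\xi(f_{e_i}) \in \{p_1^3, p_2^3\}$. To see \emph{at most one} such $i$, let $f_{ij}$ denote the ternary shadow of the substitution $i \mapsto 1$, $j \mapsto 2$, others $\mapsto 3$: the identities $f_{ij}(x, y, y) = f_{e_i}(x, y, z)$ and $f_{ij}(y, x, y) = f_{e_j}(x, y, z)$ (both sides of each agree as functions of $(x,y,z)$) would force $\xi(f_{ij})$ to equal both $p_1^3$ and $p_2^3$ whenever $\xi(f_{e_i}) = \xi(f_{e_j}) = p_1^3$. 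To see \emph{at least one} such $i$ --- the main obstacle --- I would argue by contradiction: assuming $\xi(f_{e_i}) = p_2^3$ for every $i \in [n]$, propagate via identities such as $f_S(x, y, z) = f_{[n] \setminus S}(y, x, z)$ (where $f_S$ has $x$ at positions in $S$ and $y$ elsewhere) together with $f_{ij}(x, y, y) = f_{\{i\}}(x, y, z)$, to compute $\xi(f_S)$ for a sufficiently rich family of $S \subseteq [n]$. Then, picking a surjection $\sigma : [n] \to [3]$ whose three fibers all lie in this family, the three identities $f_\sigma(x_1, x_2, x_2) = f_{\sigma^{-1}(1)}(x_1,x_2,x_3)$, $f_\sigma(x_2, x_1, x_2) = f_{\sigma^{-1}(2)}(x_1,x_2,x_3)$ and $f_\sigma(x_1, x_1, x_3) = f_{[n] \setminus \sigma^{-1}(3)}(x_1,x_3,x_2)$ yield three constraints on $\xi(f_\sigma)$ whose intersection is empty, the desired contradiction.

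Finally, I would verify that $\bar\xi$ is a (h1) clone homomorphism. For any composition $h = f(g_1, \ldots, g_n) \in \clone{C}$ with $f$ of arity $n$ and each $g_i$ of arity $m$, every ternary shadow $h_\sigma$ decomposes through the shadows of $f$ and of the $g_i$, and the agreement of $\xi$ on these shadows forces $\bar\xi(h) = \bar\xi(f) \circ (\bar\xi(g_1), \ldots, \bar\xi(g_n))$; the h1 version only requires compatibility with composition with projections, which is already built into the characterization of $\ell$. The hardest part is the existence step, since one must ensure that the inductive propagation of the constraint $\xi(f_{e_i}) = p_2^3$ from singletons to larger subsets produces enough information to cover the fibers of some well-chosen surjection $\sigma$ for every arity $n \geq 3$.
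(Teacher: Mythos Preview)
Your uniqueness argument and the h1 verification are correct and essentially match the paper. Your existence step (``at least one $\ell$'') takes a genuinely different route: the paper never searches for $\ell$ directly, but defines the extension $\xi'(f)$ as the Boolean function $a\mapsto\xi(f_a)(0,1)$ on $\{0,1\}^n$ (using binary shadows $f_a$), and then proves $\xi'(f)\in\trivclone$ via the identity $\trivclone=\Pol(\relstr{M})$ for the 1-in-3 structure $\relstr{M}$: any violation of that ternary relation by $\xi'(f)$ collapses, after identifying equal columns, to a violation by some $\xi'(g)=\xi(g)$ with $g$ ternary, a contradiction. This replaces your propagation by a two-line argument. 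Your route can also be completed --- from $\xi(f_A)=\xi(f_B)=p_2^3$ with $A,B$ disjoint and $A\cup B\subsetneq[n]$ one derives $\xi(f_{A\cup B})=p_2^3$ using the shadow $A\mapsto 1$, $B\mapsto 2$, rest $\mapsto 3$, which then pins down every proper subset and gives the contradiction for any surjection --- but it is heavier, and your sketch leaves this induction implicit.

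There is, however, a genuine gap in your verification that $\bar\xi$ is a \emph{full} clone homomorphism. The decomposition $h_\sigma=f\bigl((g_1)_\sigma,\ldots,(g_n)_\sigma\bigr)$ you invoke is an identity in $\clone{C}$, but it involves the $n$-ary $f$, on which $\xi$ is undefined; it is not an identity among ternary operations and cannot be pushed through $\xi$. The paper closes this differently: it appeals to Proposition~\ref{prop:almost}, so that it suffices to check that $\bar\xi$ is an \emph{almost} clone homomorphism, i.e., preserves composition with unary functions from inside and outside. Those particular compositions \emph{do} reduce to ternary identities, since a unary $\beta$ embeds as $\hat\beta(x,y,z)=\beta(x)$ with $\xi(\hat\beta)=p_1^3$, and an inner unary at a single coordinate of $f$ can be handled via an auxiliary ternary shadow of $f$ isolating that coordinate from the rest.
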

\begin{proof}
We give a self-contained proof, although the claim essentially follows from the results and arguments in~\cite{wonderland}, in particular, Section 7 thereof.

Let $f(x_1,\ldots,x_n) \in \clone{C}$, where $n\geq 1$, and let $a=(a_1,\ldots,a_n)\in\{0,1\}^n$. We define $f_{a}(x,y)\in\clone{C}$ to be the binary function obtained by identifying all variables $x_i$ of $f$ with $x$ whenever $a_i=0$, and all other variables with $y$. We further define an $n$-ary operation  $\xi'(f)$ on $\{0,1\}$ by setting $\xi'(f)(a):=\xi(f_a)(0,1)$ for all $a\in \{0,1\}^n$.

It is easy to verify that $\xi'$ is an extension of $\xi$, since $\xi$ preserves h1 identities on the ternary functions. Moreover, it is clear for the same reason that any extension of $\xi$ to an h1 clone homomorphism into $\trivclone$ must be compatible with the above definition, and hence be equal to $\xi'$, proving uniqueness of the extension.

We claim that $\xi'$ maps into $\trivclone$. To show this, we use the fact that $\trivclone=\Pol(\relstr{M})$, where
\[
\mathbb{M} = (\{0,1\}; \{(0,0,1),(0,1,0),(1,0,0)\}),
\]
as in the introduction. If $\xi'(f)$ was not a projection for some $f(x_1,\ldots,x_n)\in\clone{C}$, then there would exist triples $r_1,\ldots, r_n$ in the relation of $\mathbb{M}$ such that $\xi'(f)(r_1,\ldots,r_n)$, calculated componentwise, is not contained in that relation. Since the relation contains only three triples, we can define a ternary operation $g(x,y,z)$ from $f$ by replacing $x_i$ by $x$ whenever $r_i$ equals $(0,0,1)$, similarly for $y$ and $(0,1,0)$, and for $z$ and $(1,0,0)$. Then $\xi'(g)$ still violates $\mathbb{M}$, a contradiction since $\xi(g)=\xi'(g)$ and since $\xi(g)$ is a projection.

To see that $\xi'$ is an h1 clone homomorphism, consider an identity
$$
f(y_{i_1},\ldots,y_{i_n})\approx s(y_1,\ldots,y_m),
$$
and say that $\xi'(f)$ projects onto the $j$-th coordinate, where $1\leq j\leq n$. Then 
$$\xi'(f)(y_{i_1},\ldots,y_{i_n})\approx y_{i_j}$$ holds in $\trivclone$; we wish to show that $\xi'(s)(y_1,\ldots,y_m)\approx y_{i_j}$. Let $a=(a_1,\ldots,a_m)\in\{0,1\}^m$ be the tuple in which $a_{i_j}=1$, and all other $a_i=0$; since $\xi'(s)\in\trivclone$, it suffices to show that $\xi'(s)(a)=1$. By the definition of $\xi'$ and the first  identity above,  
we have that $\xi'(s)(a)=\xi(g)(0,1)$, where $g(x,y)$ is the binary function obtained from $f(x_1,\ldots,x_n)$ by inserting $y$ for $x_k$ whenever $i_k=i_j$, and $x$ otherwise. Again by the definition of $\xi'$ and the first identity, this value  is equal to $\xi'(f)(b)$, where $b=(b_1,\ldots,b_n)\in\{0,1\}^n$ is the tuple where $b_k=1$ if and only if $i_k=i_j$. By the second identity above, that value equals $1$, as desired. 

When $\xi$ preserves identities, then it is not hard to show that $\xi'$ is a clone homomorphism: one argues that it is an almost clone homomorphism by reducing to ternary functions, and then refers to Proposition~\ref{prop:almost}.
\end{proof}

A \co{standard} application of the compactness theorem of logic yields  the following corollary.

\begin{corollary}\label{prop:ternary2}
Let $\clone{C}$ be a function clone. Then there is no clone homomorphism (h1 clone homomorphism) $\clone{C}\To \trivclone$ if and only if $\clone{C}$ satisfies a non-trivial finite system of identities (h1 identities) of ternary functions.
\end{corollary}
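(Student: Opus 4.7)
The plan is to combine the preceding proposition, which reduces the existence of (h1) clone homomorphisms $\clone{C}\to\trivclone$ to the preservation of (h1) identities among ternary operations, with a routine compactness argument.

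The backward direction is immediate: any (h1) clone homomorphism $\xi\colon\clone{C}\to\trivclone$ maps each ternary operation of $\clone{C}$ to a projection while preserving the given finite system of identities, so the system would be satisfied in $\trivclone$, contradicting non-triviality.

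For the forward direction, suppose there is no (h1) clone homomorphism $\clone{C}\to\trivclone$. By the preceding proposition, there is then no arity-preserving map from the set of ternary operations of $\clone{C}$ to the three ternary projections of $\trivclone$ that preserves all (h1) identities among the ternary operations of $\clone{C}$. I would phrase the existence of such a map as a first-order satisfiability problem. Consider the language $L$ having one ternary function symbol $\bar{f}$ for each ternary $f\in\clone{C}$, and the $L$-theory $T$ consisting of the sentence $\exists x,y\; x\neq y$, for each $\bar{f}$ the sentence
\[
\bigvee_{i=1}^{3}\; \forall x_1,x_2,x_3\; \bar{f}(x_1,x_2,x_3)=x_i,
\]
and every (h1) identity between ternary $L$-terms that holds in $\clone{C}$ under the natural interpretation $\bar{f}^{\clone{C}}:=f$. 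A model of $T$ amounts to an assignment of each ternary $f\in\clone{C}$ to some ternary projection on a set of at least two elements such that all (h1) identities of $\clone{C}$ among ternary operations are preserved. Hence, by the preceding proposition together with our assumption, $T$ is inconsistent.

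By the compactness theorem of first-order logic, some finite $T_0\subseteq T$ is inconsistent. I would take the desired system $\Sigma$ to consist of the finitely many (h1) identities appearing in $T_0$; these are satisfied in $\clone{C}$ by construction, and the inconsistency of $T_0$ (augmented, if needed, by the projection axioms and the two-element axiom for the finitely many function symbols occurring in $\Sigma$, which may always be added to a finite unsatisfiable subset of $T$) means that no assignment of those ternary operations to projections satisfies $\Sigma$. Thus $\Sigma$ is non-trivial, proving the claim. No serious obstacle arises beyond bookkeeping the reduction to a finite subset of $T$; the essential content is fully supplied by the preceding proposition.
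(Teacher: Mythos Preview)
Your proposal is correct and follows precisely the approach the paper has in mind: the paper's proof consists of the single sentence ``A standard application of the compactness theorem of logic yields the following corollary,'' and your argument is exactly such a standard application, combining the preceding proposition with compactness. The only minor remark is that your parenthetical about ``augmenting'' $T_0$ is phrased a bit loosely; the clean way to see that $\Sigma$ is non-trivial is to note that any assignment of the symbols in $\Sigma$ to projections on $\{0,1\}$ would yield a model of all of $T_0$ (interpreting any remaining symbols as arbitrary projections), contradicting its inconsistency --- but this is just bookkeeping, as you say.
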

In particular, our pseudo-Siggers identity, which involves a 6-ary function, always implies a non-trivial system of ternary identities -- but note that there is no bound on the number of identities in the system.

\section{Conclusion}\label{sect:conclusion}

The presented results show the possibility of a purely algebraic theory for function clones that are concerned by the infinite-domain tractability conjecture. They have already found several 
applications: in the classification of CSPs over reducts of unary structures~\cite{unaries},
and indirectly (via results in \cite{BKOPP} inspired by the present work) in the algebraic dichotomy for CSPs in MMSNP~\cite{MMSNP}.  Nevertheless, the theory is presently only in its infancy compared to its finite counterpart. We now briefly discuss three of the research directions that call for further exploration. 

Firstly, the technical core of our main result is the pseudoloop lemma which generalizes a loop lemma for finite undirected graphs. The extension of the latter result to directed graphs stated in Theorem~\ref{thm:loop-lemma} has \co{led} to the development of absorption theory~\cite{Cyclic,AbsorptionSurvey},  which in turn proved to be a powerful tool in the context of  finite domain CSPs and universal algebra. 
While it might be possible to lift some results from the finite to the infinite, e.g., in a manner akin to the one shown in Proposition~\ref{prop:3orbits}, it seems more valuable to find a genuine generalization of  absorption theory itself. Some ideas from this theory and, in particular, a loop lemma was used in a different infinite setting, namely that of idempotent clones~\cite{OlsakTerms}. Is it possible to merge these approaches?

Secondly, as discussed in Section~\ref{sect:homos}, there are several meaningful types of morphisms to the clone of projections and their exact relation in various settings (e.g., polymorphism clones of structures, polymorphism clones of $\omega$-categorical structures, polymorphism clones of reducts of homogeneous structures) requires further investigation. For finite core clones all the items (2) -- (6) in Corollary~\ref{cor:homos} are equivalent, and many alternative characterizations are available: for example by means of identities, the structure of invariant relations, absorption properties, the Tame Congruence Theory types, Bulatov's local algebraic properties, etc. Are there infinite analogues to these characterizations?

Thirdly, clone homomorphisms to other clones rather than only the projection clone $\trivclone$ are significant for CSPs as well as universal algebra. For example, the clones of modules play a similar role to $\trivclone$ in the context of solvability of finite-domain CSPs by consistency methods. Again, are there infinite analogues?

\bibliographystyle{plain}

\bibliography{irrelevant}

\begin{thebibliography}{10}

\bibitem{Barto-survey}
Libor Barto.
\newblock The constraint satisfaction problem and universal algebra.
\newblock {\em The Bulletin of Symbolic Logic}, 21:319--337, 2015.

\bibitem{BKOPP-equations}
Libor Barto, Michael Kompatscher, Miroslav Ol\v{s}\'{a}k, Trung~Van Pham, and
  Michael Pinsker.
\newblock Equations in oligomorphic clones and the constraint satisfaction
  problem for $\omega$-categorical structures.
\newblock {\em Journal of Mathematical Logic}.
\newblock To appear. Preprint arXiv:1612.07551.

\bibitem{BKOPP}
Libor Barto, Michael Kompatscher, Miroslav~Ol\v s\'{a}k, Michael Pinsker, and
  Trung~Van Pham.
\newblock The equivalence of two dichotomy conjectures for infinite domain
  constraint satisfaction problems.
\newblock In {\em Proceedings of the 32nd Annual {ACM/IEEE} Symposium on Logic
  in Computer Science -- LICS'17}, 2017.
\newblock Preprint arXiv:1612.07551.

\bibitem{Cyclic}
Libor Barto and Marcin Kozik.
\newblock Absorbing subalgebras, cyclic terms and the constraint satisfaction
  problem.
\newblock {\em Logical Methods in Computer Science}, 8/1(07):1--26, 2012.

\bibitem{AbsorptionSurvey}
Libor Barto and Marcin Kozik.
\newblock Absorption in universal algebra and {CSP}.
\newblock In {\em The constraint satisfaction problem: complexity and
  approximability}, volume~7 of {\em Dagstuhl Follow-Ups}, pages 45--77.
  Schloss Dagstuhl. Leibniz-Zent. Inform., Wadern, 2017.

\bibitem{BartoKozikNiven}
Libor Barto, Marcin Kozik, and Todd Niven.
\newblock The {CSP} dichotomy holds for digraphs with no sources and no sinks
  (a positive answer to a conjecture of {B}ang-{J}ensen and {H}ell).
\newblock {\em SIAM Journal on Computing}, 38(5), 2009.

\bibitem{BKW-CSP-Survey}
Libor Barto, Andrei Krokhin, and Ross Willard.
\newblock Polymorphisms, and how to use them.
\newblock In {\em The constraint satisfaction problem: complexity and
  approximability}, volume~7 of {\em Dagstuhl Follow-Ups}, pages 1--44. Schloss
  Dagstuhl. Leibniz-Zent. Inform., Wadern, 2017.

\bibitem{wonderland}
Libor Barto, Jakub Opr{\v s}al, and Michael Pinsker.
\newblock The wonderland of reflections.
\newblock {\em Israel Journal of Mathematics}, 223(1):363--398, 2018.

\bibitem{BartoPinsker}
Libor Barto and Michael Pinsker.
\newblock The algebraic dichotomy conjecture for infinite domain constraint
  satisfaction problems.
\newblock In {\em Proceedings of the 31th Annual IEEE Symposium on Logic in
  Computer Science -- LICS'16}, pages 615--622, 2016.

\bibitem{Bergman}
Clifford Bergman.
\newblock {\em Universal algebra}, volume 301 of {\em Pure and Applied
  Mathematics (Boca Raton)}.
\newblock CRC Press, Boca Raton, FL, 2012.
\newblock Fundamentals and selected topics.

\bibitem{Bir-On-the-structure}
Garrett Birkhoff.
\newblock On the structure of abstract algebras.
\newblock {\em Mathematical Proceedings of the Cambridge Philosophical
  Society}, 31(4):433--454, 1935.

\bibitem{Cores-journal}
Manuel Bodirsky.
\newblock Cores of countably categorical structures.
\newblock {\em Logical Methods in Computer Science}, 3(1):1--16, 2007.

\bibitem{Bodirsky-HDR}
Manuel Bodirsky.
\newblock Complexity classification in infinite-domain constraint satisfaction.
\newblock M\'emoire d'habilitation \`a diriger des recherches, Universit\'{e}
  Diderot -- Paris 7. Available at arXiv:1201.0856, 2012.

\bibitem{BEKP}
Manuel Bodirsky, David Evans, Michael Kompatscher, and Michael Pinsker.
\newblock A counterexample to the reconstruction of $\omega$-categorical
  structures from their endomorphism monoids.
\newblock {\em Israel Journal of Mathematics}, 224(1):57--82, 2018.

\bibitem{BodirskyGrohe}
Manuel Bodirsky and Martin Grohe.
\newblock Non-dichotomies in constraint satisfaction complexity.
\newblock In Luca Aceto, Ivan Damgard, Leslie~Ann Goldberg, Magn\'us~M.
  Halld\'orsson, Anna Ing\'olfsd\'ottir, and Igor Walukiewicz, editors, {\em
  Proceedings of the International Colloquium on Automata, Languages and
  Programming (ICALP)}, Lecture Notes in Computer Science, pages 184 --196.
  Springer Verlag, July 2008.

\bibitem{Phylo-Complexity}
Manuel Bodirsky, Peter Jonsson, and Trung~Van Pham.
\newblock {The Complexity of Phylogeny Constraint Satisfaction Problems}.
\newblock {\em ACM Transactions on Computational Logic (TOCL)}, 18(3), 2017.
\newblock An extended abstract appeared in the conference STACS 2016.

\bibitem{tcsps-journal}
Manuel Bodirsky and Jan K\'ara.
\newblock The complexity of temporal constraint satisfaction problems.
\newblock {\em Journal of the ACM}, 57(2):1--41, 2009.
\newblock An extended abstract appeared in the Proceedings of the Symposium on
  Theory of Computing (STOC'08).

\bibitem{MMSNP}
Manuel Bodirsky, Florent~R. Madelaine, and Antoine Mottet.
\newblock A universal-algebraic proof of the complexity dichotomy for monotone
  monadic {SNP}.
\newblock In {\em Proceedings of the 33rd Annual {ACM/IEEE} Symposium on Logic
  in Computer Science, {LICS} 2018, Oxford, UK, July 09-12, 2018}, pages
  105--114, 2018.

\bibitem{unaries}
Manuel Bodirsky and Antoine Mottet.
\newblock A dichotomy for first-order reducts of unary structures.
\newblock {\em Logical Methods in Computer Science}, 14(2), 2018.

\bibitem{BodirskyNesetrilJLC}
Manuel Bodirsky and Jaroslav Ne\v{s}et\v{r}il.
\newblock Constraint satisfaction with countable homogeneous templates.
\newblock {\em Journal of Logic and Computation}, 16(3):359--373, 2006.

\bibitem{BP-reductsRamsey}
Manuel Bodirsky and Michael Pinsker.
\newblock Reducts of {R}amsey structures.
\newblock {\em AMS Contemporary Mathematics, vol. 558 (Model Theoretic Methods
  in Finite Combinatorics)}, pages 489--519, 2011.

\bibitem{BodPin-Schaefer-both}
Manuel Bodirsky and Michael Pinsker.
\newblock Schaefer's theorem for graphs.
\newblock {\em Journal of the ACM}, 62(3):52 pages (article number 19), 2015.
\newblock A conference version appeared in the Proceedings of STOC 2011, pages
  655--664.

\bibitem{Topo-Birk}
Manuel Bodirsky and Michael Pinsker.
\newblock Topological {B}irkhoff.
\newblock {\em Transactions of the American Mathematical Society},
  367:2527--2549, 2015.

\bibitem{BPP-projective-homomorphisms}
Manuel Bodirsky, Michael Pinsker, and Andr\'{a}s Pongr\'acz.
\newblock Projective clone homomorphisms.
\newblock {\em Journal of Symbolic Logic}.
\newblock To appear. Preprint arXiv:1409.4601.

\bibitem{Reconstruction}
Manuel Bodirsky, Michael Pinsker, and Andr\'{a}s Pongr\'acz.
\newblock Reconstructing the topology of clones.
\newblock {\em Transactions of the American Mathematical Society},
  369:3707--3740, 2017.

\bibitem{BoKaKoRo}
V.~G. Bodnar\v{c}uk, L.~A. Kalu\v{z}nin, V.~N. Kotov, and B.~A. Romov.
\newblock Galois theory for {Post} algebras, part {I} and {II}.
\newblock {\em Cybernetics}, 5:243--539, 1969.

\bibitem{B05}
Andrei~A. Bulatov.
\newblock {$H$}-coloring dichotomy revisited.
\newblock {\em Theoretical Computer Science}, 349(1):31--39, 2005.

\bibitem{Bulatov-Dichotomy}
Andrei~A. Bulatov.
\newblock A dichotomy theorem for nonuniform {CSP}s.
\newblock In {\em 58th {A}nnual {IEEE} {S}ymposium on {F}oundations of
  {C}omputer {S}cience---{FOCS} 2017}, pages 319--330. IEEE Computer Soc., Los
  Alamitos, CA, 2017.

\bibitem{JBK}
Andrei~A. Bulatov, Andrei~A. Krokhin, and Peter~G. Jeavons.
\newblock Classifying the complexity of constraints using finite algebras.
\newblock {\em SIAM Journal on Computing}, 34:720--742, 2005.

\bibitem{FederVardi}
Tom\'as Feder and Moshe~Y. Vardi.
\newblock The computational structure of monotone monadic {SNP} and constraint
  satisfaction: {a} study through {D}atalog and group theory.
\newblock {\em {SIAM} Journal on Computing}, 28:57--104, 1999.

\bibitem{uniformbirkhoff}
Mai Gehrke and Michael Pinsker.
\newblock Uniform {B}irkhoff.
\newblock {\em Journal of Pure and Applied Algebra}, 222(5):1242--1250, 2018.

\bibitem{Geiger}
David Geiger.
\newblock Closed systems of functions and predicates.
\newblock {\em Pacific Journal of Mathematics}, 27:95--100, 1968.

\bibitem{HellNesetril}
Pavol Hell and Jaroslav Ne\v{s}et\v{r}il.
\newblock On the complexity of {H}-coloring.
\newblock {\em Journal of Combinatorial Theory, Series B}, 48:92--110, 1990.

\bibitem{HM88}
David Hobby and Ralph McKenzie.
\newblock {\em The structure of finite algebras}, volume~76 of {\em
  Contemporary Mathematics}.
\newblock American Mathematical Society, Providence, RI, 1988.

\bibitem{Hodges}
Wilfrid Hodges.
\newblock {\em A shorter model theory}.
\newblock Cambridge University Press, Cambridge, 1997.

\bibitem{Jeavons}
Peter~G. Jeavons.
\newblock On the algebraic structure of combinatorial problems.
\newblock {\em Theoretical Computer Science}, 200:185--204, 1998.

\bibitem{KMM14}
Keith Kearnes, Petar Markovi\'c, and Ralph McKenzie.
\newblock Optimal strong {M}alcev conditions for omitting type 1 in locally
  finite varieties.
\newblock {\em Algebra universalis}, 72(1):91--100, 2014.

\bibitem{MM08}
Mikl{\'o}s Mar{\'o}ti and Ralph McKenzie.
\newblock Existence theorems for weakly symmetric operations.
\newblock {\em Algebra Universalis}, 59(3-4):463--489, 2008.

\bibitem{OlsakTerms}
Miroslav Ol{\v s}\'ak.
\newblock The weakest nontrivial idempotent equations.
\newblock {\em Bulletin of the London Mathematical Society}, 49(6):1028--1047,
  2017.

\bibitem{Pin15}
Michael Pinsker.
\newblock Algebraic and model theoretic methods in constraint satisfaction.
\newblock arXiv:1406.7870, 2015.

\bibitem{Sig10}
Mark~H. Siggers.
\newblock A strong {M}al'cev condition for locally finite varieties omitting
  the unary type.
\newblock {\em Algebra universalis}, 64(1-2):15--20, 2010.

\bibitem{T77}
Walter Taylor.
\newblock Varieties obeying homotopy laws.
\newblock {\em Canadian Journal of Mathematics}, 29(3):498--527, 1977.

\bibitem{Zhuk-Dichotomy}
Dmitriy Zhuk.
\newblock A proof of {CSP} dichotomy conjecture.
\newblock In {\em 58th {A}nnual {IEEE} {S}ymposium on {F}oundations of
  {C}omputer {S}cience---{FOCS} 2017}, pages 331--342. IEEE Computer Soc., Los
  Alamitos, CA, 2017.

\end{thebibliography}
\end{document}